\newtheorem{theorem}{Theorem}[section]
\newtheorem{lemma}[theorem]{Lemma}
\newtheorem{proposition}[theorem]{Proposition}
\newtheorem{corollary}[theorem]{Corollary}
\newtheorem{remark}[theorem]{Remark}
\newtheorem{assumption}[theorem]{Assumption}
\numberwithin{equation}{section}
\DeclareMathOperator{\slim}{s-lim}
\author[M. Falconi]{Marco Falconi} \address[M. Falconi]{Institut f{\"u}r
  Analysis, Dynamik und Modellierung\\ Universit{\"a}t Stuttgart;
  Pfaffenwaldring 57 D-70569 Stuttgart, Deutschland}
\email{marco.falconi@mathematik.uni-stuttgart.de}
\urladdr{http://www.mathematik.uni-stuttgart.de/~falconmo/}
\author[J. Faupin]{J{\'e}r{\'e}my Faupin}
\address[J. Faupin]{Institut Elie Cartan de Lorraine \\
Universit{\'e} de Lorraine, 
57045 Metz Cedex 1, France}
\email{jeremy.faupin@univ-lorraine.fr}
\author[J. Fr\"ohlich]{J\"urg Fr\"ohlich}
\address[J. Fr{\"o}hlich]{Institut f{\"u}r Theoretische Physik, ETH H{\"o}nggerberg, CH-8093 Z{\"u}rich, Switzerland}
\email{juerg@phys.ethz.ch}
\author[B. Schubnel]{Baptiste Schubnel}
\address[B. Schubnel]{Institut Elie Cartan de Lorraine \\
Universit{\'e} de Lorraine, 
57045 Metz Cedex 1, France \\
\newline and 
SBB Personenverkehr,  Wylerstrasse 125, 3014 Bern, Switzerland}
\email{baptiste.schubnel@yahoo.fr}
\begin{document}
\bibliographystyle{abbrv} \title[Scattering for
Lindbladians]{Scattering theory for Lindblad master equations}

\begin{abstract}
We study scattering theory for a quantum-mechanical system consisting of a particle scattered off a dynamical target that occupies a compact region in position space. After taking a trace over the degrees of freedom of the target, the dynamics of the particle is generated by a Lindbladian acting on the space of trace-class operators. We study scattering theory for a general class of Lindbladians with bounded interaction terms. First, we consider models where a particle approaching the target is always re-emitted by the target. Then we study models where the particle may be captured by the target. An important ingredient of our analysis is a scattering theory for dissipative operators on Hilbert space.
\end{abstract}

\maketitle

\section{Introduction and statement of the main results}

We study the quantum-mechanical scattering theory for particles interacting with a dynamical target. The target may be a quantum field, e.g., a phonon field of a crystal lattice, a quantum gas, or a solid, such as a ferro-magnet, \dots confined to a compact region of physical space $\mathbb{R}^{3}$. Our aim in this paper is to contribute to a mathematically rigorous description of such scattering processes and to provide a mathematical analysis of particle capture by the target. Rather than studying all the degrees of freedom of the total system composed of particles and target, we will take a trace over the degrees of freedom of the target and study the reduced (effective) dynamics of the particles. It is known that, in the kinetic limit (time, $t$, of order 
$\lambda^{-2}$, with $\lambda \rightarrow 0$, where $\lambda$ is the strength of interactions between the particles and the target), the reduced dynamics of the particles is \textit{not} unitary, but is given by a semi-group of completely positive operators generated by a Lindblad operator. In general, the reduced time evolution maps pure states to mixed states corresponding to density matrices. The trace of a density matrix tends to decrease under the reduced time evolution; but, in the absence of particle capture by the target, it is preserved. 

The main purpose of this paper is to study the dynamics generated by general Lindblad operators and, in particular, to develop the scattering theory for Lindblad operators. We will also study models of some concrete physical systems.

In the remainder of this section, we recall the definition of Lindblad operators and quantum dynamical semigroups (see \cite{AlickiLendi} for a detailed introduction to the subject), we discuss general features of the scattering theory for Lindblad master equations and we state our main results.

\subsection{Lindblad operators and quantum dynamical semigroups}\label{S2}

To avoid inessential technicalities, we cast our analysis in the language of
operators on Hilbert-space; but our discussion can easily be generalized using the language of operator algebras.

Thus, let $\mathcal{H}$ be the complex separable Hilbert space of state vectors of an open quantum-mechanical system $S$. We will use the Schr\"{o}dinger picture to describe the time evolution of $S$, i.e., the time evolution of normal states of $S$ will be considered. But, as usual, it is possible to reformulate most of the results presented below in the
  Heisenberg picture.
By $\mathcal{J}_1(\mathcal{H})$ and
$\mathcal{J}_{1}^{\mathrm{sa}}(\mathcal{H})$ we denote the complex Banach space of trace-class operators on $\mathcal{H}$ and the real Banach space of \textit{self-adjoint} trace-class operators on
$\mathcal{H}$, respectively. Density matrices, i.e., positive trace-class operators of trace $1$, belong to the cone $\mathcal{J}_{1}^{+}(\mathcal{H}) \subset \mathcal{J}_{1}^{\mathrm{sa}}(\mathcal{H})$. The trace norm in $\mathcal{J}_1( \mathcal{H} )$ is denoted by $\| \cdot \|_1$.

In the kinetic limit (i.e., the Markovian approximation), the time evolution of states of an open quantum system is given by a strongly continuous one-parameter semigroup of trace-preserving and positivity-preserving contractions, $\{T(t)\}_{t \geq 0}$, on $\mathcal{J}_{1}^{\mathrm{sa}}(\mathcal{H})$. 
We remind the reader of the definition and the properties of a strongly continuous semigroup $\{T(t)\}_{t \geq 0}$ on a Banach
space $\mathcal{J}$, (see, e.g., \cite{Davies,Engel}): \vspace{2mm}
\begin{enumerate}
\item\label{item:1} $T(t+s)=T(t) T(s)=T(s) T(t), \quad T(0)= \mathds{1}, \qquad \forall \text{  }t,s \geq 0,  \hfill (semigroup \text{   }property)$ \vspace{3mm}
\item\label{item:2} $t   \mapsto T(t) \rho \text{ is continuous, for all } \rho \in \mathcal{J}.   \hfill (strong \text{ } continuity)$  \vspace{2mm} 
\end{enumerate}
If, in addition to (1) and (2), $\{T(t)\}_{t \geq 0}$ also satisfies \vspace{2mm}
\begin{enumerate}
\setcounter{enumi}{2}
\item\label{item:3}
  $\text{ } \|T(t) \rho \| \leq \|\rho \|, \text{ for all }\rho \in
  \mathcal{J},$\hfill\emph{(contractivity)} \vspace{2mm}
\end{enumerate}
then it is called a strongly continuous contraction semigroup.  To qualify as a dynamical map on $\mathcal{J}_{1}^{\mathrm{sa}}(\mathcal{H})$, $\{T(t) \}_{t \geq 0}$ must also preserve positivity and the trace of $\rho$, i.e.,  it must map density matrices to density matrices: \vspace{2mm}
\begin{enumerate}
  \setcounter{enumi}{3}
\item\label{item:4} $\text{ } T(t) \rho \geq 0$, for all $t \geq 0$ and all $\rho \geq 0$, \vspace{3mm} 
\item\label{item:5} $\text{ } \mathrm{Tr}(T(t) \rho)= \mathrm{Tr}(\rho)$, for all $\rho \in \mathcal{J}_{1}^{\mathrm{sa}}(\mathcal{H})$.
\end{enumerate}
\vspace{2mm}
\noindent 
In this paper, the generator, $L$, of a strongly continuous semigroup $\{T(t) \}_{t \geq 0}$ on $\mathcal{J}_1(\mathcal{H})$ is defined by
\begin{equation*}
L \rho := \lim_{t\to 0} (-it)^{-1} ( T_t \rho - \rho ) ,
\end{equation*}
the domain of $L$ being the set of trace-class operators $\rho$ such that the limit $t \rightarrow 0$ exists. This is not the usual convention but is natural in our context. We then write $T(t) \equiv e^{ - i t L }$, for all 
$t \ge 0$.

In \cite{kossa} (see also \cite{IngardenKossakowski}) it is shown that necessary and sufficient conditions for a linear operator $L$ on $\mathcal{J}_{1}^{\mathrm{sa}}(\mathcal{H})$ to be the generator of a strongly continuous one-parameter semigroup of trace-preserving and positivity-preserving contractions are that: (i) 
$\mathcal{D}(L)$ is dense in $\mathcal{J}_{1}^{\mathrm{sa}}(\mathcal{H})$, (ii) $\mathrm{Ran} ( \mathrm{Id} - i L ) = \mathcal{J}_{1}^{\mathrm{sa}}(\mathcal{H})$, (iii) $-i\mathrm{Tr}( \mathrm{sgn}( \rho ) L \rho ) \le 0$, for all $\rho \in \mathcal{J}_{1}^{\mathrm{sa}}(\mathcal{H})$, and (iv) 
$\mathrm{Tr}( L \rho ) = 0$, for all $\rho \in \mathcal{J}_{1}^{\mathrm{sa}}(\mathcal{H})$. 

In \cite{MR0413878}, \textit{norm-continuous} semigroups of \textit{completely positive} maps on the algebra (of ``observables'') $\mathcal{B}(\mathcal{H})$ (Heisenberg picture) were studied. We recall that a map $\Lambda$ on $\mathcal{B}( \mathcal{H} )$ is called completely positive iff, for any $n \in \mathbb{N}$, the map $\Lambda \otimes \mathrm{Id}$ on $\mathcal{B}( \mathcal{H} \otimes \mathbb{C}^n)$ is positive. The explicit form of the generators of norm-continuous semigroups of completely positive maps on 
$\mathcal{B}( \mathcal{H} )$ has been found in \cite{MR0413878}. They are called Lindblad generators, or Lindbladians. Translated to the Schr{\"o}dinger picture, which we use in this paper, the results in \cite{MR0413878} imply that Lindblad generators on $\mathcal{J}_{1}^{\mathrm{sa}}(\mathcal{H})$ have the form
\begin{equation} \label{Lili0}
  \mathcal{L} = \mathrm{ad}(H_0)-\frac{i}{2}\sum_{j\in J}
  \{C_j^{*}C_j,\,\cdot \,\}+i\sum_{j\in J}^{}C_j\,\cdot \,C_j^{*},
\end{equation}
where $H_0$ is (bounded and) self-adjoint,
\begin{equation*}
\mathrm{ad}(H_0) := [ H_0 , \cdot ] ,
\end{equation*}
and the operators $C_j$ and $\sum_{j \in J } C_j^* C_j$ are bounded. The operator $\mathcal{L}$ is called a \emph{Lindblad operator} even if some of the operators $H_0$
and/or $C_j$ are unbounded; (we recall that $\mathcal{L}$ generates a norm-continuous semigroup if and only if $\mathcal{L}$ is bounded; see e.g. \cite{Davies}). Strongly continuous one-parameter semigroups of trace-preserving and completely positive  contractions on $\mathcal{J}_1^{\mathrm{sa}}( \mathcal{H} )$ are sometimes called \emph{quantum dynamical semigroups}.

A proof of the following lemma can be found, for instance, in \cite{Daviesbook}. For the convenience of the reader, a proof is reported in Appendix \ref{app:21}.

\begin{lemma} \label{21} Let $H_0$ be a self-adjoint operator on
  $\mathcal{H}$, and let $C_j \in \mathcal{B}(\mathcal{H})$ for all
  $j \in J$ be such that $\sum_{ j \in J } C^*_j C_j \in \mathcal{B}( \mathcal{H} )$. Then the operator $\mathcal{L}$ in Eq. \eqref{Lili0}, with
  domain given by
  \begin{align*}
  \mathcal{D}(\mathcal{L})=& \mathcal{D}(\mathrm{ad}(H_0)) = \big \{ \rho \in \mathcal{J}_{1}(\mathcal{H}) , \rho ( \mathcal{D}( H_0 ) ) \subset \mathcal{D}( H_0 )  \text{ and } \\
  &H_0 \rho - \rho H_0 \text{ defined on } \mathcal{D}( H_0 ) \text{ extends to an element of } \mathcal{J}_{1}(\mathcal{H}) \big \},
  \end{align*}
   is closed and generates a strongly continuous one-parameter semigroup $\{ e^{ - i t \mathcal{L} } \}_{ t \ge 0 }$ 
  on $\mathcal{J}_{1}(\mathcal{H})$ which satisfies properties
  \eqref{item:1}-\eqref{item:2} and \eqref{item:4}-\eqref{item:5}. Moreover for all $t \ge 0$, $e^{ - i t \mathcal{L} }$ is completely positive, and the
  restriction of $\{ e^{ - i t \mathcal{L} } \}_{ t \ge 0 }$ to the Banach space
  $\mathcal{J}_{1}^{\mathrm{sa}}(\mathcal{H})$ is a semigroup of
  contractions, i.e., satisfies \eqref{item:3}.
\end{lemma}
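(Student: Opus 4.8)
The plan is to split $-i\mathcal{L}$ into the generator of an explicit, manifestly completely positive ``sandwich'' semigroup plus a bounded, completely positive perturbation, and then to check the five properties one at a time. Set $B:=\sum_{j\in J}C_j^{*}C_j\in\mathcal{B}(\mathcal{H})$, so $B\ge 0$, put $G:=-iH_0-\frac{1}{2}B$ and $\Phi(\rho):=i\sum_{j\in J}C_j\,\rho\,C_j^{*}$. Since $-iH_0$ is skew-adjoint and $-\frac{1}{2}B$ is bounded and dissipative, $G$ generates a strongly continuous contraction semigroup $\{e^{tG}\}_{t\ge 0}$ on $\mathcal{H}$ with $\mathcal{D}(G)=\mathcal{D}(G^{*})=\mathcal{D}(H_0)$, and a direct computation gives $-i\big(\mathrm{ad}(H_0)-\frac{i}{2}\{B,\,\cdot\,\}\big)\rho=G\rho+\rho G^{*}$ on $\mathcal{D}(\mathrm{ad}(H_0))$. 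First I would show that $S_0(t)\rho:=e^{tG}\rho\,e^{tG^{*}}$ defines a strongly continuous, completely positive semigroup of contractions on $\mathcal{J}_1(\mathcal{H})$: the semigroup law and the contraction bound follow from $\|e^{tG}\|\le 1$ and submultiplicativity of $\|\cdot\|_1$, complete positivity holds because each $S_0(t)$ is a conjugation by a single operator, and strong continuity is obtained by first checking it on finite-rank operators (where it reduces to the strong continuity of $e^{tG}$ on $\mathcal{H}$) and then using density of finite-rank operators together with uniform boundedness. The delicate point is to verify that the generator of $\{S_0(t)\}_{t\ge 0}$ is exactly $-i\mathcal{L}_0$, with $\mathcal{L}_0:=\mathrm{ad}(H_0)-\frac{i}{2}\{B,\,\cdot\,\}$, and with domain equal to the set $\mathcal{D}(\mathrm{ad}(H_0))$ written in the statement; since $\frac{i}{2}\{B,\,\cdot\,\}$ is bounded, this is equivalent to identifying the generator of the group $\rho\mapsto e^{-itH_0}\rho\,e^{itH_0}$ with $\mathrm{ad}(H_0)$ on that precise domain, a standard but somewhat delicate fact (Duhamel's formula reduces $e^{tG}$ to $e^{-itH_0}$ up to a bounded correction). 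This domain identification is the step I expect to be the main obstacle; the rest is essentially mechanical.

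Next, $\Phi$ is bounded on $\mathcal{J}_1(\mathcal{H})$: for $\rho\ge 0$ one has $\|\Phi(\rho)\|_1=\mathrm{Tr}\big(\sum_{j}C_j\rho C_j^{*}\big)=\mathrm{Tr}(\rho B)\le\|B\|\,\|\rho\|_1$, and the general bound follows by splitting $\rho$ into its real and imaginary parts and then into positive and negative parts. Hence $\mathcal{L}=\mathcal{L}_0+\Phi$, with domain $\mathcal{D}(\mathcal{L})=\mathcal{D}(\mathcal{L}_0)=\mathcal{D}(\mathrm{ad}(H_0))$, generates a strongly continuous semigroup $\{e^{-it\mathcal{L}}\}_{t\ge 0}$ on $\mathcal{J}_1(\mathcal{H})$ by the standard bounded perturbation theorem for semigroup generators (see, e.g., \cite{Davies,Engel}); in particular $\mathcal{L}$ is closed, since generators are closed, and properties \eqref{item:1}--\eqref{item:2} hold. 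The semigroup is moreover given by the Dyson series
\begin{equation*}
e^{-it\mathcal{L}}\rho=\sum_{n\ge 0}\ \int_{0\le s_1\le\cdots\le s_n\le t}S_0(t-s_n)\,(-i\Phi)\,S_0(s_n-s_{n-1})\,(-i\Phi)\cdots(-i\Phi)\,S_0(s_1)\rho\ ds_1\cdots ds_n ,
\end{equation*}
which converges in operator norm on $\mathcal{B}(\mathcal{J}_1(\mathcal{H}))$ because $\|S_0(t)\|\le 1$ and $\Phi$ is bounded. Since $S_0(t)$ and $-i\Phi(\cdot)=\sum_{j}C_j\,\cdot\,C_j^{*}$ are completely positive, every term of this series is completely positive, and therefore so is $e^{-it\mathcal{L}}$ for each $t\ge 0$; taking $n=1$ in the definition of complete positivity gives positivity preservation, i.e. property \eqref{item:4}.

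For the trace invariance \eqref{item:5}, I would first check that $\mathrm{Tr}(\mathcal{L}\sigma)=0$ for all $\sigma\in\mathcal{D}(\mathcal{L})$: the commutator term has vanishing trace (using $\sigma(\mathcal{D}(H_0))\subset\mathcal{D}(H_0)$ and $H_0\sigma-\sigma H_0\in\mathcal{J}_1(\mathcal{H})$, together with an approximation by the bounded spectral cut-offs of $H_0$), whereas $\mathrm{Tr}\big(-\frac{i}{2}\{B,\sigma\}\big)=-i\,\mathrm{Tr}(B\sigma)$ and $\mathrm{Tr}\big(i\sum_{j}C_j\sigma C_j^{*}\big)=i\,\mathrm{Tr}(\sigma B)$ cancel by cyclicity of the trace. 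Then, for $\rho\in\mathcal{D}(\mathcal{L})$, the function $t\mapsto\mathrm{Tr}(e^{-it\mathcal{L}}\rho)$ is differentiable with derivative $-i\,\mathrm{Tr}\big(\mathcal{L}\,e^{-it\mathcal{L}}\rho\big)=0$, because $e^{-it\mathcal{L}}\rho\in\mathcal{D}(\mathcal{L})$; hence $\mathrm{Tr}(e^{-it\mathcal{L}}\rho)=\mathrm{Tr}(\rho)$, and this extends to all $\rho\in\mathcal{J}_1(\mathcal{H})$ by density of $\mathcal{D}(\mathcal{L})$ and continuity of the trace.

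Finally, for the contraction property \eqref{item:3} on $\mathcal{J}_1^{\mathrm{sa}}(\mathcal{H})$: inspection of \eqref{Lili0} shows $\mathcal{L}(\rho^{*})=-(\mathcal{L}\rho)^{*}$, so that the conjugation $\rho\mapsto\rho^{*}$ commutes with the semigroup and, in particular, $e^{-it\mathcal{L}}$ maps $\mathcal{J}_1^{\mathrm{sa}}(\mathcal{H})$ into itself. Given $\rho=\rho^{*}\in\mathcal{J}_1(\mathcal{H})$, write its Jordan decomposition $\rho=\rho_+-\rho_-$ with $\rho_\pm\ge 0$ and $\|\rho\|_1=\mathrm{Tr}(\rho_+)+\mathrm{Tr}(\rho_-)$; then, combining positivity preservation with trace invariance,
\begin{equation*}
\|e^{-it\mathcal{L}}\rho\|_1\le\|e^{-it\mathcal{L}}\rho_+\|_1+\|e^{-it\mathcal{L}}\rho_-\|_1=\mathrm{Tr}(e^{-it\mathcal{L}}\rho_+)+\mathrm{Tr}(e^{-it\mathcal{L}}\rho_-)=\mathrm{Tr}(\rho_+)+\mathrm{Tr}(\rho_-)=\|\rho\|_1 ,
\end{equation*}
which is the desired contraction bound on $\mathcal{J}_1^{\mathrm{sa}}(\mathcal{H})$.
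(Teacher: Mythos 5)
Your proposal is correct and follows essentially the same route as the paper's proof in Appendix \ref{app:21}: introduce the dissipative operator $H_0-\tfrac{i}{2}\sum_j C_j^*C_j$, show that the associated sandwich semigroup $\rho\mapsto e^{-itH}\rho\,e^{itH^*}$ is a strongly continuous, completely positive contraction semigroup with generator $-i\,\mathrm{ad}(H)$ on $\mathcal{D}(\mathrm{ad}(H_0))$, add the bounded completely positive perturbation $\sum_j C_j\,\cdot\,C_j^*$, obtain complete positivity from the Dyson--Phillips series, trace preservation from $\mathrm{Tr}(\mathcal{L}\sigma)=0$, and contractivity on $\mathcal{J}_1^{\mathrm{sa}}(\mathcal{H})$ from the Jordan decomposition. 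The only cosmetic differences are that you obtain the contraction semigroup for $G=-iH$ via a bounded dissipative perturbation of the unitary group rather than by a direct Lumer--Phillips argument, and that you explicitly flag (without fully carrying out) the identification of the generator's domain, a point the paper likewise delegates to the cited reference.
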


\begin{remark}\label{rk:intro1}
  Since $\|e^{-it \mathcal{L}} \rho \|_1 \leq \|\rho\|_1$ for all
  $\rho \in \mathcal{J}_{1}^{\mathrm{sa}}(\mathcal{H})$, we deduce
  that $\|e^{-it \mathcal{L}} \rho \|_1 \leq 2 \|\rho\|_1$, for all
  $\rho \in \mathcal{J}_{1}(\mathcal{H})$, by using the decomposition
  $\rho= (\rho+ \rho^*)/2 -i (i(\rho-\rho^*))/2$.
\end{remark}

Under some further assumptions, it is possible to treat Lindblad generators with
operators $C_j$'s that are unbounded \cite{davies77,davies79}. However, to avoid inessential technicalities, we will restrict our attention to examples of Lindbladians for which \textit{all}
the operators $C_j$'s are bounded.

\subsection{Wave operators and asymptotic completeness}
\label{sec:wave-oper-asympt}

Next, we discuss some basic concepts in the scattering theory of general semigroups
of operators acting on the Banach space $\mathcal{J}_{1}(\mathcal{H})$. These
concepts can be used in the study of asymptotic behavior of both
Lindblad evolutions and Hilbert-space semigroups. In this section, we do not consider the possibility of ``particle capture'' by a target. But this will be done in Section \ref{subsec:capture}, below.

We suppose that we are given a strongly continuous, uniformly bounded one-parameter
\emph{semigroup} $\{e^{-i t \mathcal{L}}\}_{t\geq 0}$ on $\mathcal{J}_{1}(\mathcal{H})$ and a strongly
continuous \emph{group} $\{e^{-i t \mathcal{L}_0}\}_{t\in \mathbb{R}}$ on 
$\mathcal{J}_{1}(\mathcal{H})$ given by conjugation with unitary operators.
The group $\{e^{-i t \mathcal{L}_0}\}_{t\in \mathbb{R}}$ describes the free dynamics of a particle, while 
$\{e^{-i t \mathcal{L}}\}_{t\geq 0}$ describes the dynamics of a particle interacting with a dynamical target in the Markovian approximation. To simplify matters, we assume that $\mathcal{L}_0$ does not have any eigenvalue.

We are interested in studying
asymptotics of the evolution of the particle state, as $t \to +\infty$. As usual, the guiding idea is that, for large times, one can compare the evolution of a given state
$\rho$ in the presence of interactions with a target with the free evolution of another state, $\rho_0$,
the \textit{scattering state}. As in the standard Hilbert space theory, we cannot compare the two dynamics if we choose an eigenvector of  $\mathcal{L}$ as our initial condition.
It is convenient to assume that the Banach space  $\mathcal{J}_1(\mathcal{H})$ can be decomposed as follows:
\begin{align}
  \label{eq:dp}
  \mathcal{J}_1(\mathcal{H})&=\mathcal{D}\oplus \mathcal{D}_{ \mathrm{pp} } ,
\end{align}
where $\mathcal{D}_{ \mathrm{pp} }$ is the closure of the vector space spanned by all the eigenvectors 
of $\mathcal{L}$ in $\mathcal{J}_1(\mathcal{H})$, and $\mathcal{D}$ is a closed subspace 
complementary to $\mathcal{D}_{ \mathrm{pp} }$. 

Dealing with semigroups $\lbrace e^{-it \mathcal{L}} \rbrace_{t\geq 0}$, the fact that time $t$ has to be taken to be positive makes the analysis of scattering somewhat more subtle. It leads us to define the following two wave operators:
\begin{align}
  \Omega ^+(\mathcal{L},\mathcal{L}_0)&:=\underset{t\to +\infty }{\slim} \:  e^{-it \mathcal{L}}e^{it \mathcal{L}_0}  , \label{eq:defOmega_1} \\
  \Omega ^-(\mathcal{L}_0,\mathcal{L})&:= \underset{t\to +\infty }{\slim} \: e^{it \mathcal{L}_0}e^{-it\mathcal{L}}\Bigl\lvert_{\mathcal{D}}  \; . \label{eq:defOmega_2}
\end{align}
Proving the existence of $ \Omega^+(\mathcal{L},\mathcal{L}_0)$ and
$ \Omega^-(\mathcal{L}_0,\mathcal{L})$ for concrete examples of
Lindblad evolutions is the main purpose of this paper achieved in subsequent sections. 
In the rest of this subsection we assume that \eqref{eq:dp} is valid and that the wave operators
$\Omega^-(\mathcal{L}_0,\mathcal{L})|_{\mathcal{D}}$ and
$\Omega^+(\mathcal{L},\mathcal{L}_{0})|_{\mathcal{J}_1(\mathcal{H})}$ exist. 
For the concrete examples discussed in the following, only the special case
$\mathcal{D} = \mathcal{J}_1(\mathcal{H})$ is relevant.

Let us denote by $\rho\in \mathcal{D}$ an
initial condition (an ``interacting'' vector) for the full time evolution 
$e^{-it\mathcal{L}}$, and by $\rho^+\in \mathcal{J}_1(\mathcal{H})$ an initial condition (``scattering vector'')
for the free evolution $e^{-it\mathcal{L}_{0}}$. 
One of the main goals of scattering theory is to
prove the following convergence:
 For an arbitrary interacting vector $\rho \in \mathcal{D}$, there exists a scattering vector $\rho^{+} \in \mathcal{J}_1(\mathcal{H})$ such that
\begin{equation}
\label{eqq}
\lim_{t\to +\infty}\bigl\lVert e^{-it\mathcal{L}} \rho - e^{-it\mathcal{L}_0}\rho^{+}\bigr\rVert_{1}=0 .
\end{equation}
If \eqref{eqq} is satisfied, we say that $\rho^+$ is the (future) \emph{asymptotic approximation of $\rho$}. The convergence in Eq. \eqref{eqq} is equivalent
to the existence of the wave operator $\Omega^-(\mathcal{L}_0,\mathcal{L})$ on the subspace 
$\mathcal{D}$ and can be
seen as a weak form of \emph{``asymptotic completeness''}. Indeed, the
existence of $\Omega^-(\mathcal{L}_0,\mathcal{L})|_{\mathcal{D}}$ tells us that, to any state 
$\rho\in \mathcal{D}$, (i.e., any state in a subspace complementary to the bound states of 
$\mathcal{L}$), a unique scattering state $\rho^+=\Omega^-( \mathcal{L}_0 , \mathcal{L} ) \rho$
can be associated with the property that \eqref{eqq} holds. This notion of asymptotic
completeness can and ought to be strengthened, as is usually done in standard
quantum mechanical scattering theory on Hilbert space. One natural additional condition strengthening 
\eqref{eqq} is to require that
$\mathrm{Ran}\bigl( \Omega^+(\mathcal{L},\mathcal{L}_0)\bigr)\supseteq
\mathcal{D}$;
i.e., that any $\rho\in \mathcal{D}$ can be written as
$\rho=\Omega^+( \mathcal{L} , \mathcal{L}_0 ) \rho^-$, for a state
$\rho^- \in \mathcal{J}_1( \mathcal{H} )$ (also called a ``scattering state''). A stronger version is 
to require that
$\mathrm{Ran}\bigl( \Omega^+(\mathcal{L},\mathcal{L}_0)\bigr)=
\mathcal{D}$,
which ensures the existence of the ``scattering endomorphism'',
$\sigma:\mathcal{J}_1( \mathcal{H} ) \to
\mathrm{Ran}\bigl( \Omega^-(\mathcal{L}_0,\mathcal{L})\bigr)$,
defined as
\begin{equation}
  \sigma= \Omega ^-(\mathcal{L}_0,\mathcal{L})  \Omega ^+(\mathcal{L},\mathcal{L}_0) .
\end{equation}
If, in addition, $\mathrm{Ran}\bigl(\Omega^-(\mathcal{L}_0,\mathcal{L})\bigr)=
\mathcal{J}_1( \mathcal{H} )$,
then $\sigma:\mathcal{J}_1( \mathcal{H} ) \to \mathcal{J}_1( \mathcal{H} )$ is an invertible endomorphism, i.e., an isomorphism. We say that the wave operators $ \Omega^-(\mathcal{L}_0,\mathcal{L})$ and $ \Omega^+(\mathcal{L},\mathcal{L}_0)$ are \emph{(asymptotically) complete} iff 
\begin{equation*}
\mathrm{Ran}\bigl( \Omega^-(\mathcal{L}_0,\mathcal{L})\bigr) = \mathcal{J}_1( \mathcal{H} ) \quad \text{and    }\quad \mathrm{Ran}\bigl( \Omega^+(\mathcal{L},\mathcal{L}_0)\bigr) = \mathcal{D}.
\end{equation*}

Let $\Omega^-(\mathcal{L}_0,\mathcal{L})^*$ be the adjoint of $\Omega^-(\mathcal{L}_0,\mathcal{L})$ acting on the dual space $\mathcal{B}( \mathcal{H} ) = \mathcal{J}_1( \mathcal{H} )^*$. Given $\rho \in \mathcal{J}_1( \mathcal{H} ) \subset \mathcal{B}( \mathcal{H} )$, the state $\rho^- = \Omega^-( \mathcal{L}_0 , \mathcal{L} )^* \rho$ is a past asymptotic approximation of $\rho$. Choosing, for instance, 
$\rho^- = \rho_{ \mathrm{in} } = |Ê\varphi_{ \mathrm{in}Ê} \rangle \langle \varphi_{ \mathrm{in} } |$, with $\|Ê\varphi_{ \mathrm{in} } \|_{ \mathcal{H} }Ê= 1$, and $\rho^+ = \rho_{ \mathrm{out} } = |Ê\varphi_{ \mathrm{out}Ê} \rangle \langle \varphi_{ \mathrm{out} } | \in \mathcal{D}$, with $\|Ê\varphi_{ \mathrm{out} } \|_{ \mathcal{H} }Ê= 1$, the scattering endomorphism $\sigma$ allows one to compute the transition probability
\begin{align*}
\big \langle \Omega^-( \mathcal{L}_0 , \mathcal{L} )^* \rho_{ \mathrm{out} } , \Omega^+( \mathcal{L} , \mathcal{L}_0 ) \rho_{ \mathrm{in} } \big \rangle_{ ( \mathcal{B}( \mathcal{H} ) ; \mathcal{J}_1( \mathcal{H} ) ) } &= \big \langle \varphi_{ \mathrm{out}Ê} , ( \sigma \rho_{Ê\mathrm{in} }) \varphi_{ \mathrm{out} } \big \rangle_{Ê\mathcal{H} } \\
& = \underset{t\to +\infty }{\slim} \langle \varphi_{ \mathrm{out}Ê} , ( e^{it \mathcal{L}_0} e^{-2 i t \mathcal{L}} e^{ i t \mathcal{L}_0 } \rho_{Ê\mathrm{in} } ) \varphi_{ \mathrm{out}Ê} \big \rangle_{Ê\mathcal{H} }.
\end{align*}
The concepts introduced here are illustrated in the figure below.  
\begin{center}
 \begin{tikzpicture}
\label{fig01}
      
      \fill[black!15]  (5,0) circle (2cm); 
      
        \fill[black!45]  (4.5,0.5) circle (0.5cm); 
         \fill[black!45]  (1.5,-1) circle (0.5cm); 
          \fill[black!45]  (6.5,2.75) circle (0.5cm); 
          
           \fill[black!25]  (5.5,-2.8) circle (0.5cm); 
          \fill[black!25]  (7.3,-2.5) circle (0.5cm); 
          
          \draw[->](1.5,-1) to[bend right=30]   (4.5,0.5);  
 \draw[->] (4.55,0.6) to[bend right=30]   (6.5,2.75);

         \draw[->,dashed]  (5.5,-2.8) to[bend left=10]   (1.55,-1.2);  
         \draw[->,dashed]  (7.3,-2.5) to[bend right=10]   (6.65,2.7); 
          \draw[->,thick,dotted]  (5.65,-2.8) to[bend left=5]   (7.2,-2.4);  
          \draw  (5.5,-1.5) node[above] { \small target};

 \draw (1,-1) node[left] { \small  $\rho^-(t)$};
     \draw (4,0.5) node[left] { \small  $\rho$};    
    \draw (6,2.75) node[left] { \small  $\rho^+(s)$};    
     \draw(6,-3) node[right] { \small  $\rho^{-}$}; 
      \draw(7.8,-2.5) node[right] { \small  $\rho^{+}$};

     \draw(3.5,-2.2) node[below] { \small  $e^{it \mathcal{L}_0}$};  
     \draw(3.2,-0.5) node[above] { \small  $e^{-it \mathcal{L}}$};  
      \draw(5.4,1.3) node[above] { \small  $e^{-is \mathcal{L}}$};  
       \draw(7.6,0) node[right] { \small  $e^{-is \mathcal{L}_0}$};  
         \draw(6.7,-2) node[below] { \small  $\sigma$};

               \draw (5,-4) node[below] { \small Figure 1. Illustration of the scattering operators ($s,t$ must go to $+\infty$)};

        \end{tikzpicture}

\end{center}

\subsection{Statement of the main result}

To avoid cumbersome notations we consider a Lindblad generator given by
\begin{equation}\label{eq:def0_L}
  \mathcal{L} = \mathrm{ad}(H_0) - \frac{i}{2}  \{C^{*}C,\,\cdot \,\} + i C \,\cdot \,C^{*},
\end{equation}
where $H_0$ is a self-adjoint operator on $\mathcal{H}$, and $C \in \mathcal{B}( H )$ is a bounded operator. The analysis of general Lindblad generators, as given in \eqref{Lili0}, can be inferred from 
the one we present in the following by adapting Assumption \ref{Z0}, below. We choose
\begin{equation}
  \mathcal{L}_0 := \mathrm{ad}(H_0).
\end{equation}
Noting that 
\begin{align*}
  \mathcal{L} = H \cdot \,  - \, \cdot H^* + i C \, \cdot \, C^{*} ,
\end{align*}
where 
\begin{equation}\label{eq:def0_H}
H := H_0 - \frac{i}{2} C^* C 
\end{equation}
is a dissipative operator acting on the Hilbert space $\mathcal{H}$, it is useful to compare the semigroup $e^{ - i t \mathcal{L} }$ to the auxiliary semigroup $e^{ - i t H } ( \cdot ) e^{ i t H^* }$. In our analysis, an important role will be played by the operator $H$.

Next, we present the main hypotheses underlying our analysis.
\begin{assumption}\label{Y0}
There exists a dense subset $\mathcal{E} \subset \mathcal{H}$ such that, for all $u \in \mathcal{E}$,
\begin{align}\label{eq:Y0}
\int_{ \mathbb{R} } \big \|ÊC^* C e^{ - i t H_0 } u \big \|_{ \mathcal{H} } dt < \infty.
\end{align}
\end{assumption}
Assumption \ref{Y0} is used to study the scattering theory for the operators $H$ and $H_0$. But we will see that this assumption is also useful in the study of the scattering theory for the Lindblad operators 
$\mathcal{L}$ and $\mathcal{L}_0$.
\begin{assumption}\label{Z0}
There exists a positive constant $\mathrm{c}_0$ depending on $C$ and $H_0$ such that,
\begin{align}\label{eq:Z0}
\int_{ \mathbb{R} } \big \|ÊC e^{ - i t H_0 } u \big \|_{ \mathcal{H} }^2 dt \le \mathrm{c}_0^2 \| u \|_{ \mathcal{H} }^2 ,
\end{align}
for all $u \in \mathcal{H}$.
\end{assumption}
Assumption \ref{Z0} amounts to assuming that the operator $C$ is $H_0$-smooth in the sense of Kato \cite{Kato1}. We recall from \cite{Kato1} that this assumption is equivalent to the inequality
\begin{align}
\int_{ \mathbb{R} } \Big ( \big \|ÊC ( H_0 - ( \lambda + i 0^+ ) )^{-1} u \big \|_{ \mathcal{H} }^2 + \big \|ÊC ( H_0 - ( \lambda - i 0^+ ) )^{-1} u \big \|_{ \mathcal{H} }^2 \Big ) d \lambda \le (\mathrm{c}'_0)^2 \| u \|_{ \mathcal{H} }^2 ,
\end{align}
for some $\mathrm{c}'_0 > 0$ (that can be chosen to be $\mathrm{c}'_0 = 2 \pi \mathrm{c}_0$), which is also equivalent to assuming that
\begin{equation}
\sup_{ z \in \mathbb{C} \setminus \mathbb{R} } \big \| C \big ( ( H_0 - z )^{-1} - ( H_0 - \bar z )^{-1} \big ) C^* \big \|_{ \mathcal{H} } \le \mathrm{c}'_0.
\end{equation}
For other conditions equivalent to \eqref{eq:Z0} we refer to \cite{Kato1}. Obviously, if $u \neq 0$ is an eigenvector of $H_0$, \eqref{eq:Z0} implies that $C u = 0$. In particular, if $\mathrm{Ker}( C ) = \{ 0 \}$ the pure point spectrum of $H_0$ must be assumed to be empty.

We remark that the following bound is \emph{always} satisfied:
\begin{align}\label{eq:Z'''0}
\int_0^\infty \big \|ÊC e^{ - i t H } u \big \|_{ \mathcal{H} }^2 dt \le \| u \|_{ \mathcal{H} }^2 ,
\end{align}
for all $u \in \mathcal{H}$. This follows from the identity
\begin{align}\label{eq:Z'0}
\int_0^t \big \langle u , e^{ i s H^* } C^*ÊC e^{ - i s H } u \rangle ds = - \int_0^t \partial_s \big \langle u , e^{ i s H^* } e^{ - i s H } u \rangle ds = \| u \|_{ \mathcal{H} }^2 - \big \|Êe^{ - i t H }Êu \big \|^2_{ \mathcal{H} } .
\end{align}
Similarly as in \eqref{eq:Z0}, we denote by $\tilde{\mathrm{c}}_0$ the smallest positive constant ($0 < \tilde{\mathrm{c}}_0 \le 1$) with the property that
\begin{align}\label{eq:Z''0}
\int_0^\infty \big \|ÊC e^{ - i t H } u \big \|_{ \mathcal{H} }^2 dt \le \tilde{\mathrm{c}}_0^2 \| u \|_{ \mathcal{H} }^2 ,
\end{align}
for all $u \in \mathcal{H}$. 

One of the main results of this paper is described in the following theorem.
\begin{theorem}\label{thm:main0}
Suppose that either Assumption \ref{Y0} holds, or that Assumption \ref{Z0} holds with $\mathrm{c}_0 < 2$. Then
\begin{equation*}
\Omega^+( \mathcal{L} , \mathcal{L}_0 ) \text{ exists on }  \mathcal{J}_1( \mathcal{H} ).
\end{equation*}
Suppose that Assumption \ref{Z0} holds with $\mathrm{c}_0 < 2$. Then
\begin{equation*}
\Omega^-( \mathcal{L}_0 , \mathcal{L} ) \text{ exists on }  \mathcal{J}_1( \mathcal{H} ).
\end{equation*}
Suppose that Assumption \ref{Z0} holds with $\mathrm{c}_0 < 2 - \sqrt{2}$. Then the wave operators exist and are (asymptotically) complete in the sense of the previous subsection. More precisely, if $\mathrm{c}_0 < 2 - \sqrt{2}$, then $\Omega^+( \mathcal{L} , \mathcal{L}_0 )$ and $\Omega^-( \mathcal{L}_0 , \mathcal{L} )$ are invertible in $\mathcal{B}( \mathcal{J}_1( \mathcal{H} ) )$, and the Lindblad generators $\mathcal{L}$ and $\mathcal{L}_0$ are similar.
\end{theorem}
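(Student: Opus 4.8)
The plan is to reduce everything to rank-one operators and to expand the interacting semigroup in a Dyson series around the ``no-jump'' semigroup $e^{-it\mathcal{K}}\rho := e^{-itH}\rho\, e^{itH^{*}}$, the perturbation being the completely positive map $\mathcal{P}\rho := C\rho C^{*}$, so that $-i\mathcal{L} = -i\mathcal{K} + \mathcal{P}$. By Remark~\ref{rk:intro1} and the density of finite-rank operators in $\mathcal{J}_1(\mathcal{H})$, it suffices to establish all the claimed strong convergences and operator-norm bounds on operators $|u\rangle\langle v|$, $u,v\in\mathcal{H}$. Iterating Duhamel's formula gives $e^{-it\mathcal{L}}|u\rangle\langle v| = \sum_{n\ge 0} D_n(t)|u\rangle\langle v|$ with $D_n(t)|u\rangle\langle v| = \int_{0\le s_1\le\cdots\le s_n\le t} |\psi_n^t(u;\vec s)\rangle\langle\psi_n^t(v;\vec s)|\,d\vec s$, where $\psi_n^t(u;\vec s) = e^{-i(t-s_n)H}C e^{-i(s_n-s_{n-1})H}C\cdots C e^{-is_1 H}u$; applying $e^{it\mathcal{L}_0}(\cdot)=e^{itH_0}(\cdot)e^{-itH_0}$ only modifies this by inserting the free factor $e^{itH_0}$ in the innermost slot (for $\Omega^+$), resp.\ in the outermost slot (for $\Omega^-$).

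Next I would settle the Hilbert-space scattering of the dissipative pair $(H,H_0)$. Since $H$ is dissipative, $e^{-itH}$ and $e^{itH^{*}}=(e^{-itH})^{*}$ are contraction semigroups for $t\ge 0$. Cook's method (under Assumption~\ref{Y0}, via $\|\tfrac{d}{dt}(e^{-itH}e^{itH_0}u)\| = \tfrac12\|C^{*}C e^{itH_0}u\|$) or Kato smoothness (under Assumption~\ref{Z0}, via the duality between $C e^{-isH_0}$-smoothness and $e^{-isH}C^{*}$, using the universal bound \eqref{eq:Z'''0}) yields the existence of $W := \slim_{t\to+\infty} e^{-itH}e^{itH_0}$, of $\widetilde W := \slim_{t\to+\infty} e^{itH_0}e^{-itH}$, and of the analogous operators obtained by exchanging $H\leftrightarrow H^{*}$. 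Writing $W-\mathrm{Id} = -\tfrac12\int_0^{\infty}(C e^{isH^{*}})^{*}C e^{isH_0}\,ds$ and estimating by Cauchy--Schwarz together with \eqref{eq:Z'''0} and Assumption~\ref{Z0} gives the quantitative bounds $\|W-\mathrm{Id}\|\le \mathrm{c}_0/2$, and similarly for $\widetilde W$ and the companions. In particular the $n=0$ Dyson term converges to $|Wu\rangle\langle Wv|$ (resp.\ $|\widetilde Wu\rangle\langle \widetilde Wv|$), whose deviation from $|u\rangle\langle v|$ is controlled by $\mathrm{c}_0$.

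The technical heart is a uniform-in-$t$ bound on the Dyson terms that is summable. Repeatedly integrating out the innermost time variable and invoking $\int_0^{\infty}\|C e^{-isH}w\|^2\,ds\le \|w\|^2$ yields $\int_{0\le s_1\le\cdots\le s_n\le t}\|\psi_n^t(u;\vec s)\|^2\,d\vec s\le \tilde{\mathrm c}_0^{2n}\|u\|^2\le\|u\|^2$, which by itself only gives convergence when $\tilde{\mathrm c}_0<1$. The improvement comes from the free factor $e^{\pm itH_0}$ in the outermost/innermost slot: Duhamel-expanding the adjacent $H$-propagator against the corresponding $H_0$-propagator, via $e^{-isH} = e^{-isH_0} - \tfrac12\int_0^s e^{-i(s-r)H}C^{*}C e^{-irH_0}\,dr$, and using Assumption~\ref{Z0} on the $e^{\pm i\cdot H_0}$ factors — together with the $L^2\!\to\!L^2$ boundedness of the causal convolutions with operator kernels $C e^{-i\cdot H}C^{*}$ and $C e^{i\cdot H^{*}}C^{*}$ (finite norm, by an energy estimate) — replaces one (and, iterating, every) factor $\tilde{\mathrm c}_0$ by a multiple of $\mathrm{c}_0/2$. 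Bookkeeping the constants produces a geometric series in a ratio $r(\mathrm{c}_0)$ which is $<1$ precisely when $\mathrm{c}_0<2$ (whence $\Omega^{\pm}$ exist on $\mathcal{J}_1(\mathcal{H})$; for $\Omega^{+}$, Assumption~\ref{Y0} is an alternative sufficient condition, handled by the Cook estimate of the previous paragraph), and with the finer property that $\|\Omega^+(\mathcal{L},\mathcal{L}_0)-\mathrm{Id}\|_{\mathcal{B}(\mathcal{J}_1(\mathcal{H}))}<1$ and $\|\Omega^-(\mathcal{L}_0,\mathcal{L})-\mathrm{Id}\|_{\mathcal{B}(\mathcal{J}_1(\mathcal{H}))}<1$ precisely when $\mathrm{c}_0<2-\sqrt 2$. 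I expect this to be the main obstacle: organizing the $(n{+}1)$-fold time integrals so that the free-evolution factors are harvested efficiently enough to obtain the sharp threshold $2-\sqrt 2$ rather than a smaller constant; a crude treatment of the iterated integrals would only yield, e.g., $\mathrm{c}_0<1/2$. (For $\Omega^-$ one also checks, using the Dyson bound, that $\slim_{t\to+\infty}e^{it\mathcal{L}_0}e^{-it\mathcal{L}}\rho$ exists for \emph{all} $\rho\in\mathcal{J}_1(\mathcal{H})$, not merely on $\mathcal{D}$.)

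Finally, assume $\mathrm{c}_0<2-\sqrt 2$. By the Neumann series, $\|\Omega^{+}(\mathcal{L},\mathcal{L}_0)-\mathrm{Id}\|<1$ and $\|\Omega^{-}(\mathcal{L}_0,\mathcal{L})-\mathrm{Id}\|<1$ make both wave operators invertible in $\mathcal{B}(\mathcal{J}_1(\mathcal{H}))$. The intertwining relations $e^{-it\mathcal{L}}\,\Omega^{+}(\mathcal{L},\mathcal{L}_0)=\Omega^{+}(\mathcal{L},\mathcal{L}_0)\,e^{-it\mathcal{L}_0}$ and $\Omega^{-}(\mathcal{L}_0,\mathcal{L})\,e^{-it\mathcal{L}}=e^{-it\mathcal{L}_0}\,\Omega^{-}(\mathcal{L}_0,\mathcal{L})$ for $t\ge 0$ follow directly from \eqref{eq:defOmega_1}--\eqref{eq:defOmega_2}, the semigroup property of $\{e^{-it\mathcal{L}}\}$ and the group property of $\{e^{-it\mathcal{L}_0}\}$. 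Combining the first with invertibility of $\Omega^{+}(\mathcal{L},\mathcal{L}_0)$ gives $e^{-it\mathcal{L}}=\Omega^{+}(\mathcal{L},\mathcal{L}_0)\,e^{-it\mathcal{L}_0}\,\Omega^{+}(\mathcal{L},\mathcal{L}_0)^{-1}$ for all $t\ge 0$; differentiating at $t=0$ on the domain $\Omega^{+}(\mathcal{L},\mathcal{L}_0)\,\mathcal{D}(\mathcal{L}_0)$ yields $\mathcal{L}=\Omega^{+}(\mathcal{L},\mathcal{L}_0)\,\mathcal{L}_0\,\Omega^{+}(\mathcal{L},\mathcal{L}_0)^{-1}$, i.e.\ $\mathcal{L}$ and $\mathcal{L}_0$ are similar. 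Since $\mathcal{L}_0$ has no eigenvalue, the similarity forces $\mathcal{L}$ to have none either, so $\mathcal{D}_{\mathrm{pp}}=\{0\}$ and $\mathcal{D}=\mathcal{J}_1(\mathcal{H})$; invertibility then gives $\mathrm{Ran}\bigl(\Omega^{+}(\mathcal{L},\mathcal{L}_0)\bigr)=\mathcal{D}$ and $\mathrm{Ran}\bigl(\Omega^{-}(\mathcal{L}_0,\mathcal{L})\bigr)=\mathcal{J}_1(\mathcal{H})$, which is asymptotic completeness in the sense of Section~\ref{sec:wave-oper-asympt}.
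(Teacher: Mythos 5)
Your setup---reduction to positive and rank-one operators, the Hilbert-space scattering theory for the dissipative pair $(H,H_0)$ with the quantitative bound $\lVert W_+(H,H_0)-\mathrm{Id}\rVert\le \mathrm{c}_0/2$, and the Dyson expansion of $e^{-it\mathcal L}$ around the no-jump semigroup $\mathrm{ad}(H)$---is sound and coincides with the framework the paper uses. The genuine gap is in your third paragraph, the one you yourself flag as ``the main obstacle'': the summable, uniform-in-$t$ control of the Dyson terms with constants sharp enough to reach the stated thresholds is asserted, not proved. For mere existence under $\mathrm{c}_0<2$ no geometric series in $n$ is actually needed; what is needed, and what your ``replace $\tilde{\mathrm c}_0$ by a multiple of $\mathrm{c}_0/2$'' heuristic is really after, is the clean lemma that $\mathrm{c}_0<2$ forces the \emph{backward} propagator to be uniformly bounded: from \eqref{eq:cook0}, Cauchy--Schwarz, \eqref{eq:Z0} and \eqref{eq:Z'''0} one gets $\lVert e^{-itH}w\rVert\ge(1-\mathrm{c}_0/2)\lVert w\rVert$ for $t\ge0$, hence $\lVert e^{itH}\rVert\le(1-\mathrm{c}_0/2)^{-1}$, and then by \eqref{eq:Z'0} one obtains $\tilde{\mathrm c}_0^2\le 1-(1-\mathrm{c}_0/2)^2<1$ together with $\int_0^\infty\lVert Ce^{itH}u\rVert^2\,dt\le\tilde{\mathrm c}_0^2(1-\tilde{\mathrm c}_0^2)^{-1}\lVert u\rVert^2$ (this is the content of Theorem \ref{thm:bijec} and Lemma \ref{lm:ABCD}). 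With that lemma in hand a \emph{single} Duhamel step gives $\Omega^{\pm}(\mathcal L,\mathcal L_1)$ and $\Omega^{\pm}(\mathcal L_1,\mathcal L)$ by Cook's method, and the chain rule through $\mathcal L_1=\mathrm{ad}(H)$ finishes existence. You never isolate this lemma, and your description of how the free factor is ``harvested'' (Duhamel against $e^{-isH_0}$ plus $L^2\to L^2$ bounds for causal operator convolutions) does not visibly close.

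More seriously, your mechanism for the $\mathrm{c}_0<2-\sqrt2$ part---namely $\lVert\Omega^{\pm}-\mathrm{Id}\rVert_{\mathcal B(\mathcal J_1(\mathcal H))}<1$ ``precisely when'' $\mathrm{c}_0<2-\sqrt2$, followed by a Neumann series---is unsubstantiated and does not match the estimates your own scheme produces. At $\mathrm{c}_0=2-\sqrt2$ the available bound is $\tilde{\mathrm c}_0^2\le 1/2$, so the natural estimate $\lVert\Omega^{+}(\mathcal L,\mathcal L_1)-\mathrm{Id}\rVert\le 2\,\tilde{\mathrm c}_0^2(1-\tilde{\mathrm c}_0^2)^{-1}$ only gives the value $2$, nowhere near $<1$; carried out honestly, your route would prove invertibility only for some strictly smaller constant. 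The threshold $2-\sqrt2$ arises from a different mechanism: $\mathrm{c}_0<2-\sqrt2$ is exactly the condition $\tilde{\mathrm c}_0<1/\sqrt2$, which is what makes the Dyson--Phillips series for $e^{-it\mathcal L}$ at \emph{negative} times converge (each insertion of $C(\cdot)C^*$ costs a factor $\tilde{\mathrm c}_0^2(1-\tilde{\mathrm c}_0^2)^{-1}$, and the resulting geometric series is summable iff $\tilde{\mathrm c}_0^2<1/2$), so that $\{e^{-it\mathcal L}\}_{t\in\mathbb R}$ is uniformly bounded by $(1-2\tilde{\mathrm c}_0^2)^{-1}$. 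Once all the groups $e^{-it\mathcal L_0}$, $e^{-it\mathcal L_1}$, $e^{-it\mathcal L}$ are uniformly bounded on all of $\mathbb R$, invertibility and $\Omega^{\pm}(\mathcal L,\mathcal L_0)^{-1}=\Omega^{\pm}(\mathcal L_0,\mathcal L)$ follow from the standard telescoping identity $\rho=e^{it\mathcal L}e^{-it\mathcal L_0}\bigl(e^{it\mathcal L_0}e^{-it\mathcal L}\rho\bigr)+o(1)$, not from a Neumann series. Your final paragraph (similarity by differentiating the intertwining relation, $\mathcal D_{\mathrm{pp}}=\{0\}$, hence completeness) is fine once invertibility is in hand, but as written the proof of the last assertion of the theorem does not go through.
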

\begin{remark}
$\quad$
\begin{enumerate}
\item We will prove in Section \ref{sec:dissipative} that Assumption \ref{Z0} with $\mathrm{c}_0 < 2$ implies that \eqref{eq:Z''0} holds with $\tilde{\mathrm{c}}_0 < 1$. 
 It will appear in our proof that sufficient conditions for the existence of the wave operators are that Assumption \ref{Z0} holds and that \eqref{eq:Z''0} holds with $\tilde{Ê\mathrm{c} }_0 < 1$. Furthermore, we will see that the upper bound $\mathrm{c}_0 < 2 - \sqrt{2}$ implies that $\tilde{\mathrm{c}}_0 < 1/ \sqrt{2}$, and sufficient conditions for the completeness of the wave operators are that Assumption \ref{Z0} holds and that \eqref{eq:Z''0} holds with $\tilde{Ê\mathrm{c} }_0 < 1/\sqrt{2}$.
\item We will verify Assumptions \ref{Y0} and \ref{Z0} in some concrete, physically interesting examples, using the explicit form of $e^{ - i t \mathcal{L}_0}$; see Section \ref{sec:example}.
\item To obtain an estimate on $\tilde{\mathrm{c}}_0$ in \eqref{eq:Z''0}, it is possible to apply Mourre's theory for dissipative operators, as developed in \cite{Gol,Roy}. We do, however, not know any examples where an estimate on $\tilde{\mathrm{c}}_0$ obtained with the help of Mourre's theory is better than the one we will obtain in our approach, using perturbative arguments.
\end{enumerate}
\end{remark}

\subsection{Physical context}
\label{sec:intertw-prop-wave}
The abstract notions and concepts formulated above are well-suited to study the
large-time dynamics in interesting models of systems of particles, such as electrons or neutrons, interacting with the degrees of freedom of a dynamical target, which is usually a system of condensed matter, such as an insulator, a metal, or a magnetic material, etc. In these models, the degrees of freedom of the target are ``traced out'', so that time evolution of the particles is not given by a group of unitary
transformations but is assumed to be given by a contraction semi-group of completely positive maps, 
as discussed above, and pure states may thus evolve into mixtures. A concrete example of a physical system that we are able to analyze consists of a beam of independent, spin-polarized electrons transmitted through a magnetized film, as studied in experiments carried out in the group of the 
late H. Chr. Siegmann; see, e.g., \cite{Sieg,Albert-Fr}. In these experiments, the film consists of Iron or Nickel, which are ferromagnetic metals, and exhibits a spontaneous magnetization, $\vec{M}$.
 
\begin{center}
 \begin{tikzpicture}
\label{fig00}
      
      \fill[black!15]  (5,0) rectangle (5.8,2.5); 
        
         \draw[->](0,1.25) -- (12,1.25);  
          
          \fill[black!15]  (5,0) rectangle (5.8,2.5); 
          \fill[black]  (3,1.25) circle (0.5mm); 
          \fill[black]  (8,1.25) circle (0.5mm); 
          
           \draw[->](3,1.25) -- (3.5,1.75);  
            \draw[->](8,1.25) -- (7.7,1.85);  
            \draw[->, dashed](8,1.25) -- (8.5,1.75); 
         
         \draw[->, dotted](3,1.25) -- (3,3.25); 
         \draw[->, dotted](8,1.25) -- (8,3.25); 
         
                    \draw[->, very thick](5.4,1.) -- (5.4,1.6);
             \draw (5.4,1.55) node[above] { \small $\vec{M}$};
             
             \draw   (3,1.25)  node[below] { \small $e^{-}$};
                 \draw  (8,1.25) node[below] { \small $e^{-}$};
                 \draw  (8,1.75) node[above] { \small $P_{+}$};
                \draw  (3.45,1.75) node[above] { \small $P_{-}$};
                 
                \draw (5,-0.7) node[below] { \small Figure 2. The Siegmann experiment};
    \draw[dotted] (8,1.7) ellipse (0.5cm and 0.2cm);

        \end{tikzpicture}

\end{center}

 If the energy of incoming
electrons is neither too high nor to low, they can occupy the extended states of an empty band of the film to traverse the film, and the rate of absorption of electrons by the film during transmission is small; (i.e., the number of outgoing electrons is essentially the same as the number of electrons in the incoming beam). If the luminosity of the incoming beam is small, the electrons in the beam can be assumed to be independent. Hence it suffices to develop the
scattering theory of a single electron. The incoming electron is prepared in a pure state, i.e., one given 
by a normalized vector in $L^{2}(\mathbb{R}^3)\otimes \mathbb{C}^2$. But the state of an outgoing electron, after transmission through the film, is mixed and, hence, is described by a density matrix in
$\mathcal{J}^{+}_{1}(L^{2}(\mathbb{R}^3)\otimes \mathbb{C}^2)$. This is because the interaction of the electron with the degrees of freedom of the film lead to entanglement of the electron state with the state of the film. When the degrees of freedom of the film are ``traced out'' the state of the electron is, in general, mixed. During the time when the electron traverses the film its spin precesses around the direction of spontaneous magnetization $\vec{M}$ with a very large angular velocity. This precession is caused by a Zeeman-type interaction of the electron spin with the so-called ``Weiss exchange field'' that describes the ferromagnetic order inside the film. Furthermore, the direction of spin of the electrons tends to relax slowly towards the direction of spontaneous magnetization of the film, which is a consequence of interactions with spin waves in the film and of a small rate of absorption of electrons with spin opposite to the majority spin in the film. Thus, the reduced time evolution of the state of an electron is not unitary, but  can be approximated by a suitably chosen Lindblad dynamics. (For a theoretical description of these experiments see \cite{Albert-Fr}.)

\subsection{Scattering theory describing particle capture}\label{subsec:capture}

The mathematical concepts introduced so far do not suffice to describe systems of particles that can be captured (absorbed) by the target. But, as the example just described suggests, this possibility should be included in a general theory. Definitions of modified outgoing wave operators taking into account the possibility of capture have been proposed and can be found in the literature; see \cite{Alicki1,Davies2}. Here we follow essentially \cite{Davies2}. We suppose that the Lindblad operator has the form 
\begin{equation}\label{eq:def1_L}
  \mathcal{L} = \mathrm{ad}(H_0) + \mathrm{ad}(V) - \frac{i}{2}  \{C^{*}C,\,\cdot \,\} + i C \,\cdot \,C^{*}.
\end{equation}
The operators $H_0$ and $V$ act on a Hilbert space $\mathcal{H}$ and are self-adjoint; $H_0$ generates the unitary dynamics of a free particle, and $V$ describes \textit{static} interactions of the particle with the target. In contrast, the operator $C \in \mathcal{B}(\mathcal{H})$ is used to describe interactions of the particle with \textit{dynamical} degrees of freedom of the target. We suppose that $V$ and $C^* C$ are relatively compact with respect to $H_0$; so that, in particular,
\begin{equation*}
H_V := H_0 + V,
\end{equation*}
is self-adjoint on $\mathcal{H}$, with domain $\mathcal{D}( H_V ) = \mathcal{D}( H_0 )$.\\

 We require the following assumptions.
\begin{assumption}\label{V0}
The spectrum of $H_0$ is purely absolutely continuous, the singular continuous spectrum of $H_V$ is empty, and $H_V$ has at most finitely many eigenvalues of finite multiplicity. The wave operators 
\begin{equation*}
W_\pm( H_V , H_0 ) := \underset{t\to \mp \infty }{\slim} e^{ i t H_V } e^{ - i t H_0 } , \qquad W_{Ê\pm } ( H_0 , H_V ) := \underset{t\to \mp \infty }{\slim} e^{ i t H_0 } e^{ - i t H_V } \Pi_{ \mathrm{ac} }( H_V ) ,
\end{equation*}
exist on $\mathcal{H}$ and are asymptotically complete, in the sense that
\begin{align*}
& \mathrm{Ran} ( W_\pm( H_V , H_0 ) ) = \mathrm{Ran} ( \Pi_{ \mathrm{ac} }( H_V ) ) = \mathrm{Ran} ( \Pi_{ \mathrm{pp} }( H_V ) )^\perp , \\
& \mathrm{Ran} ( W_\pm( H_0 , H_V ) ) = \mathcal{H} .
\end{align*}
Here $\Pi_{ \mathrm{ac} }( H_V )$ and $\Pi_{ \mathrm{pp} }( H_V )$ denote the projections onto the absolutely continuous and pure point spectra of $H_V$, respectively.
\end{assumption}
\begin{assumption}\label{V1}
There exists a positive constant $\mathrm{c}_V$, depending on $C$ and $H_V$, such that
\begin{align}\label{eq:ZV}
\int_{ \mathbb{R} } \big \|ÊC e^{ - i t H_V } \Pi_{ \mathrm{ac} }( H_V ) u \big \|_{ \mathcal{H} }^2 dt \le \mathrm{c}_V^2 \| \Pi_{ \mathrm{ac} }( H_V ) u \|_{ \mathcal{H} }^2 ,
\end{align}
for all $u \in \mathcal{H}$.
\end{assumption}
In the example where $H_0 = - \Delta$ on $L^2( \mathbb{R}^3 )$ and $V$ is a potential, conditions on $V$ that imply Assumptions \ref{V0} and \ref{V1} are well-known; (see \cite{RS-III,JensenKato,BeMa92_01,Rauch}, and Section \ref{subsec:examples} for examples).

We are now prepared to introduce a modified outgoing wave operator allowing for the phenomenon of capture of the particle by the target; see \cite{Davies2}. As above, we consider the auxiliary (dissipative) operator
\begin{equation}\label{eq:def1_H}
H := H_V - \frac{i}{2} C^* C \equiv H_0 + V - \frac{i}{2} C^* C .
\end{equation}
We define the subspace $\mathcal{H}_{ \mathrm{b} }( H )$ as the closure of the vector space generated by the set of eigenvectors of $H$ corresponding to \emph{real} eigenvalues. It is not difficult to verify that
\begin{equation*}
\mathcal{H}_{ \mathrm{b} }( H ) = \mathcal{H}_{ \mathrm{pp} }( H_V ) \cap \mathrm{Ker}( C ) = \mathcal{H}_{ \mathrm{b} }( H^* ),
\end{equation*}
see \cite{Davies1}.
We also set
\begin{align*}
& \mathcal{H}_{ \mathrm{d} }( H ) := \big \{ u \in \mathcal{H} : \lim_{ t \to \infty } \|Êe^{ - i t H }Êu \|_{ \mathcal{H} } = 0 \big \},  \\
& \mathcal{H}_{ \mathrm{d} }( H^* ) := \big \{ u \in \mathcal{H} : \lim_{ t \to \infty } \|Êe^{ i t H^* }Êu \|_{ \mathcal{H} } = 0 \big \}.
\end{align*}

We define the modified wave operator $\tilde{\Omega}^-( \mathcal{L}_0 , \mathcal{L} )$ by
\begin{equation}\label{eq:modifwaveoperator}
\tilde{\Omega}^-( \mathcal{L}_0 , \mathcal{L} ) := \underset{t\to +\infty }{\slim} \: e^{it \mathcal{L}_0}  \big ( \Pi e^{-it\mathcal{L}}  ( \cdot ) \Pi \big ) ,
\end{equation}
where $\mathcal{L}_{0} := \text{ad}(H_{0})$, and where $\Pi$ is the orthogonal projection onto the orthogonal complement of
$\mathcal{H}_{ \mathrm{b} }( H ) \oplus \mathcal{H}_{ \mathrm{d} }( H )$.

\begin{theorem}\label{thm:capture}
Suppose that Assumptions \ref{V0} and \ref{V1} hold with $\mathrm{c}_V < 2$. Then the modified wave operator $\tilde{\Omega}^-( \mathcal{L}_0 , \mathcal{L} )$ exists on $\mathcal{J}_1( \mathcal{H} )$. For all 
$\rho \in \mathcal{J}_1^+( \mathcal{H} )$ with $\mathrm{tr}( \rho ) = 1$, we have that
\begin{equation*}
0 \le \mathrm{tr}( \tilde{\Omega}^-( \mathcal{L}_0 , \mathcal{L} ) \rho ) \le 1 ,
\end{equation*}
and $\mathrm{tr}( \tilde{\Omega}^-( \mathcal{L}_0 , \mathcal{L} ) \rho )$ is interpreted as the probability that the particle initially in the state $\rho$ eventually escapes from the target.
\end{theorem}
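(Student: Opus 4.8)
The plan is to reduce the statement about the Lindbladian $\mathcal{L}$ to a statement about the dissipative Hilbert-space operator $H = H_V - \tfrac{i}{2} C^* C$, exactly as in the strategy that will be used for Theorem \ref{thm:main0}, but now keeping track of the subspace $\mathcal{H}_{\mathrm{b}}(H) \oplus \mathcal{H}_{\mathrm{d}}(H)$ where the particle is captured. First I would rewrite the full Lindblad evolution via the Dyson expansion relative to the auxiliary semigroup $e^{-itH}(\cdot)e^{itH^*}$: writing $\mathcal{L} = \mathcal{K} + i\mathcal{C}$ with $\mathcal{K}\rho := H\rho - \rho H^*$ and $\mathcal{C}\rho := C\rho C^*$, one has
\begin{equation*}
e^{-it\mathcal{L}}\rho = e^{-itH}\rho e^{itH^*} + i\int_0^t e^{-i(t-s)\mathcal{L}}\, C\, e^{-isH}\rho e^{isH^*}\, C^*\, ds,
\end{equation*}
and iterating gives $e^{-it\mathcal{L}}\rho = \sum_{n\ge 0} i^n \int_{0\le s_1 \le \dots \le s_n \le t} (\text{nested terms}) $. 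The key analytic input is Assumption \ref{V1} together with the fact (to be established in Section \ref{sec:dissipative}, and quoted here) that $\mathrm{c}_V < 2$ forces the smoothness constant $\tilde{\mathrm{c}}$ for $C$ relative to $H$ on $\mathrm{Ran}(\Pi)$ to be strictly less than $1$; this is what makes the geometric series of $L^1$-in-time bounds on the Dyson terms summable, so the limit defining $\tilde\Omega^-(\mathcal{L}_0,\mathcal{L})$ exists.

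Next I would handle the projection $\Pi$. The decomposition $\mathcal{H} = \mathcal{H}_{\mathrm{b}}(H) \oplus \mathcal{H}_{\mathrm{d}}(H) \oplus \mathrm{Ran}(\Pi)$ is the Hilbert-space analogue of the splitting into bound, captured, and scattering subspaces; on $\mathcal{H}_{\mathrm{d}}(H)$ we have $\|e^{-itH}u\| \to 0$, and on $\mathcal{H}_{\mathrm{b}}(H) = \mathcal{H}_{\mathrm{pp}}(H_V)\cap\mathrm{Ker}(C)$ the operator $H$ acts like $H_V$ with $Cu=0$, so these vectors are invisible to scattering. On $\mathrm{Ran}(\Pi)$, I expect to show — using Assumptions \ref{V0} and \ref{V1} and the Kato-smoothness/stationary-phase arguments paralleling Assumption \ref{Z0} — that $\slim_{t\to\infty} e^{itH_0} e^{-itH}\Pi$ exists (the dissipative outgoing wave operator), so that $\slim_{t\to\infty} e^{it\mathcal{L}_0}\bigl(\Pi e^{-itH}(\cdot)e^{itH^*}\Pi\bigr)$ exists; combined with the Dyson-series control above (whose higher terms also carry a factor that is eventually absorbed by $\Pi$ and by the strong convergence of $e^{it\mathcal{L}_0}$ on the free evolution), this yields existence of $\tilde\Omega^-(\mathcal{L}_0,\mathcal{L})$ on a dense set, and then on all of $\mathcal{J}_1(\mathcal{H})$ by the uniform bound $\|e^{-it\mathcal{L}}\| \le 2$ (Remark \ref{rk:intro1}) and an $\varepsilon/3$ argument.

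For the probabilistic interpretation, observe that for $\rho \in \mathcal{J}_1^+(\mathcal{H})$ the operator $\Pi e^{-it\mathcal{L}}(\rho)\Pi$ is again positive (it is a compression of a positive operator), and $e^{it\mathcal{L}_0}$ is conjugation by a unitary, hence trace- and positivity-preserving; therefore $\tilde\Omega^-(\mathcal{L}_0,\mathcal{L})\rho \ge 0$, giving the lower bound $0 \le \mathrm{tr}(\tilde\Omega^-(\mathcal{L}_0,\mathcal{L})\rho)$. For the upper bound, $\mathrm{tr}\bigl(\Pi e^{-it\mathcal{L}}(\rho)\Pi\bigr) \le \mathrm{tr}\bigl(e^{-it\mathcal{L}}(\rho)\bigr) \le \mathrm{tr}(\rho) = 1$ since $\Pi \le \mathds{1}$, $e^{-it\mathcal{L}}(\rho)\ge 0$, and $e^{-it\mathcal{L}}$ is trace-non-increasing on $\mathcal{J}_1^+$; passing to the limit (the trace is continuous along the convergent net in $\mathcal{J}_1$) gives $\mathrm{tr}(\tilde\Omega^-(\mathcal{L}_0,\mathcal{L})\rho)\le 1$. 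The identification of this number with the escape probability follows by reading $\Pi$ as projecting onto scattering states and $e^{it\mathcal{L}_0}$ as undoing the free asymptotic evolution: $1 - \mathrm{tr}(\tilde\Omega^-\rho)$ is the fraction of probability that either stays bound or gets absorbed.

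I expect the main obstacle to be the combination of two difficulties in the Dyson series: (i) controlling the higher-order terms $n \ge 1$, where one must interpolate between the $L^1$-in-time smoothness estimate for $C$ relative to $H$ (which requires $\tilde{\mathrm{c}} < 1$) and the mere boundedness of $e^{-i(t-s)\mathcal{L}}$, while simultaneously showing the intervening projections $\Pi$ and the free propagator $e^{it\mathcal{L}_0}$ do not destroy convergence; and (ii) the fact that $\Pi$ does not commute with $e^{-itH}$ or with $\mathcal{L}$, so the clean "restrict to $\mathrm{Ran}(\Pi)$" picture has to be propagated through the time integrals rather than used at a single time. The resolution is that the captured/bound subspaces are asymptotically decoupled — $\|\Pi e^{-itH}(\mathds{1}-\Pi)\| \to 0$ or is at least integrable — so that up to errors vanishing as $t\to\infty$ one may insert $\Pi$ freely, and the smallness $\mathrm{c}_V < 2$ is exactly what is needed to make all resulting error series converge.
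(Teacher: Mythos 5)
Your overall architecture --- reduce to dissipative scattering for $H=H_V-\tfrac{i}{2}C^*C$ on $\mathcal{H}$, lift to $\mathcal{J}_1(\mathcal{H})$ by a Duhamel expansion around $e^{-itH}(\cdot)e^{itH^*}$, and obtain the trace bounds from positivity of the compression $\Pi(\cdot)\Pi$ together with trace preservation of $e^{-it\mathcal{L}}$ --- is the right one, and your final paragraph establishing $0\le\mathrm{tr}(\tilde\Omega^-(\mathcal{L}_0,\mathcal{L})\rho)\le1$ is correct. This matches the paper's strategy: the real content there is Theorem \ref{thm:disswavemodified} (under Assumptions \ref{V0} and \ref{V1} with $\mathrm{c}_V<2$, $W_+(H,H_0)$ exists, is injective with \emph{closed} range $(\mathcal{H}_{\mathrm{b}}(H)\oplus\mathcal{H}_{\mathrm{d}}(H^*))^\perp$, and $\underset{t\to+\infty}{\slim}\,e^{itH_0}e^{-itH}\Pi_{\mathrm{ac}}(H)$ exists), proved by chaining through $H_V$ via Assumption \ref{V0} and the Cook/Cauchy--Schwarz lower bound $\|e^{-itH}e^{itH_V}\Pi_{\mathrm{ac}}(H_V)u\|\ge(1-\mathrm{c}_V/2)\|\Pi_{\mathrm{ac}}(H_V)u\|$; the passage to $\tilde\Omega^-(\mathcal{L}_0,\mathcal{L})$ is then a short adaptation of Davies. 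You correctly anticipate needing $\slim e^{itH_0}e^{-itH}\Pi$, but you neither supply the chain rule through $H_V$ nor the lower bound, which is the only place the hypothesis $\mathrm{c}_V<2$ actually enters.

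The genuine gap is your central analytic step. You propose to sum the full Dyson series of $e^{-it\mathcal{L}}$ relative to $e^{-itH}(\cdot)e^{itH^*}$ geometrically, on the grounds that $\mathrm{c}_V<2$ forces the $H$-smoothness constant of $C$ ``on $\mathrm{Ran}(\Pi)$'' below $1$. This cannot work in the capture setting: the identity \eqref{eq:Z'0} gives $\int_0^\infty\|Ce^{-itH}u\|^2dt=\|u\|^2-\lim_t\|e^{-itH}u\|^2$, so on $\mathcal{H}_{\mathrm{d}}(H)$ --- precisely the subspace that is nontrivial here --- the constant equals $1$ exactly; and a bound valid only on $\mathrm{Ran}(\Pi)$ is useless inside the Dyson terms, because after one application of $C$ the intermediate vectors leave $\mathrm{Ran}(\Pi)$. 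Even in the no-capture case the paper needs $\tilde{\mathrm{c}}_0<1/\sqrt{2}$ to sum that series (Lemma \ref{thm:unif-boundL}), a strength of hypothesis neither assumed nor derivable here. The fix, which is Davies' argument and the one behind Lemma \ref{thm:existenceOmega+}, avoids the infinite series entirely by taking the Duhamel formula in the \emph{other} order, with the full evolution under the integral,
\begin{equation*}
\Pi\, e^{-it\mathcal{L}}\rho\,\Pi=\Pi e^{-itH}\rho e^{itH^*}\Pi+\int_0^t\Pi e^{-i(t-s)H}\,C(e^{-is\mathcal{L}}\rho)C^*\,e^{i(t-s)H^*}\Pi\,ds,
\end{equation*}
so that only the Hilbert-space wave operator acts on the outside; the remaining task is to prove $\int_0^\infty\|C(e^{-is\mathcal{L}}\rho)C^*\|_1\,ds<\infty$, which follows from a telescoping trace identity combined with the lower bound furnished by the closed range of the wave operator --- the property the paper explicitly flags as the main input to Theorem \ref{thm:capture}. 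Without replacing your geometric-series argument by this mechanism, your proof of existence of $\tilde\Omega^-(\mathcal{L}_0,\mathcal{L})$ does not close.
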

A key ingredient of the proof of Theorem \ref{thm:capture} is the following result on the scattering theory for dissipative operators, which is of some interest in its own right.
\begin{theorem}\label{thm:disswavemodified0}
Suppose that Assumptions \ref{V0} and \ref{V1} hold with $\mathrm{c}_V < 2$. Then the wave operator
\begin{equation*}
W_+( H , H_0 ):= \underset{t\to + \infty }{\slim} e^{ - i t H } e^{ i t H_0 },
\end{equation*}
exists on $\mathcal{H}$, is injective and its range is equal to
\begin{equation}\label{eq:range0}
\mathrm{Ran}( W_+( H , H_0 ) ) = \big ( \mathcal{H}_{ \mathrm{b} }( H ) \oplus \mathcal{H}_{ \mathrm{d} }( H^* ) \big )^\perp.
\end{equation}
\end{theorem}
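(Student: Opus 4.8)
The plan is to prove the existence of $W_+(H,H_0)$ via a Cook-type argument, and then to identify its range by combining abstract properties of dissipative semigroups with the known asymptotic completeness of $W_\pm(H_V,H_0)$.

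\textbf{Existence.} First I would factor $H = H_V - \tfrac{i}{2} C^*C$ and apply Cook's method in the form
\begin{equation*}
\frac{d}{dt}\big( e^{-itH} e^{itH_V} u \big) = -\tfrac{1}{2}\, e^{-itH} C^* C\, e^{itH_V} u ,
\end{equation*}
so that existence of $\underset{t\to+\infty}{\slim}\, e^{-itH}e^{itH_V}$ follows once $\int_0^\infty \| C e^{itH_V}\Pi_{\mathrm{ac}}(H_V) u\|\, dt$ is controlled --- but Assumption \ref{V1} only gives an $L^2$-in-time bound, not $L^1$. The standard fix is to use the a priori bound \eqref{eq:Z'''0}/\eqref{eq:Z''0} applied to $H$ (which always holds) together with the Kato-smoothness of $C$ relative to $H_V$: write $\| e^{-itH}C^*C e^{itH_V}u\| \le \|C e^{-itH^*}\cdot\| \cdot$ -- more precisely, one estimates $\langle \partial_t(e^{-itH}e^{itH_V}u), \text{same}\rangle$ and uses Cauchy--Schwarz to bound the integral of the cross term by $\tilde{\mathrm{c}}_0 \cdot \mathrm{c}_V$ times norms, closing the estimate when $\mathrm{c}_V < 2$ (which forces $\tilde{\mathrm{c}}_0 < 1$ by the first Remark item). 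Composing $\underset{t\to+\infty}{\slim}\, e^{-itH}e^{itH_V}$ with $W_+(H_V,H_0)$ gives existence of $W_+(H,H_0)$ on all of $\mathcal{H}$, since $\mathrm{Ran}(W_+(H_V,H_0)) = \Pi_{\mathrm{ac}}(H_V)\mathcal{H}$.

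\textbf{Injectivity.} Suppose $W_+(H,H_0)u = 0$. Since $e^{itH_0}$ is unitary and $e^{-itH}$ is a contraction, $\|e^{-itH}e^{itH_0}u\| \to 0$ would be needed, but in fact injectivity follows because $\|e^{-itH}e^{itH_0}u\|$ is asymptotically controlled from below: one shows $\|W_+(H,H_0)u\|^2 = \|u\|^2 - \int_0^\infty \|C e^{-isH}e^{isH_0}u\|^2 ds$ via the dissipation identity \eqref{eq:Z'0}, and an elementary argument (the intertwining relation together with $\mathrm{Ran}(W_+(H_V,H_0))\cap \mathrm{Ker}(\text{suitable operator}) $) shows the subtracted integral is strictly smaller than $\|u\|^2$ when $u\neq 0$. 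Alternatively, use that $W_+(H,H_0) = W_+(H,H_V) W_+(H_V,H_0)$, that the second factor is injective (unitary onto $\Pi_{\mathrm{ac}}$), and that the first is injective by the contraction/dissipation estimate with $\tilde{\mathrm{c}}_0<1$.

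\textbf{Range identification.} This is the main obstacle. The inclusion $\mathrm{Ran}(W_+(H,H_0)) \subseteq (\mathcal{H}_{\mathrm{b}}(H)\oplus \mathcal{H}_{\mathrm{d}}(H^*))^\perp$ should follow from the intertwining relation $e^{-isH} W_+(H,H_0) = W_+(H,H_0) e^{-isH_0}$: for $v = W_+(H,H_0)u$ and $w \in \mathcal{H}_{\mathrm{d}}(H^*)$, one has $\langle w, v\rangle = \lim_s \langle e^{isH^*}w, e^{isH_0}u\rangle$ -- wait, one must instead run the argument with $e^{-isH}v = W_+ e^{-isH_0}u$ and pair against $w\in\mathcal{H}_\mathrm{d}(H^*)$ to get $\langle w, v\rangle = \langle e^{isH^*}w, e^{-isH}v\rangle \to 0$; for $w\in\mathcal{H}_{\mathrm{b}}(H) = \mathcal{H}_{\mathrm{b}}(H^*)$ one uses that $e^{-isH}v$ should be asymptotically orthogonal to bound states, which follows because $e^{-isH_0}u$ spreads out (no eigenvalues of $H_0$) and $W_+$ intertwines. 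For the reverse inclusion, I would use the abstract decomposition of $\mathcal{H}$ under the dissipative semigroup $e^{-itH}$ (as in Davies, \cite{Davies1,Davies2}): $\mathcal{H} = \mathcal{H}_{\mathrm{b}}(H) \oplus \mathcal{H}_{\mathrm{d}}(H^*) \oplus \mathrm{Ran}(W_+(H,H_V))$ modulo closures, together with the already-known completeness $\mathrm{Ran}(W_+(H_V,H_0)) = \Pi_{\mathrm{ac}}(H_V)\mathcal{H}$ and $\mathrm{Ran}(W_+(H_0,H_V)) = \mathcal{H}$. Composing, $\mathrm{Ran}(W_+(H,H_0)) = \mathrm{Ran}(W_+(H,H_V))$, and the latter is exactly the orthogonal complement of the bound and decaying subspaces of $H^*$. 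The delicate point will be establishing that $\mathrm{Ran}(W_+(H,H_V))$ has this precise description --- this is essentially a dissipative analogue of asymptotic completeness and likely needs its own lemma (proved via a limiting absorption principle or via the smoothness bound with $\mathrm{c}_V<2$), which I expect is done in the body of the paper in the section on dissipative operators; I would cite that lemma here.
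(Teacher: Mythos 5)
Your existence and injectivity arguments track the paper's proof: factor through $H_V$ using Assumption \ref{V0}, run Cook's method for $e^{-itH}e^{itH_V}\Pi_{\mathrm{ac}}(H_V)$ with a Cauchy--Schwarz estimate pairing the a priori dissipation bound for $C e^{-isH}$ (the analogue of \eqref{eq:Z'''0}) against Assumption \ref{V1}, and get injectivity \emph{with closed range} from the quantitative lower bound $\|e^{-itH}e^{itH_V}\Pi_{\mathrm{ac}}(H_V)u\| \ge (1-\mathrm{c}_V/2)\|\Pi_{\mathrm{ac}}(H_V)u\|$. (Your invocation of ``$\tilde{\mathrm{c}}_0<1$ by the first Remark item'' is misplaced --- that remark concerns smoothness relative to $H_0$, not $H_V$ --- but the constant $1\cdot\mathrm{c}_V/2<1$ you actually need comes out of the Cauchy--Schwarz step anyway.) Your forward inclusion $\mathrm{Ran}(W_+(H,H_0))\subseteq(\mathcal{H}_{\mathrm{b}}(H)\oplus\mathcal{H}_{\mathrm{d}}(H^*))^\perp$ is also fine in outline; the paper routes the $\mathcal{H}_{\mathrm{b}}(H)$ part through Davies' identity $\mathcal{H}_{\mathrm{ac}}(H)=\mathcal{H}_{\mathrm{b}}(H)^\perp$ rather than a Riemann--Lebesgue argument, but both work.

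The genuine gap is the reverse inclusion. You invoke an ``abstract decomposition $\mathcal{H}=\mathcal{H}_{\mathrm{b}}(H)\oplus\mathcal{H}_{\mathrm{d}}(H^*)\oplus\mathrm{Ran}(W_+(H,H_V))$ modulo closures'' and then say the precise description of $\mathrm{Ran}(W_+(H,H_V))$ ``likely needs its own lemma\dots which I expect is done in the body of the paper; I would cite that lemma here.'' No such lemma exists, and such a decomposition is not a known abstract fact for dissipative semigroups --- it is essentially the statement being proved, so as written the argument is circular. The fix is already in your hands: since the range is \emph{closed} (from your lower bound), you have $\mathrm{Ran}(W_+(H,H_0))=\mathrm{Ker}(W_+(H,H_0)^*)^\perp$, and the adjoint is itself a wave operator, $W_+(H,H_0)^*=W_+(H_0,H^*)$, whose existence follows by the same Cook/Cauchy--Schwarz scheme (using compactness of $\Pi_{\mathrm{pp}}(H_V)$ and \eqref{eq:limitAC} to dispose of the point-spectral part). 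One then computes its kernel directly: since $e^{itH_0}$ is unitary, a vector $v$ is annihilated iff either it lies in $\mathcal{H}_{\mathrm{ac}}(H)^\perp=\mathcal{H}_{\mathrm{b}}(H)$ (killed by the projection built into the definition) or $\|e^{itH^*}v\|\to 0$, i.e.\ $v\in\mathcal{H}_{\mathrm{d}}(H^*)$; together with $\mathcal{H}_{\mathrm{d}}(H^*)\subset\mathcal{H}_{\mathrm{b}}(H)^\perp$ this gives $\mathrm{Ker}(W_+(H_0,H^*))=\mathcal{H}_{\mathrm{b}}(H)\oplus\mathcal{H}_{\mathrm{d}}(H^*)$ and hence \eqref{eq:range0}. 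Without replacing your appeal to the phantom decomposition lemma by this closed-range-plus-adjoint-kernel argument (or an equivalent one), the proof of the range identity is incomplete.
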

We observe that, under the assumptions of Theorem \ref{thm:disswavemodified0}, 
$\mathrm{Ran}( W_+( H , H_0 ) )$ is closed, which is the main property used in the proof of Theorem \ref{thm:capture}. The inclusion $\mathrm{Ran}( W_+( H , H_0 ) ) \subset ( \mathcal{H}_{ \mathrm{b} }( H ) \oplus \mathcal{H}_{ \mathrm{d} }( H^* ) )^\perp$ is easily verified. It would be interesting to find conditions implying that the converse inclusion holds, too, without assuming a bound such as $\mathrm{c}_V < 2$. 

We also mention that, for Schr{\"o}dinger operators, a particular case of \eqref{eq:range0} has been recently proven by Wang and Zhu \cite{WangZhu} under the assumption that the imaginary part, $C^*C$, of $H$ is a short range potential whose norm is smaller than $\varepsilon$, for some $\varepsilon >0$.

\subsection{Comparison with the literature and organization of the paper}

Scattering theory for quantum dynamical semigroups has been studied previously in \cite{Davies2,Alicki1,AlickiFrigerio,reuteler}. The general ideas of the approach developed in this paper have been pioneered by Davies \cite{Davies4,Davies1,Davies2}. However, the abstract model we study and the kind of assumptions underlying our analysis significantly differ from those in \cite{Davies4,Davies1,Davies2}. The model considered in \cite{Davies2}  involves a Lindblad generator of the form $\mathcal{Z} = \mathcal{Z}_0 + \mathcal{Z}_1 + \mathcal{Z}_2$ acting on the space, 
$\mathcal{J}_1 ( L^2( \mathbb{R}^3 ) \otimes \mathcal{H}_1 )$, of trace-class operators on the Hilbert 
space $L^2( \mathbb{R}^3 ) \otimes \mathcal{H}_1$, where $\mathcal{H}_1$ is some Hilbert space, 
$\mathcal{Z}_0 = \mathrm{ad}( - \Delta \otimes {\bf1})$ generates 
the dynamics of a free particle, $\mathcal{Z}_1= {\bf1} \otimes Z_1$, where $Z_1$ 
is a Lindblad operator of the form \eqref{Lili0} acting on
$\mathcal{J}_1 ( \mathcal{H}_1 )$, and $\mathcal{Z}_2$ is an interaction term. Suitable assumptions are made on $Z_1$ and $\mathcal{Z}_2$, and the proofs rely on Cook's method and the Kato-Birman theory.

In this paper we consider a more general class of Lindblad operators. Moreover, we heavily rely on the Kato smoothness estimates stated in Assumptions \ref{Z0} and \ref{V1}. We think that assumptions of the kind introduced in this paper are well-suited to study the scattering theory for Lindblad operators. Besides, in many concrete situations, one is able to verify Assumptions \ref{Z0} and \ref{V1} using standard tools of spectral theory. As far as we know, our results on the completeness and invertibility of the wave operators stated in Theorem \ref{thm:main0} do not appear to have been previously described in the literature.

Our paper is organized as follows. Sections \ref{sec:dissipative} and \ref{sec:Lindblad} are devoted to the proof of Theorem \ref{thm:main0}. In Section \ref{sec:dissipative}, we study scattering theory for the dissipative operator $H$, which is the main ingredient of the analysis presented in Section \ref{sec:Lindblad}, namely the study of scattering theory for Lindblad operators. In Section \ref{sec:example}, we describe a concrete model that can be analyzed with the help of Theorem \ref{thm:main0}. 
In Section \ref{sec:capture}, we study the phenomenon of capture and prove Theorem \ref{thm:capture}. 
To render our paper reasonably self-contained, we review some technical details, including various known results, in appendices. \\

\noindent{\bf Acknowledgments.} The research of J.F. is supported in part by ANR grant ANR-12-JS01-0008-01. The research of B.S. is supported in part by ``Region Lorraine''.

\section{Scattering theory for dissipative perturbations of self-adjoint operators}\label{sec:dissipative}
In our approach to the scattering theory of Lindblad operators, an important role is played by the auxiliary dissipative operator
\begin{equation}
H := H_0 - \frac{i}{2} C^* C
\end{equation}
acting on a Hilbert space $\mathcal{H}$, as already mentioned in the last section. Our main concern in this section is to study the wave operators
\begin{align}\label{eq:defwaves}
W_{ \pm }( H , H_0 ) := \underset{t\to \mp \infty }{\slim} e^{ i t H } e^{ - i t H_0 } , \qquad  W_{ \pm }( H_0 , H ) := \underset{t\to \mp \infty }{\slim} e^{ i t H_0 } e^{ - i t H }
\end{align}
and to elucidate some of their properties. For previous results concerning scattering theory for dissipative operators on Hilbert spaces we refer to \cite{Martin,Mochizuki,Davies4,Davies1,Simon2,Kadowaki}.

In this section we set $\| \cdot  \| = \| \cdot \|_{ \mathcal{H} }$ to simplify the notations.

\subsection{Basic facts about wave operators for $H$ and $H_0$}
We recall that $H_0$ is supposed to be a self-adjoint operator on $\mathcal{H}$. Its domain is denoted by $\mathcal{D}( H_0 )$. Since $C$ is assumed to be bounded, it follows that $H$ is closed with domain $\mathcal{D}( H ) = \mathcal{D}( H_0 )$. Moreover, $H$ is the generator of a one-parameter group, 
$\{ e^{ - i t H } \}_{ t \in \mathbb{R} }$, of operators satisfying the a priori bound 
\begin{equation*}
\big \| e^{ - i t H } u \big \| \le e^{ \frac12 \| C^* C \| | t | } \| u \|, \quad t \in \mathbb{R} ,
\end{equation*}
(see e.g. \cite{Phillips1}). The subspaces $\mathcal{D}_\pm( H , H_0 )$ and $\mathcal{D}_\pm( H_0 , H )$ are defined as the sets of vectors in $\mathcal{H}$ such that the limits defining $W_\pm( H, H_0 )$ and $W_\pm( H_0 , H )$ exist. We recall the following basic facts about wave operators.
\begin{proposition}\label{prop:01}
Suppose that $W_\pm( H , H_0 )$ and $W_\pm( H_0 , H )$ exist on $\mathcal{D}_\pm( H , H_0 )$ and\\
$\mathcal{D}_\pm( H_0 , H )$, respectively. Then 
\begin{equation*}
e^{ - i t H } \mathcal{D}_\pm( H_0 , H ) \subset \mathcal{D}_\pm( H_0 , H ),  \quad 
e^{ - i t H_0} \mathcal{D}_\pm( H , H_0 ) \subset \mathcal{D}_\pm( H , H_0 ), 
\end{equation*}
for all $t \in \mathbb{R}$, and
  \begin{align}
    e^{ - i t H_0 }&W_\pm( H_0 , H ) = W_\pm( H_0 , H ) e^{ - i t H }\; \text{ on } \mathcal{D}_\pm( H_0 , H ), \label{eq:intertwin1} \\
    e^{ - i t H }& W_\pm( H , H_0 ) = W_\pm( H , H_0 ) e^{ - i t H_0 } \; \text{ on } \mathcal{D}_\pm( H , H_0 ). \label{eq:intertwin2}
  \end{align}
  Furthermore, 
  \begin{equation*}
  W_\pm ( H_0 , H )[ \mathcal{D}_\pm( H_0 , H ) \cap \mathcal{D}( H_0 ) ] \subset \mathcal{D}( H_0 ), \text{   } 
  W_\pm(H,H_0)[ \mathcal{D}_\pm( H , H_0 )\cap \mathcal{D}(H_0)]\subset \mathcal{D}(H_0), 
  \end{equation*}
  and
  \begin{align}
&\forall u \in \mathcal{D}_\pm( H_0 , H ) \cap \mathcal{D}( H_0 )\;,\; H_0 W_\pm( H_0 , H) u = W_\pm ( H_0 , H ) H u \; ; \label{eq:intertwin3}\\
   &\forall u \in \mathcal{D}_\pm( H , H_0 ) \cap \mathcal{D}( H_0 )\;,\; H W_\pm( H , H_0 ) u = W_\pm( H , H_0 ) H_0 u \; .\label{eq:intertwin4}
  \end{align}
\end{proposition}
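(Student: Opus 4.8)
The statement to prove is Proposition~\ref{prop:01}, which collects the standard intertwining and invariance properties of wave operators for the pair $(H,H_0)$, where $H = H_0 - \tfrac{i}{2}C^*C$ is a dissipative perturbation of the self-adjoint operator $H_0$. These are all ``soft'' consequences of the defining limits together with the group/semigroup law and the a priori exponential bound $\|e^{-itH}u\|\le e^{\frac12\|C^*C\||t|}\|u\|$; no analytic input beyond the hypotheses is needed.

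\medskip

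\noindent\textbf{Plan.} First I would prove the invariance of the domains $\mathcal{D}_\pm(H_0,H)$ and $\mathcal{D}_\pm(H,H_0)$ under the respective free/full evolutions, and simultaneously the intertwining relations \eqref{eq:intertwin1}--\eqref{eq:intertwin2}. Take $u\in\mathcal{D}_+(H_0,H)$, so $s\text{-}\lim_{s\to-\infty}e^{isH_0}e^{-isH}u$ exists. Fix $t\in\mathbb{R}$ and consider, for $s\to-\infty$,
\begin{equation*}
e^{isH_0}e^{-isH}\bigl(e^{-itH}u\bigr) = e^{-itH_0}\,\bigl(e^{i(s+t)H_0}e^{-i(s+t)H}u\bigr).
\end{equation*}
Here I have used that $e^{isH_0}e^{-itH_0} = e^{-itH_0}e^{isH_0}$ and that the group laws $e^{isH_0}e^{itH_0}=e^{i(s+t)H_0}$ and $e^{-isH}e^{-itH}=e^{-i(s+t)H}$ hold for all real times (both $H_0$ and $H$ generate one-parameter \emph{groups}, by the discussion preceding the proposition). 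As $s\to-\infty$ the bracketed term converges to $W_+(H_0,H)u$, and $e^{-itH_0}$ is bounded (indeed unitary), so the left-hand side converges; this shows $e^{-itH}u\in\mathcal{D}_+(H_0,H)$ and $W_+(H_0,H)e^{-itH}u = e^{-itH_0}W_+(H_0,H)u$, which is \eqref{eq:intertwin1}. The same computation with the roles of $H$ and $H_0$ interchanged, now using the a priori bound to control $e^{-itH}$ as a bounded operator, gives the invariance of $\mathcal{D}_\pm(H,H_0)$ under $e^{-itH_0}$ and \eqref{eq:intertwin2}. The cases with the $-$ sign ($s\to+\infty$) are identical.

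\medskip

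\noindent\textbf{Differentiating the intertwining relations.} The domain statements $W_\pm(H_0,H)[\mathcal{D}_\pm(H_0,H)\cap\mathcal{D}(H_0)]\subset\mathcal{D}(H_0)$ and \eqref{eq:intertwin3}--\eqref{eq:intertwin4} follow by differentiating the intertwining identities in $t$ at $t=0$. Concretely, for $u\in\mathcal{D}_+(H_0,H)\cap\mathcal{D}(H_0)$ one has, from \eqref{eq:intertwin1},
\begin{equation*}
e^{-itH_0}W_+(H_0,H)u = W_+(H_0,H)e^{-itH}u,
\end{equation*}
and since $u\in\mathcal{D}(H)=\mathcal{D}(H_0)$ the right-hand side is differentiable at $t=0$ with derivative $-iW_+(H_0,H)Hu$ (using continuity of $W_+(H_0,H)$ as a bounded operator and strong differentiability of $t\mapsto e^{-itH}u$). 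Hence the left-hand side is differentiable at $t=0$, which by the standard characterization of the domain of a group generator forces $W_+(H_0,H)u\in\mathcal{D}(H_0)$, and equating derivatives gives $H_0 W_+(H_0,H)u = W_+(H_0,H)Hu$, i.e.\ \eqref{eq:intertwin3}. The argument for \eqref{eq:intertwin4} is the same, reading \eqref{eq:intertwin2} and differentiating, with the only subtlety that one should note $\|e^{-itH}\|$ is locally bounded in $t$ so the limit/derivative manipulations are legitimate. I would present these two differentiation arguments together since they are formally identical.

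\medskip

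\noindent\textbf{Main obstacle.} There is no deep obstacle here — the proposition is a bookkeeping lemma — but the one point that needs care, and which I would flag explicitly, is that $H$ generates a \emph{group} (not merely a semigroup): this is what makes the time shift $s\mapsto s+t$ legitimate for all $t\in\mathbb{R}$ and is exactly why the a priori bound is stated for $t\in\mathbb{R}$ in the paragraph before the proposition. Correspondingly, when differentiating, one must invoke the local boundedness of $t\mapsto\|e^{-itH}\|$ rather than a uniform bound. Everything else is a routine application of the group laws, strong continuity, and the closedness of the generators $H_0$ and $H$.
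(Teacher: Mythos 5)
Your argument is correct and is essentially the paper's own: the paper merely states that the proof ``follows from standard arguments'' and refers to the proof of Proposition \ref{prop:1}, which is precisely your time-shift computation for the intertwining relations followed by differentiation of the intertwined identity at $t=0$. The only point you assert rather than justify is the continuity of the wave operator on its domain of existence when passing the limit $t\to 0$ through it; this is automatic for $W_+(H,H_0)$ and $W_-(H_0,H)$ (strong limits of contractions) but not a priori for $W_-(H,H_0)$ and $W_+(H_0,H)$, whose approximants are not uniformly bounded --- though the paper's own sketch glosses over exactly the same point, and boundedness does hold in every situation where the proposition is applied.
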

\begin{proof}
The proof follows from standard arguments; (see the proof of Proposition \ref{prop:1} below.)
\end{proof}
In fact, since $\mathrm{Im} \langle u , H u \rangle = - \frac12 \| C u \|^2 \le 0$, for all $u \in \mathcal{D}( H )$, $H$ is dissipative, and hence the semi-group $\{ e^{ - i t H }Ê\}_{ t \ge 0 }$ is contractive,
\begin{equation}
\big \| e^{ - i t H } u \big \| \le \| u \|, \quad t \ge 0 , \label{eq:contrac}
\end{equation}
see, e.g., \cite{Engel}. In dissipative quantum scattering theory one studies the two wave operators 
$W_+( H , H_0 )$ and $W_-( H_0 , H )$. The contractivity of $\{ e^{ - i t H } \}_{ t \ge 0 }$ and unitarity of $\{ e^{ - i t H_0 } \}_{ t \in \mathbb{R} }$ show that $W_+( H , H_0 )$ and $W_-( H_0 , H )$ are contractions whenever they exist. In applications, the group $\{ e^{ - i t H_0 } \}_{ t \in \mathbb{R}Ê}$ is often given explicitly, and one can usually prove the existence of $W_+( H , H_0 )$ with the help of Cook's argument:
\begin{equation}
\begin{split}
& e^{ - i t H } e^{ i t H_0 } u = u - \frac12 \int_0^t e^{Ê- i s H }ÊC^* C e^{ i s H_0 }Êu ds, \\
& e^{ i t H_0 } e^{ - i t H } u = u - \frac12 \int_0^t e^{Êi s H_0 }ÊC^* C e^{Ê- i s H }Êu ds , \label{eq:cook0}
\end{split}
\end{equation}
for all $u \in \mathcal{H}$. A precise statement is the following proposition.
\begin{proposition}\label{prop:existenceCook}
Suppose that Assumption \ref{Y0} holds. Then $W_+( H , H_0 )$ exists on $\mathcal{H}$ and is injective.
\end{proposition}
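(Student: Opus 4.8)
The plan is to prove existence of $W_+(H,H_0)$ by Cook's method, using Assumption \ref{Y0} to control the integrand, and then to establish injectivity separately by exploiting the contractivity of $e^{-itH}$ and the intertwining relation.

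\textbf{Existence.} First I would fix $u \in \mathcal{E}$, the dense set provided by Assumption \ref{Y0}. From the first identity in \eqref{eq:cook0},
\[
e^{-itH}e^{itH_0}u - e^{-isH}e^{isH_0}u = -\frac12\int_s^t e^{-i\tau H}C^*C e^{i\tau H_0}u\, d\tau ,
\]
and since $\|e^{-i\tau H}\| \le 1$ for $\tau \ge 0$ by \eqref{eq:contrac}, the norm of the right-hand side is bounded by $\tfrac12\int_s^t \|C^*C e^{i\tau H_0}u\|\,d\tau$, which tends to $0$ as $s,t\to+\infty$ by \eqref{eq:Y0} (the integral over $\mathbb{R}$ being finite, the tails vanish). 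Hence $\{e^{-itH}e^{itH_0}u\}_{t\ge0}$ is Cauchy in $\mathcal{H}$ and converges. Since $t\mapsto e^{-itH}e^{itH_0}$ is uniformly bounded in $t\ge0$ (norm $\le 1$), the set of $u$ for which the limit exists is closed; as it contains the dense set $\mathcal{E}$, it equals $\mathcal{H}$. Therefore $W_+(H,H_0)$ exists on all of $\mathcal{H}$.

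\textbf{Injectivity.} This is the step requiring a genuine argument rather than a routine estimate, so I expect it to be the main obstacle. Suppose $W_+(H,H_0)u = 0$ for some $u\in\mathcal{H}$. The idea is to run time backwards along the free flow: for any $r\in\mathbb{R}$, apply Proposition \ref{prop:01} (the intertwining relation \eqref{eq:intertwin2}, valid on $\mathcal{D}_+(H,H_0)=\mathcal{H}$) to get
\[
W_+(H,H_0)e^{-irH_0}u = e^{-irH}W_+(H,H_0)u = 0 .
\]
Now write $e^{-itH}e^{itH_0}(e^{-irH_0}u) = e^{-itH}e^{i(t-r)H_0}u = e^{irH}\bigl(e^{-i(t-r)H}e^{i(t-r)H_0}u\bigr)$; letting $t\to+\infty$ (so $t-r\to+\infty$) and using continuity of the bounded operator $e^{irH}$, this converges to $e^{irH}W_+(H,H_0)u$, which is $0$ by the above. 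On the other hand, for $r$ fixed and $t$ large enough that $t-r\ge 0$, the norm identity obtained from \eqref{eq:Z'0} (or directly from $\tfrac{d}{d\tau}\|e^{-i\tau H}v\|^2 = -\|Ce^{-i\tau H}v\|^2$) gives, with $v = e^{i(t-r)H_0}u$ and $\sigma = t-r$,
\[
\|u\|^2 = \|e^{i\sigma H_0}u\|^2 = \|e^{-i\sigma H}e^{i\sigma H_0}u\|^2 + \int_0^\sigma \|Ce^{-i\tau H}e^{i\sigma H_0}u\|^2\,d\tau \ge \|e^{-i\sigma H}e^{i\sigma H_0}u\|^2 .
\]
As $\sigma\to+\infty$ the right-hand side tends to $\|W_+(H,H_0)u\|^2 = 0$. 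This alone does not force $u=0$; so the correct route is: the convergence $e^{-itH}e^{itH_0}u\to 0$ means that for large $t$ the vector $e^{itH_0}u$ has small norm after applying $e^{-itH}$, but since $e^{-itH}$ is merely a contraction (not unitary), mass can be lost. To rule out $u\neq 0$, I would instead argue that $W_+(H,H_0)$ is injective because $\|e^{-itH}e^{itH_0}u\|$ is actually \emph{monotone decreasing} in $t$ (from the displayed identity, differentiating in $t$), hence $\|W_+(H,H_0)u\| = \inf_{t\ge0}\|e^{-itH}e^{itH_0}u\| = \lim_{t\to+\infty}\|e^{-itH}e^{itH_0}u\|$; combining this with the intertwining relation $W_+(H,H_0)e^{-irH_0}u = e^{-irH}W_+(H,H_0)u$ and taking $r\to+\infty$ along the free group, the unitarity of $e^{-irH_0}$ preserves $\|e^{-irH_0}u\| = \|u\|$ while $\|e^{-irH}W_+(H,H_0)u\|\le\|W_+(H,H_0)u\|$, and a closer bookkeeping of the total mass lost, $\|u\|^2 - \|W_+(H,H_0)u\|^2 = \int_0^\infty\|Ce^{-i\tau H}e^{i\tau H_0}u\|^2 d\tau$ together with the analogous integral for $e^{-irH_0}u$, shows the mass-loss integral for $e^{-irH_0}u$ is the tail $\int_r^\infty\|Ce^{-i\tau H}e^{i\tau H_0}u\|^2 d\tau \to 0$; hence $\|W_+(H,H_0)e^{-irH_0}u\| \to \|u\|$, but the left side equals $\|e^{-irH}W_+(H,H_0)u\| \le \|W_+(H,H_0)u\|$, forcing $\|u\| \le \|W_+(H,H_0)u\|$, i.e. equality of norms, whence $u = 0$ whenever $W_+(H,H_0)u = 0$.

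\textbf{Remark on the obstacle.} The existence half is entirely standard Cook theory; the only subtlety is that one works with the \emph{contractive} semigroup $e^{-itH}$ for $t\ge 0$, which is precisely what makes the a priori bound $\|e^{-itH}\|\le 1$ available and keeps the Cook integral under control. The injectivity half is the delicate point: because $e^{-itH}$ is not invertible with uniformly bounded inverse, one cannot simply compose with $e^{itH}$ to recover $u$; the argument must instead track the conserved/lost mass via identity \eqref{eq:Z'0} and feed it into the intertwining relation, which is why I would organize the proof around monotonicity of $t\mapsto\|e^{-itH}e^{itH_0}u\|$ and the exact mass-loss formula rather than a soft functional-analytic argument.
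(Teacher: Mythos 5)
Your existence argument is correct and is exactly the paper's: Cook's identity \eqref{eq:cook0} on the dense set $\mathcal{E}$, the contraction bound \eqref{eq:contrac} to control the tail, and a density argument using the uniform bound $\|e^{-itH}e^{itH_0}\|\le 1$. The gap is in the injectivity half, where the two analytic facts you rely on are not established and are in general false. First, $t\mapsto\|e^{-itH}e^{itH_0}u\|$ need not be monotone: differentiating gives
\begin{equation*}
\frac{d}{dt}\big\|e^{-itH}e^{itH_0}u\big\|^2=-\,\mathrm{Re}\,\big\langle e^{itH_0}u,\,e^{itH^*}e^{-itH}\,C^*C\,e^{itH_0}u\big\rangle ,
\end{equation*}
and for $t>0$ the positive operator $e^{itH^*}e^{-itH}$ is not the identity, so the product $e^{itH^*}e^{-itH}C^*C$ has no definite sign. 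Second, the ``exact mass-loss formula'' $\|u\|^2-\|W_+(H,H_0)u\|^2=\int_0^\infty\|Ce^{-i\tau H}e^{i\tau H_0}u\|^2\,d\tau$ does not follow from \eqref{eq:Z'0}: the identity one actually gets is $\|e^{-i\sigma H}e^{i\sigma H_0}u\|^2=\|u\|^2-\int_0^\sigma\|Ce^{-i\tau H}e^{i\sigma H_0}u\|^2\,d\tau$, with the free evolution frozen at time $\sigma$ inside the integrand, and the limit $\sigma\to\infty$ cannot be pushed inside. Third, you conjugate by the \emph{backward} free flow: for $W_+(H,H_0)e^{-irH_0}u$ the Cook remainder is $\tfrac12\int_{-r}^{\infty}e^{-i(s+r)H}C^*Ce^{isH_0}u\,ds$, whose norm does \emph{not} tend to $0$ as $r\to+\infty$ (the integration region grows to all of $\mathbb{R}$). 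Indeed, if your final chain of inequalities were valid it would yield $\|u\|\le\|W_+(H,H_0)u\|$ for every $u$, i.e.\ that $W_+(H,H_0)$ is an isometry, which is false for genuinely dissipative $H$.

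The repair is short and is precisely Lemma \ref{lm:limits} of the appendix: conjugate by the \emph{forward} free flow. From \eqref{eq:cook0} and a change of variables,
\begin{equation*}
W_+(H,H_0)e^{itH_0}u=e^{itH_0}u-\frac12\int_t^{\infty}e^{-i(s-t)H}C^*Ce^{isH_0}u\,ds ,
\end{equation*}
and Assumption \ref{Y0} makes the tail vanish, so $\|W_+(H,H_0)e^{itH_0}u\|\to\|u\|$ as $t\to+\infty$, first for $u\in\mathcal{E}$ and then for all $u$ by density. If now $W_+(H,H_0)u=0$, the intertwining relation of Proposition \ref{prop:01} gives $W_+(H,H_0)e^{itH_0}u=e^{itH}W_+(H,H_0)u=0$ for all $t\ge 0$, and letting $t\to\infty$ forces $\|u\|=0$. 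You in fact wrote down this intertwining identity at the start of your injectivity discussion; what is missing is the tail estimate above, which is the one genuine ingredient beyond Cook's method.
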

\begin{proof}
The existence of $W_+( H , H_0 )$ is an obvious consequence of \eqref{eq:cook0} and Assumption \ref{Y0}. The injectivity is proven in \cite{Martin} or \cite{Davies1}, see also Appendix \ref{app:dissipative}.
\end{proof}
Next, we show that, if $C$ is $H_0$-smooth in the sense of Assumption \ref{Z0}, then $W_{ + }( H , H_0 )$ and $W_{ - }( H_0 , H )$ exist. The proof uses \eqref{eq:Z'0} together with a well-known argument.
\begin{proposition}\label{prop:existence}
Suppose that Assumption \ref{Z0} holds. Then the wave operators $W_{ + }( H , H_0 )$ and $W_{ - }( H_0 , H )$ exist on $\mathcal{H}$. Moreover $W_+( H , H_0 )$ is injective and $\mathrm{Ran}( W_-( H_0 , H ) )$ is dense in $\mathcal{H}$.
\end{proposition}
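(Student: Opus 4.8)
The plan is to obtain the existence of $W_+(H,H_0)$ and $W_-(H_0,H)$ from Cook's formula \eqref{eq:cook0} by a standard duality argument based on the Kato smoothness bound of Assumption \ref{Z0} together with the a priori bound \eqref{eq:Z'''0}, and then to read off injectivity of $W_+(H,H_0)$ and density of $\mathrm{Ran}(W_-(H_0,H))$ from general features of dissipative scattering theory. Since $C$ is bounded, \eqref{eq:cook0} and all estimates below hold on all of $\mathcal H$, so no reduction to a dense subset is needed.

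For the existence of $W_+(H,H_0)$, which equals $\slim_{t\to+\infty}e^{-itH}e^{itH_0}$, I would start from the first line of \eqref{eq:cook0}: for $0\le t\le t'$ the difference $e^{-it'H}e^{it'H_0}u-e^{-itH}e^{itH_0}u$ equals $-\tfrac12\int_t^{t'}e^{-isH}C^*Ce^{isH_0}u\,ds$. Testing against a unit vector $v$ and using $(e^{-isH})^*=e^{isH^*}$, the integrand becomes $\langle Ce^{isH^*}v,Ce^{isH_0}u\rangle$, so by Cauchy--Schwarz the norm of the difference is at most $\tfrac12\big(\int_t^{t'}\|Ce^{isH^*}v\|^2\,ds\big)^{1/2}\big(\int_t^{t'}\|Ce^{isH_0}u\|^2\,ds\big)^{1/2}$. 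The first factor is $\le\|v\|=1$, since the identity \eqref{eq:Z'0} applies with $H$ replaced by the dissipative operator $-H^*=-H_0-\tfrac i2C^*C$, whose semigroup $e^{-is(-H^*)}=e^{isH^*}$ is contractive for $s\ge0$, giving $\int_0^\infty\|Ce^{isH^*}v\|^2\,ds\le\|v\|^2$. The second factor tends to $0$ as $t\to+\infty$ because $\int_{\mathbb R}\|Ce^{isH_0}u\|^2\,ds<\infty$ by Assumption \ref{Z0}. Hence $t\mapsto e^{-itH}e^{itH_0}u$ is Cauchy, and $W_+(H,H_0)$ exists on $\mathcal H$. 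The argument for $W_-(H_0,H)=\slim_{t\to+\infty}e^{itH_0}e^{-itH}$ is parallel, starting from the second line of \eqref{eq:cook0}: the norm of $e^{it'H_0}e^{-it'H}u-e^{itH_0}e^{-itH}u$ is at most $\tfrac12\,\mathrm c_0\big(\int_t^{t'}\|Ce^{-isH}u\|^2\,ds\big)^{1/2}$, the constant $\mathrm c_0$ coming from Assumption \ref{Z0} applied to the $e^{-isH_0}$ factor and the remaining integral vanishing as $t\to+\infty$ by \eqref{eq:Z'''0}.

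Injectivity of $W_+(H,H_0)$ I would take over from Proposition \ref{prop:existenceCook}: once this wave operator exists, its injectivity is a general, non-perturbative fact about dissipative perturbations of self-adjoint operators (see \cite{Martin,Davies1} and Appendix \ref{app:dissipative}). For the density of $\mathrm{Ran}(W_-(H_0,H))$ I would pass to the adjoint: since $\|e^{itH_0}e^{-itH}\|\le1$ for $t\ge0$, $(e^{itH_0}e^{-itH})^*=e^{itH^*}e^{-itH_0}$, and $\slim_{t\to+\infty}e^{itH^*}e^{-itH_0}$ exists by exactly the same Cook/Cauchy--Schwarz estimate as above, one gets $W_-(H_0,H)^*=\slim_{t\to+\infty}e^{itH^*}e^{-itH_0}$. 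Writing $e^{itH^*}=e^{-it\tilde H}$ with $\tilde H:=-H^*=(-H_0)-\tfrac i2C^*C$ dissipative and $e^{-itH_0}=e^{it(-H_0)}$, this limit is precisely the dissipative wave operator $W_+(\tilde H,-H_0)$; moreover $-H_0$ satisfies Assumption \ref{Z0} with the same constant $\mathrm c_0$ (substitute $s\mapsto-s$ in \eqref{eq:Z0}). Hence $W_-(H_0,H)^*=W_+(\tilde H,-H_0)$ is injective by the result just invoked, and therefore $\overline{\mathrm{Ran}(W_-(H_0,H))}=\ker\big(W_-(H_0,H)^*\big)^\perp=\mathcal H$.

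The existence part is routine — a textbook Cook-plus-Kato-smoothness estimate; the only care needed is the bookkeeping of the contractive adjoint semigroup $\{e^{isH^*}\}_{s\ge0}$ and of which of \eqref{eq:Z0}, \eqref{eq:Z'''0} is available on each factor. The genuinely delicate point is the density of $\mathrm{Ran}(W_-(H_0,H))$: a direct estimate only gives, for $v\in\ker(W_-(H_0,H)^*)$, the inequality $\|v\|^2=\tfrac12\int_0^\infty\langle Ce^{-isH}v,Ce^{-isH_0}v\rangle\,ds\le\tfrac{\mathrm c_0}{2}\|v\|^2$, which forces $v=0$ only under the smallness condition $\mathrm c_0<2$ that we do not wish to impose here. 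One is thus obliged to reroute through the adjoint and borrow the non-perturbative injectivity of $W_+$-type dissipative wave operators; the step to verify carefully is that the strong limit defining $W_+(\tilde H,-H_0)$ genuinely exists, so that it legitimately represents $W_-(H_0,H)^*$ and not merely a weak limit of adjoints.
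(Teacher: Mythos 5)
Your proposal is correct and follows essentially the same route as the paper: existence of both wave operators via Cook's formula, Cauchy--Schwarz, Assumption \ref{Z0} on the $H_0$-factor and the a priori bound \eqref{eq:Z'''0} (resp.\ its adjoint version \eqref{eq:adjoint0}) on the $H$-factor; injectivity of $W_+(H,H_0)$ from the non-perturbative result of Appendix \ref{app:dissipative}; and density of $\mathrm{Ran}(W_-(H_0,H))$ by showing the adjoint wave operator $\slim_{t\to+\infty}e^{itH^*}e^{-itH_0}$ exists and is injective, so that $\overline{\mathrm{Ran}(W_-(H_0,H))}=\ker(W_-(H_0,H)^*)^\perp=\mathcal{H}$. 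Your rewriting of that adjoint limit as $W_+(-H^*,-H_0)$ is just a cleaner way of justifying the paper's phrase ``in the same way one can verify that $W_-(H^*,H_0)$ exists and is injective,'' and your closing remark correctly identifies why the naive kernel estimate would need $\mathrm{c}_0<2$.
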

\begin{proof}
We establish existence of $W_-( H_0 , H )$; (existence of $W_{+}(H,H_0)$ is proven similarly). We use Cook's argument, see \eqref{eq:cook0}, and write 
\begin{align*}
 \Big \| \int_{t_1}^{t_2} e^{Êi s H_0 }ÊC^* C e^{Ê- i s H }Êu ds \Big \| 
&\le \sup_{Êv \in \mathcal{H} , \| v \|Ê= 1 } \int_{t_1}^{t_2} \big | \big \langle C e^{Ê- i s H_0 }Êv , C e^{Ê- i s H }Êu \big \rangle \big | ds \\
&\le \sup_{Êv \in \mathcal{H} , \| v \|Ê= 1 } \Big ( \int_{t_1}^{t_2} \big \| C e^{Ê- i s H_0 }Êv \|^2 ds \Big )^{\frac12} \Big ( \int_{t_1}^{t_2} \| C e^{Ê- i s H }Êu \|^2 ds \Big )^{ \frac12 } ,
\end{align*}
for all $u \in \mathcal{H}$, and for $0 < t_1 < t_2 < \infty$.
Since the two integrals on the right side converge on $[ 0 , \infty )$, by Assumption \ref{Z0} and \eqref{eq:Z'0}, we conclude that $\int_0^{t_n} e^{Êi s H_0 }ÊC^* C e^{Ê- i s H }Êu \, ds$ is a Cauchy sequence, for any sequence of times $(t_n)$ with $t_n \to \infty$, and hence that $W_-( H_0 , H )$ exists on 
$\mathcal{H}$. 

Injectivity of the wave operator $W_+( H , H_0 )$ is proven in Proposition \ref{prop:injectivity} of Appendix \ref{app:dissipative}.

To prove that $\mathrm{Ran}( W_-( H_0 , H ) )$ is dense in $\mathcal{H}$, we consider the adjoint wave operator $W_-( H^* , H_0 ) = \lim_{ t \to \infty } e^{ i t H^* } e^{ - i t H_0 }$. As in \eqref{eq:Z'0}, we have that
\begin{align}\label{eq:adjoint0}
\int_0^t \big \|ÊC e^{ i s H^* } u \big \|^2 ds = - \int_0^t \partial_s \big \langle u , e^{ - i s H } e^{ i s H^* } u \rangle ds = \| u \|^2 - \big \|Êe^{ i t H^*}u \big \|^2 \le \| u \|^2 ,
\end{align}
for all $u \in \mathcal{H}$. In the same way as for $W_+( H , H_ 0 )$, one can then verify that 
$W_-( H^* , H_0 )$ exists and is injective on $\mathcal{H}$. Using now that
\begin{equation*}
\mathrm{Ran}( W_-( H_0 , H ) )^\perp = \mathrm{Ker}( W_-( H^* , H_0 ) ) ,
\end{equation*}
we conclude that $\mathrm{Ran}( W_-( H_0 , H ) )^\perp = \{ 0 \}$, as claimed.
\end{proof}

\subsection{Smooth perturbations}
We will see that if the constant $\tilde{\mathrm{c}}_0$ in \eqref{eq:Z''0} is strictly less than $1$, or if Assumption \ref{Z0} holds with $\mathrm{c}_0 < 2$, then the four wave operators defined in \eqref{eq:defwaves} exist on $\mathcal{H}$, although $W_-( H , H_0 )$ and $W_+( H_0 , H )$ are in general not contractive. 

The results of this section are related to results of Kato \cite{Kato1}, whose results are more general, in the sense that he does not assume that $H_0$ is self-adjoint; it suffices to assume that the spectrum of $H_0$ is contained in the real axis. However, the proof in \cite{Kato1} requires the stronger assumption that $C$ is ``$H_0$-supersmooth'', (a terminology introduced in \cite{KatoYajima}), which means that 
$\sup_{ z \in \mathbb{C} \setminus \mathbb{R} } \| C ( H_0 - z )^{-1} C^* \| < \infty$. Kato's approach is stationary. In this paper, we employ a time-dependent method. We draw the reader's attention to a paper by Lin \cite{Lin}, which also follows a time-dependent approach, using a Dyson series, and is formulated in the general context of semi-groups in reflexive Banach spaces; (see, e.g., Evans \cite{Evans} for a generalization to non-reflexive Banach spaces). The assumptions in \cite{Lin} are stronger, though, and our proofs are much simpler, because we can take advantage of the Hilbert space formalism. 

We begin with proving that, if $\tilde{\mathrm{c}}_0$ in \eqref{eq:Z''0} is strictly less than $1$, then the inverse semigroup $\{ e^{i t H} \}_{ t \ge 0 }$ is uniformly bounded and $C$ is $H$-smooth.
\begin{lemma}\label{lm:ABCD}
Suppose that inequality \eqref{eq:Z''0} holds, with $\tilde{\mathrm{c}}_0 < 1$. Then the group $\{ e^{ -i t H } \}_{ t \in \mathbb{R} }$ is uniformly bounded,
\begin{equation}\label{eq:adjoint1}
\big \| e^{ - i t H }\big\|_{ \mathcal{B}( \mathcal{H} ) } \le ( 1 - \tilde{\mathrm{c}}_0^2 )^{-\frac12} , \quad t \in \mathbb{R}.
\end{equation}
Moreover, we have that
\begin{equation}\label{eq:ABC0}
\int_0^\infty \big \| C e^{ i t H }Êu \big \|^2 dt \le \frac{ \tilde{ \mathrm{c} }_0^2 }{ 1 - \tilde{\mathrm{c}}_0^2 } \| u \|^2 ,
\end{equation}
for all $u \in \mathcal{H}$.
Conversely, if there exists $m > 1$ such that
\begin{equation}\label{eq:adjointAB1}
\big \| e^{ - i t H }\big\|_{ \mathcal{B}( \mathcal{H} ) } \le m , \quad t \in \mathbb{R} ,
\end{equation}
then \eqref{eq:Z''0} is satisfied with $\tilde{\mathrm{c}}_0 = ( 1 - m^{-2} )^{1/2} < 1$.
\end{lemma}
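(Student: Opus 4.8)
The plan is to derive all three assertions from a single energy identity: differentiating $s \mapsto \| e^{ i s H } u \|^2$ and using $\mathrm{Im}\langle H w , w\rangle = -\tfrac12\| C w\|^2$ (exactly as in \eqref{eq:Z'0}) one gets
\begin{equation*}
\big\| e^{ i s H } u \big\|^2 = \| u \|^2 + \int_0^s \big\| C e^{ i r H } u \big\|^2 \, dr , \qquad s \ge 0 ,\ u \in \mathcal{H} ,
\end{equation*}
all terms being finite thanks to the a priori exponential bound on $\| e^{ - i t H } \|$ recalled above. Everything else is bookkeeping around this identity and the hypothesis \eqref{eq:Z''0}.

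To prove \eqref{eq:adjoint1} I would fix $s > 0$, apply \eqref{eq:Z''0} to the vector $v := e^{ i s H } u$, and use the group law $e^{ - i r H } e^{ i s H } = e^{ i ( s - r ) H }$ together with the substitution $\sigma = s - r$ to rewrite
\begin{equation*}
\int_0^\infty \big\| C e^{ - i r H } v \big\|^2 dr = \int_{ - \infty }^{ s } \big\| C e^{ i \sigma H } u \big\|^2 d\sigma \;\ge\; \int_0^s \big\| C e^{ i \sigma H } u \big\|^2 d\sigma ,
\end{equation*}
so that \eqref{eq:Z''0} gives $\int_0^s \| C e^{ i \sigma H } u \|^2 d\sigma \le \tilde{\mathrm{c}}_0^2 \| e^{ i s H } u \|^2$. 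Feeding this back into the energy identity yields $( 1 - \tilde{\mathrm{c}}_0^2 ) \| e^{ i s H } u \|^2 \le \| u \|^2$, hence the bound $( 1 - \tilde{\mathrm{c}}_0^2 )^{-1/2}$ on $\| e^{ i s H } \| = \| e^{ - i t H } \|$ for $t = -s \le 0$; since $\{ e^{ - i t H } \}_{ t \ge 0 }$ is contractive by \eqref{eq:contrac}, this is \eqref{eq:adjoint1}. For \eqref{eq:ABC0} I would simply let $s \to + \infty$ in the energy identity: the left-hand side is monotone and, by the bound just proved, bounded by $( 1 - \tilde{\mathrm{c}}_0^2 )^{-1} \| u \|^2$, so $\int_0^\infty \| C e^{ i r H } u \|^2 dr \le \big( ( 1 - \tilde{\mathrm{c}}_0^2 )^{-1} - 1 \big) \| u \|^2$, which is \eqref{eq:ABC0}.

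For the converse, assuming \eqref{eq:adjointAB1}, I would use that $\{ e^{ - i t H } \}$ is a group to get $\| u \| \le m \| e^{ - i t H } u \|$, hence $\| e^{ - i t H } u \|^2 \ge m^{-2} \| u \|^2$ for all $t$; then \eqref{eq:Z'0} shows that $t \mapsto \| e^{ - i t H } u \|^2$ is non-increasing on $[0,\infty)$ and thus has a limit $\ell \ge m^{-2} \| u \|^2$, and the same relation gives $\int_0^\infty \| C e^{ - i s H } u \|^2 ds = \| u \|^2 - \ell \le ( 1 - m^{-2} ) \| u \|^2$; since $\tilde{\mathrm{c}}_0$ is the smallest constant for which \eqref{eq:Z''0} holds, this gives $\tilde{\mathrm{c}}_0 \le ( 1 - m^{-2} )^{1/2} < 1$.

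I do not expect a real obstacle here. The only points to be careful about are (i) checking that all the integrals and norms above are finite \emph{before} the bootstrap closing the estimate --- which is exactly the role of the a priori exponential bound --- and (ii) the change-of-variable step, where one must notice that \eqref{eq:Z''0} applied to $e^{ i s H } u$ yields an integral over $( - \infty , s ]$ whose restriction to $[ 0 , s ]$ is precisely the term appearing in the energy identity.
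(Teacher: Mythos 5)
Your proof is correct and follows essentially the same route as the paper's: both rest on the energy identity $\partial_s\|e^{isH}u\|^2=\|Ce^{isH}u\|^2$ (i.e.\ \eqref{eq:Z'0} and its forward-time analogue) combined with applying \eqref{eq:Z''0} to the evolved vector $e^{isH}u$, the only cosmetic difference being that the paper first recasts \eqref{eq:Z''0} as the equivalent lower bound $\|e^{-itH}u\|^2\ge(1-\tilde{\mathrm{c}}_0^2)\|u\|^2$ before substituting, whereas you substitute first and then change variables. Your remarks on finiteness via the a priori exponential bound and on reading the converse as $\tilde{\mathrm{c}}_0\le(1-m^{-2})^{1/2}$ are both accurate.
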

\begin{proof}
Using \eqref{eq:Z'0}, we see that \eqref{eq:Z''0} is equivalent to
\begin{align*}
\big \|Êe^{ - i t H }Êu \big \|^2 \ge ( 1 - \tilde{\mathrm{c}}_0^2 ) \|Êu \|^2 ,
\end{align*}
for all $t \ge 0$ and all $u \in \mathcal{H}$. Equivalently,
\begin{align*}
\big \|Êe^{ i t H }Êu \big \| \le ( 1 - \tilde{\mathrm{c}}_0^2 )^{-\frac12} \|Êu \| ,
\end{align*}
for all $t \ge 0$ and all $u \in \mathcal{H}$. Therefore the assumption that \eqref{eq:Z''0} holds, for some
$\tilde{\mathrm{c}}_0 < 1$, is equivalent to the assumption that \eqref{eq:adjoint1} is satisfied, for all $t \in \mathbb{R}$. The statement that \eqref{eq:adjointAB1} implies \eqref{eq:Z''0} with $\tilde{\mathrm{c}}_0 = ( 1 - m^{-2} )^{1/2}$ is proven in the same way.

The bound \eqref{eq:ABC0} follows by noticing that
\begin{equation*}
\int_0^t \big \| C e^{ i s H }Êu \big \|^2 ds = \int_0^t \partial_s \big \| e^{ i s H }Êu \big \|^2 ds = \big \| e^{ i t H }Êu \big \|^2 - \| u \|^2 .
\end{equation*}
\end{proof}
The previous lemma allows us to establish the invertibility of the wave operators, and therefore the similarity of $H$ and $H_0$.
\begin{theorem}\label{thm:invertibility}
Suppose that Assumption \ref{Z0} is satisfied and that \eqref{eq:Z''0} holds with $\tilde{\mathrm{c}}_0 < 1$. Then the wave operators $W_\pm( H , H_0 )$ and $W_\pm( H_0  ,H )$ exist on $\mathcal{H}$ and are invertible in $\mathcal{B}( \mathcal{H} )$ and are inverses of each other,
\begin{align}
W_{ \pm } ( H , H_0 )^{-1} = W_{ \pm }( H_0 , H ). \label{eq:invert}
\end{align}
Moreover, the four wave operators leave the domain $\mathcal{D}( H_0 ) = \mathcal{D}( H )$ invariant, and the following intertwining property holds on $\mathcal{D}(H)$:
\begin{align}\label{eq:intertwing}
H  = W_{ \pm }( H , H_0 ) H_0 W_{ \pm } ( H_0 , H ).
\end{align}
\end{theorem}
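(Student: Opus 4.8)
The plan is to establish, in order: (i) existence on $\mathcal{H}$ of the two wave operators not yet covered by Proposition \ref{prop:existence}, namely $W_-(H,H_0)=\slim_{s\to+\infty}e^{isH}e^{-isH_0}$ and $W_+(H_0,H)=\slim_{s\to-\infty}e^{isH_0}e^{-isH}$; (ii) the chain rule $W_\pm(H,H_0)W_\pm(H_0,H)=\mathrm{Id}=W_\pm(H_0,H)W_\pm(H,H_0)$ on $\mathcal{H}$, which gives \eqref{eq:invert}; and (iii) the domain invariance and the intertwining identity \eqref{eq:intertwing}.

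For step (i) I would rerun the Cook argument of the proof of Proposition \ref{prop:existence}, the new input being Lemma \ref{lm:ABCD}. The bound \eqref{eq:Z''0} with $\tilde{\mathrm{c}}_0<1$ makes $\{e^{-itH}\}_{t\in\mathbb{R}}$ uniformly bounded by \eqref{eq:adjoint1} and, via \eqref{eq:ABC0}, gives $\int_0^\infty\|Ce^{itH}u\|^2\,dt\le\tilde{\mathrm{c}}_0^2(1-\tilde{\mathrm{c}}_0^2)^{-1}\|u\|^2$; dualising and using \eqref{eq:adjoint0} together with the uniform bound applied to $e^{-itH^*}=(e^{itH})^*$ yields $\int_{\mathbb{R}}\|Ce^{-itH^*}u\|^2\,dt<\infty$. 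Since Assumption \ref{Z0} makes $C$ global-in-time $H_0$-smooth, the Cauchy--Schwarz estimate of Proposition \ref{prop:existence} applied to $\int_{t_1}^{t_2}e^{isH}C^*Ce^{-isH_0}u\,ds$ (controlled by the $H^*$- and $H_0$-smoothness of $C$) and to $\int_{t_1}^{t_2}e^{isH_0}C^*Ce^{-isH}u\,ds$ (controlled by the $H_0$- and $H$-smoothness of $C$) over the relevant half-lines shows these are Cauchy, so all four limits in \eqref{eq:defwaves} exist on $\mathcal{H}$, and all four operators are bounded by $(1-\tilde{\mathrm{c}}_0^2)^{-1/2}$.

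For step (ii) I would use the standard telescoping identity. Writing $P=W_\pm(H,H_0)$, $Q=W_\pm(H_0,H)$, $w=Pu$, and taking $t\to\mp\infty$ according to the sign,
\begin{equation*}
e^{itH_0}e^{-itH}\,w-u=e^{itH_0}e^{-itH}\bigl(w-e^{itH}e^{-itH_0}u\bigr),
\end{equation*}
so $\|e^{itH_0}e^{-itH}w-u\|\le\|e^{-itH}\|\,\|w-e^{itH}e^{-itH_0}u\|\le(1-\tilde{\mathrm{c}}_0^2)^{-1/2}\|w-e^{itH}e^{-itH_0}u\|\to0$ by \eqref{eq:adjoint1} and the definition of $P$; hence $QP=\mathrm{Id}$. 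Exchanging the roles of $H$ and $H_0$ (and using $\|e^{isH}\|\le(1-\tilde{\mathrm{c}}_0^2)^{-1/2}$) gives $PQ=\mathrm{Id}$ in the same way. Since $P$ and $Q$ are bounded and mutually inverse, they are invertible in $\mathcal{B}(\mathcal{H})$ and \eqref{eq:invert} holds. For step (iii), Proposition \ref{prop:01} now shows (the intersections with $\mathcal{D}(H_0)$ appearing there being all of $\mathcal{D}(H_0)$, since the wave operators exist everywhere) that both $W_\pm(H,H_0)$ and $W_\pm(H_0,H)$ map $\mathcal{D}(H_0)$ into itself; combined with \eqref{eq:invert} this forces $W_\pm(H,H_0)\mathcal{D}(H_0)=W_\pm(H_0,H)\mathcal{D}(H_0)=\mathcal{D}(H_0)=\mathcal{D}(H)$. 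Finally, for $v\in\mathcal{D}(H)$ put $u=W_\pm(H_0,H)v\in\mathcal{D}(H_0)$; relation \eqref{eq:intertwin4} gives $HW_\pm(H,H_0)u=W_\pm(H,H_0)H_0u$, and since $W_\pm(H,H_0)u=v$ by \eqref{eq:invert}, this is exactly $Hv=W_\pm(H,H_0)H_0W_\pm(H_0,H)v$, i.e.\ \eqref{eq:intertwing}.

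The main obstacle is steps (i) and (ii) for the anti-contractive wave operators $W_-(H,H_0)$ and $W_+(H_0,H)$: outside the contractive sector the group $e^{-itH}$ only obeys the a priori bound $\|e^{-itH}\|\le e^{\frac12\|C^*C\||t|}$, so both the Cook estimates (which require global-in-time smoothness of $C$ relative to $H$ and $H^*$) and the telescoping argument (which requires $\sup_{t}\|e^{-itH}\|<\infty$) break down without additional input. The hypothesis $\tilde{\mathrm{c}}_0<1$ is precisely what is needed, through Lemma \ref{lm:ABCD}, to upgrade the one-sided contractivity of $\{e^{-itH}\}_{t\ge0}$ to two-sided uniform boundedness, after which the argument proceeds along the lines of the unitary theory.
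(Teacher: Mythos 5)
Your proposal is correct and takes essentially the same route as the paper: existence of the two remaining wave operators via Cook's method powered by Lemma \ref{lm:ABCD} (uniform boundedness of $\{e^{-itH}\}_{t\in\mathbb{R}}$ and the global-in-time $H$- and $H^*$-smoothness of $C$ it provides), mutual inversion via the telescoping identity that exploits $\sup_{t}\lVert e^{-itH}\rVert<\infty$, and the intertwining relation deduced from Proposition \ref{prop:01} together with \eqref{eq:invert}. The only cosmetic difference is that you verify $W_\pm(H_0,H)W_\pm(H,H_0)=\mathrm{Id}$ first by factoring $e^{itH_0}e^{-itH}$ out on the left, whereas the paper inserts $e^{itH}e^{-itH_0}e^{itH_0}e^{-itH}$ to get the product in the other order first; both rest on the same uniform bound and are interchangeable.
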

\begin{proof}
Existence of $W_{ + }( H , H_0 )$ and $W_{ - }( H_0 , H )$ follows from Proposition \ref{prop:existence}. Existence of $W_{ - }( H , H_0 )$ and $W_{ + }( H_0 , H )$ can be proven in the same way, using inequality \eqref{eq:ABC0} in Lemma \ref{lm:ABCD}, instead of \eqref{eq:Z'''0}. 

The uniform boundedness of the operators $\{ e^{ - i t H_0 } \}_{ t \in \mathbb{R} }$, $\{ e^{ - i t H } \}_{ t \in \mathbb{R} }$ proven in Lemma \ref{lm:ABCD} implies that $W_\pm( H_0 , H )$ and $W_\pm( H , H_0 )$ are bounded operators on $\mathcal{H}$.

The invertibility of the wave operators is an an easy consequence of their definitions and of the uniform boundedness of $\{ e^{ - i t H_0 } \}_{ t \in \mathbb{R} }$, $\{ e^{ - i t H } \}_{ t \in \mathbb{R} }$. As an example, we can write
\begin{align*}
u &= e^{ i t H }Êe^{ - i t H_0Ê} e^{Êi t H_0 } e^{ - i t H } u \\
&= e^{Êi t H } e^{ - i t H_0 } W_{\pm} ( H_0 , H ) u + o ( 1 ) \\
&=  W_{\pm}( H , H_0 ) W_{\pm} ( H_0 , H ) u + o ( 1 ) ,
\end{align*}
as $t \to \mp \infty$, for all $u \in \mathcal{H}$. This shows that $W_{\pm}( H , H_0 ) W_{\pm} ( H_0 , H ) = \mathrm{Id}$. In the same way we can prove that $W_{\pm}( H_0 , H ) W_{\pm} ( H , H_0 ) = \mathrm{Id}$, and hence \eqref{eq:invert} holds.

The intertwining property follows from Proposition \ref{prop:01}.
\end{proof}
To prove the next result we require Assumption \ref{Z0} to hold, with $\mathrm{c}_0 < 2$. A simple argument will show that in this case also, the conclusions of Theorem \ref{thm:invertibility} hold.
\begin{theorem}\label{thm:bijec}
Suppose that Assumption \ref{Z0} holds with $\mathrm{c}_0 < 2$. Then, for all $u \in \mathcal{H}$,
\begin{equation}
\big \| e^{ - i t H } u \big \| \le \frac{ 1 }{ 1 - \mathrm{c}_0 / 2 } \| u \|, \quad t \in \mathbb{R} . \label{eq:unif-bound}
\end{equation}
In particular the conclusions of Theorem \ref{thm:invertibility} hold.
\end{theorem}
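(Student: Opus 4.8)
The plan is to derive the uniform bound \eqref{eq:unif-bound} for $t>0$ (equivalently, for $e^{-itH}$ with $t<0$) by combining Duhamel's formula with the Kato smoothness estimate of Assumption~\ref{Z0} in a way that self-improves; the half-line $t\ge 0$ is already taken care of by the contractivity \eqref{eq:contrac}. So it suffices to bound $e^{itH}$ for $t>0$.

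First I would record the elementary identity
\begin{equation*}
\int_0^t \big\| C e^{i\sigma H} u \big\|^2\, d\sigma = \big\| e^{itH} u \big\|^2 - \|u\|^2, \qquad t\ge 0,
\end{equation*}
obtained by differentiating $\sigma\mapsto \|e^{i\sigma H}u\|^2$ and using that $\mathrm{Im}\langle Hv,v\rangle = -\tfrac12\|Cv\|^2$; note that $\|e^{itH}u\|$ is finite thanks to the a priori exponential bound, so this manipulation is legitimate. Next, Duhamel's formula applied to the group $\{e^{i t H_0}\}$ gives
\begin{equation*}
e^{itH}u = e^{itH_0}u + \frac12 \int_0^t e^{i(t-\sigma)H_0} C^*C\, e^{i\sigma H}u\, d\sigma .
\end{equation*}
Pairing with an arbitrary unit vector $v\in\mathcal H$, rewriting $\langle v, e^{i(t-\sigma)H_0} C^*C\, e^{i\sigma H}u\rangle = \langle C e^{-i(t-\sigma)H_0} v, C e^{i\sigma H}u\rangle$, and applying Cauchy--Schwarz yields
\begin{equation*}
\big| \langle v, e^{itH}u\rangle \big| \le \|u\| + \frac12 \Big( \int_0^t \| C e^{-i(t-\sigma)H_0} v\|^2 d\sigma \Big)^{1/2} \Big( \int_0^t \| C e^{i\sigma H} u\|^2 d\sigma \Big)^{1/2}.
\end{equation*}
The first integral is $\le \mathrm{c}_0^2\|v\|^2$ after the substitution $\tau=t-\sigma$ and Assumption~\ref{Z0} (extending the range of integration to all of $\mathbb R$); the second equals $\|e^{itH}u\|^2 - \|u\|^2 \le \|e^{itH}u\|^2$ by the identity above. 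Taking the supremum over $\|v\|=1$ gives
\begin{equation*}
\big\| e^{itH}u \big\| \le \|u\| + \frac{\mathrm{c}_0}{2}\, \big\| e^{itH}u \big\| ,
\end{equation*}
and since $\mathrm{c}_0<2$ this rearranges to $\|e^{itH}u\| \le (1-\mathrm{c}_0/2)^{-1}\|u\|$ for all $t\ge 0$. Combined with \eqref{eq:contrac} on the other half-line, this proves \eqref{eq:unif-bound}.

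For the last assertion I would invoke Lemma~\ref{lm:ABCD}: it suffices to check that \eqref{eq:Z''0} holds with $\tilde{\mathrm{c}}_0<1$. If $C=0$ the claim is trivial since then $H=H_0$; otherwise $\mathrm{c}_0>0$, so \eqref{eq:unif-bound} is precisely the uniform bound \eqref{eq:adjointAB1} with $m=(1-\mathrm{c}_0/2)^{-1}>1$, and the converse direction of Lemma~\ref{lm:ABCD} then gives \eqref{eq:Z''0} with $\tilde{\mathrm{c}}_0 = (1-m^{-2})^{1/2}<1$. Theorem~\ref{thm:invertibility} now applies and yields all its conclusions. The only subtle point — the heart of the argument — is the self-improving nature of the penultimate inequality: the Kato bound controls $\int_0^t\|Ce^{i\sigma H}u\|^2 d\sigma$ only in terms of $\|e^{itH}u\|^2$, the very quantity one wants to estimate, so one must first know this quantity is a priori finite (via the exponential bound) in order to absorb it on the left. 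Everything else is routine.
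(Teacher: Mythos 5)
Your proof is correct and is essentially the paper's own argument: both rest on the Cook/Duhamel identity, Cauchy--Schwarz with Assumption \ref{Z0} on the $C e^{-i\tau H_0}v$ factor, and the dissipative identity $\int_0^t\|Ce^{i\sigma H}u\|^2\,d\sigma=\|e^{itH}u\|^2-\|u\|^2$ (equivalently \eqref{eq:Z'''0}) on the other factor, followed by Lemma \ref{lm:ABCD} and Theorem \ref{thm:invertibility}. The only cosmetic difference is that the paper states the estimate as the lower bound $\|e^{-itH}w\|\ge(1-\mathrm{c}_0/2)\|w\|$ for $t\ge 0$ and then substitutes $w=e^{itH}u$, whereas you absorb $\tfrac{\mathrm{c}_0}{2}\|e^{itH}u\|$ directly; these are the same inequality rearranged.
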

\begin{proof}
Let $w \in \mathcal{H}$. By \eqref{eq:cook0},
\begin{align*}
\big \|Êe^{ - i t H }Êw \big \|Ê&= \big \| e^{ i t H_0 } e^{ - i t H } w \big \| \\
&\ge \| w \| - \frac{1}{2} \Big\| \int_0^t e^{Êi s H_0 }ÊC^* C e^{Ê- i s H }Êw ds \Big\| \\ 
& \ge \| w \|Ê- \frac{1}{2} \sup_{Êv \in \mathcal{H} , \| v \|Ê= 1 } \Big ( \int_0^\infty \big \| C e^{Ê- i s H_0 }Êv \|^2 ds \Big )^{\frac12} \Big ( \int_0^\infty \| C e^{Ê- i s H }Êw \|^2 ds \Big )^{ \frac12 }  \\
&\ge \Big ( 1 -  \frac{1}{2} \mathrm{c}_0 \Big ) \| w \| ,
\end{align*}
for all $t \ge 0$, where we used Eqs. \eqref{eq:Z0} and \eqref{eq:Z'''0}. Applying this inequality to $w = e^{ i t H }Êu$ proves \eqref{eq:unif-bound} for $t \le 0$. For $t \ge 0$, \eqref{eq:unif-bound} is obvious by \eqref{eq:contrac}.

By Lemma \ref{lm:ABCD}, \eqref{eq:unif-bound} implies that \eqref{eq:Z''0} holds with $\tilde{\mathrm{c}}_0 < 1$ and therefore the conclusions of Theorem \ref{thm:invertibility} hold.
\end{proof}
\begin{remark}
$\quad$
\begin{enumerate}
\item The existence and invertibility of the adjoint wave operators
\begin{align}
W_{ \pm } ( H_0 , H^* ) :=  \underset{t\to \mp \infty }{\slim} e^{ i t H_0 } e^{ - i t H^* } = W_{ \pm }( H , H_0 )^* , \notag \\
W_{ \pm }( H^* , H_0 ) := \underset{t\to \mp \infty }{\slim} e^{ i t H^* } e^{ - i t H_0 } = W_\pm ( H_0 , H )^*, \label{eq:defwavesadjoint}
\end{align}
can be proven with the same arguments as above. Of course, these wave operators are not unitary in general.
\item If Assumptions \ref{Z0} holds, with $\mathrm{c}_0 < 2$, then one can show that the wave operators admit the integral representations
\begin{equation*}
\big \langle W_\pm ( H_0 , H ) u , v \rangle = \langle u , v \rangle \pm \frac12 \int_0^\infty \big \langle C e^{ \mp i t H_0 } u , C e^{ \mp i t H } v \big \rangle dt ,
\end{equation*}
and
\begin{equation*}
\big \langle W_\pm ( H , H_0 ) u , v \rangle = \langle u , v \rangle \pm \frac12 \int_0^\infty \big \langle C e^{ \mp i t H } u , C e^{ \mp i t H_0 } v \big \rangle dt ,
\end{equation*}
for all $u , v \in \mathcal{H}$. The integrals on the right side converge, as follows from the Cauchy-Schwarz inequality.
\item If we make the further assumption that $C$ is ``$H_0$-supersmooth'' \cite{KatoYajima}, i.e., that
\begin{equation*}
\sup_{ z \in \mathbb{C} \setminus \mathbb{R} } \| C ( H_0 - z )^{-1} C^* \| =: \mathrm{d}_0 < \infty, \text{   with a constant   }\mathrm{d}_0 < 2, 
\end{equation*}
then the following representations hold:
\begin{equation*}
\big \langle W_\pm ( H_0 , H ) u , v \rangle = \langle u , v \rangle \mp \frac12 \int_{ \mathbb{R} } \big \langle C ( H_0 - ( \lambda \pm i 0 ) )^{-1} u , C ( H^* - ( \lambda \pm i 0 ) )^{-1} v \big \rangle d\lambda ,
\end{equation*}
and
\begin{equation*}
\big \langle W_\pm ( H , H_0 ) u , v \rangle = \langle u , v \rangle \mp \frac12 \int_{ \mathbb{R} } \big \langle C ( H - ( \lambda \pm i 0 ) )^{-1} u , C ( H_0 - ( \lambda \pm i 0 ) )^{-1} v \big \rangle d\lambda ,
\end{equation*}
for all $u , v \in \mathcal{H}$; see \cite{Kato1}.
\end{enumerate}
\end{remark}

We conclude this section with a comment on the notion of completeness of the wave operators. In \cite{Martin}, Martin defines completeness of the wave operators in dissipative quantum scattering theory as follows: Suppose, to simplify matters, that $C^*C$ is a relatively compact perturbation of $H_0$ and that 
$H$ has only a finite number of eigenvalues of finite multiplicity. Let $P$ denote the projection onto the direct sum of all eigenspaces. Then the wave operators $W_+( H , H_0 )$, $W_-( H^* , H_0 )$ are said to be complete iff
\begin{equation*}
\mathrm{Ran}( W_+ ( H , H_0 ) ) = ( \mathrm{Id}Ê- P ) \mathcal{H} , \qquad \mathrm{Ran}( W_- ( H^* , H_0 ) ) = ( \mathrm{Id}Ê- P^* ) \mathcal{H}.
\end{equation*}
A scattering operator is then defined by
\begin{equation*}
S ( H , H_0 ) := W_-( H_0 , H ) W_+( H , H_0 ) \equiv \underset{t\to + \infty }{\slim} e^{Êi t H_0 } e^{ - 2 i t H } e^{Êi t H_0 }.
\end{equation*}
It follows from \cite{Davies1} that, under some further assumptions, an equivalent condition yielding the bijectivity of $S( H , H_0 )$ on $\mathcal{H}$ is that the subspace $\mathrm{Ran} ( W_+ ( H , H_0 ) )$ is closed. If Assumption \ref{Z0} holds, with $\mathrm{c}_0 < 2$, then, by Theorem \ref{thm:invertibility}, the wave operators $W_+( H , H_0 )$ and $W_-( H^* , H_0 )$ are complete and the scattering operator 
$S ( H , H_0 )$ is bijective on $\mathcal{H}$.

\section{Scattering theory for Lindblad operators}\label{sec:Lindblad}

Recall that the Lindblad operators studied in this paper have the form
\begin{equation*}
  \mathcal{L} = \mathrm{ad}(H_0) - \frac{i}{2}  \{C^{*}C,\,(\cdot) \,\} + i C \,(\cdot) \,C^{*} \equiv \mathcal{L}_0 - \frac{i}{2}  \{C^{*}C,\,(\cdot) \,\} + i C \,(\cdot) \,C^{*}.
\end{equation*}
To simplify our notation, we set
\begin{equation*}
\mathcal{W} := - \frac{i}{2}  \{C^{*}C,\,(\cdot) \,\} + i C \,(\cdot) \,C^{*}.
\end{equation*}
Recall that the trace norm in $\mathcal{J}_1( \mathcal{H} )$ is denoted by $\| \cdot \|_1$. The norm on the space $\mathcal{J}_2( \mathcal{H} )$ of Hilbert-Schmidt operators will be denoted by $\| \cdot \|_2$.

 \subsection{Existence and basic properties of $\Omega^+( \mathcal{L} , \mathcal{L}_0 )$}
 
We begin our considerations by stating a basic ``intertwining property'' of wave operators whose proof is standard, but,  for the convenience of the reader, is sketched below. We recall that $ \Omega^+(\mathcal{L},\mathcal{L}_0)$ and $ \Omega ^-(\mathcal{L}_0,\mathcal{L})$ are defined in \eqref{eq:defOmega_1}--\eqref{eq:defOmega_2}.
\begin{proposition}\label{prop:1}
Suppose that $ \Omega^+(\mathcal{L},\mathcal{L}_0)$ and $ \Omega ^-(\mathcal{L}_0,\mathcal{L})$ exist on $\mathcal{J}_1( \mathcal{H} )$ and $\mathcal{D}$, respectively, where $\mathcal{D}$ has been defined in \eqref{eq:dp}. Then  
  \begin{align}
&    e^{-it\mathcal{L}}  \Omega^+(\mathcal{L},\mathcal{L}_0)= \Omega^+(\mathcal{L},\mathcal{L}_0)e^{-it\mathcal{L}_0}\; \text{ on } \mathcal{J}_1( \mathcal{H} ), \label{eq:intert1} \\
 &   e^{-it\mathcal{L}_0} \Omega ^-(\mathcal{L}_0,\mathcal{L})= \Omega ^-(\mathcal{L}_0,\mathcal{L}) e^{-it\mathcal{L}} \; \text{ on } \mathcal{D}. \label{eq:intert2}
  \end{align}
  Furthermore, $$ \Omega^+(\mathcal{L},\mathcal{L}_0)[ \mathcal{D}(\mathcal{L}_0) ]\subset \mathcal{D}(\mathcal{L}_0), \quad  \Omega ^-(\mathcal{L}_0,\mathcal{L})[\mathcal{D} \cap \mathcal{D}(\mathcal{L}_0)]\subset \mathcal{D}(\mathcal{L}_0),$$
and
  \begin{align}
     &\forall \rho^+\in \mathcal{D}(\mathcal{L}_0)\;,\;\mathcal{L}\Omega^{+}(\mathcal{L},\mathcal{L}_0)\rho^+=\Omega^{+}(\mathcal{L},\mathcal{L}_0)\mathcal{L}_0 \rho^+\; ;\label{eq:intert3} \\
   &\forall \rho^-\in \mathcal{D} \cap \mathcal{D}(\mathcal{L}_0)\;,\;\mathcal{L}_0\Omega ^-(\mathcal{L}_0,\mathcal{L})\rho^-= \Omega ^-(\mathcal{L}_0,\mathcal{L}) \mathcal{L} \rho^- .\label{eq:intert4} 
  \end{align}
\end{proposition}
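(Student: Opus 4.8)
The proof is the standard argument establishing intertwining relations for wave operators, adapted to the semigroup setting on $\mathcal{J}_1(\mathcal{H})$, and I would proceed exactly as in the Hilbert-space case while being careful about which operators form a group and which only a semigroup.

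\smallskip

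\textbf{Intertwining relations \eqref{eq:intert1}--\eqref{eq:intert2}.} First I would prove \eqref{eq:intert1}. Fix $t \ge 0$ and $\rho \in \mathcal{J}_1(\mathcal{H})$. Since $\{e^{-it\mathcal{L}}\}_{t\ge 0}$ is a semigroup and $\{e^{-it\mathcal{L}_0}\}_{t\in\mathbb{R}}$ is a group, for $s > 0$ we have the identity
\begin{equation*}
e^{-it\mathcal{L}} e^{-is\mathcal{L}} e^{is\mathcal{L}_0} e^{it\mathcal{L}_0} \rho = e^{-i(t+s)\mathcal{L}} e^{i(t+s)\mathcal{L}_0} \rho .
\end{equation*}
Letting $s \to +\infty$, the right-hand side converges to $\Omega^+(\mathcal{L},\mathcal{L}_0)\rho$ by definition; on the left-hand side, the inner part $e^{-is\mathcal{L}}e^{is\mathcal{L}_0}\rho \to \Omega^+(\mathcal{L},\mathcal{L}_0)\rho$, and since $e^{-it\mathcal{L}}e^{it\mathcal{L}_0}$ is a fixed bounded operator on $\mathcal{J}_1(\mathcal{H})$ (bounded by Remark \ref{rk:intro1} and the unitarity of $e^{it\mathcal{L}_0}$), it passes through the limit, giving $e^{-it\mathcal{L}}e^{it\mathcal{L}_0}\,\Omega^+(\mathcal{L},\mathcal{L}_0)\rho$. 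Multiplying by $e^{-it\mathcal{L}_0}$ on the right (using the group property of $\mathcal{L}_0$) yields \eqref{eq:intert1}. For \eqref{eq:intert2}, fix $\rho \in \mathcal{D}$; by hypothesis $\Omega^-(\mathcal{L}_0,\mathcal{L})$ exists on $\mathcal{D}$, and a similar computation using $e^{i(t+s)\mathcal{L}_0} e^{-i(t+s)\mathcal{L}}\rho = e^{it\mathcal{L}_0} e^{is\mathcal{L}_0} e^{-is\mathcal{L}} e^{-it\mathcal{L}}\rho$ works, \emph{provided} $e^{-it\mathcal{L}}\rho$ still lies in $\mathcal{D}$ so that the limit defining $\Omega^-$ can be applied to it; this invariance follows from the fact that $\mathcal{D}_{\mathrm{pp}}$ is $e^{-it\mathcal{L}}$-invariant (being the closed span of eigenvectors) and $\mathcal{D}$ is the chosen complement — in the concrete cases $\mathcal{D} = \mathcal{J}_1(\mathcal{H})$ and this is vacuous. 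Letting $s\to+\infty$ gives $e^{it\mathcal{L}_0}\,\Omega^-(\mathcal{L}_0,\mathcal{L})e^{-it\mathcal{L}}\rho = \Omega^-(\mathcal{L}_0,\mathcal{L})\rho$ after noticing $e^{is\mathcal{L}_0}e^{-is\mathcal{L}}e^{-it\mathcal{L}}\rho \to \Omega^-(\mathcal{L}_0,\mathcal{L})e^{-it\mathcal{L}}\rho$ and that $e^{it\mathcal{L}_0}$ is continuous; rearranging gives \eqref{eq:intert2}.

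\smallskip

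\textbf{Domain invariance and generator relations \eqref{eq:intert3}--\eqref{eq:intert4}.} These follow from \eqref{eq:intert1}--\eqref{eq:intert2} by differentiation in $t$ at $t=0$, exactly as one derives $W H_0 = H W$ from $e^{-itH}W = We^{-itH_0}$ in the self-adjoint theory. Concretely, let $\rho^+ \in \mathcal{D}(\mathcal{L}_0)$. Then $t \mapsto e^{-it\mathcal{L}_0}\rho^+$ is differentiable at $0$ with derivative $-i\mathcal{L}_0\rho^+$. Applying the bounded operator $\Omega^+(\mathcal{L},\mathcal{L}_0)$ and using \eqref{eq:intert1}, $t\mapsto e^{-it\mathcal{L}}\Omega^+(\mathcal{L},\mathcal{L}_0)\rho^+ = \Omega^+(\mathcal{L},\mathcal{L}_0)e^{-it\mathcal{L}_0}\rho^+$ is differentiable at $0$, which by definition of the generator $\mathcal{L}$ means $\Omega^+(\mathcal{L},\mathcal{L}_0)\rho^+ \in \mathcal{D}(\mathcal{L})$ and $\mathcal{L}\,\Omega^+(\mathcal{L},\mathcal{L}_0)\rho^+ = \Omega^+(\mathcal{L},\mathcal{L}_0)\mathcal{L}_0\rho^+$. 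But $\mathcal{D}(\mathcal{L}) = \mathcal{D}(\mathrm{ad}(H_0)) = \mathcal{D}(\mathcal{L}_0)$ by Lemma \ref{21}, so in fact $\Omega^+(\mathcal{L},\mathcal{L}_0)[\mathcal{D}(\mathcal{L}_0)] \subset \mathcal{D}(\mathcal{L}_0)$, which is the asserted invariance, and \eqref{eq:intert3} holds. The argument for \eqref{eq:intert4} is identical, starting from \eqref{eq:intert2} with $\rho^- \in \mathcal{D}\cap\mathcal{D}(\mathcal{L}_0)$, using that $e^{-it\mathcal{L}}$ preserves $\mathcal{D}(\mathcal{L}_0)$ (again by Lemma \ref{21}, since $\mathcal{L}$ generates the semigroup) and the above-mentioned invariance of $\mathcal{D}$.

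\smallskip

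\textbf{Main obstacle.} The only genuinely delicate point is the invariance statement for $\Omega^-$: one must know that $e^{-it\mathcal{L}}$ maps $\mathcal{D}$ into $\mathcal{D}$ in order even to write down $\Omega^-(\mathcal{L}_0,\mathcal{L})e^{-it\mathcal{L}}\rho$ and iterate the limit. Since $\mathcal{D}$ is merely "a closed subspace complementary to $\mathcal{D}_{\mathrm{pp}}$" this is not automatic for an arbitrary choice of complement; I would either (a) remark that in all cases of interest $\mathcal{D} = \mathcal{J}_1(\mathcal{H})$ so the issue disappears, or (b) when $\mathcal{D}_{\mathrm{pp}} \ne \{0\}$, take $\mathcal{D}$ to be a canonically invariant complement (e.g. defined via a spectral projection associated with $\mathcal{L}$, when available). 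Everything else is bookkeeping: uniform boundedness of the relevant operator families (Remark \ref{rk:intro1} plus unitarity of the free group), the semigroup/group algebra, and continuity to pass limits through bounded operators.
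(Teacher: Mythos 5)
Your argument is correct and takes essentially the same route as the paper: the semigroup identity $e^{-i(t+s)\mathcal{L}}e^{i(t+s)\mathcal{L}_0}\rho=e^{-it\mathcal{L}}\bigl(e^{-is\mathcal{L}}e^{is\mathcal{L}_0}\bigr)e^{it\mathcal{L}_0}\rho$ passed to the limit $s\to\infty$, followed by differentiation at $t=0$ (the paper phrases this as difference quotients), with the same use of $\mathcal{D}(\mathcal{L})=\mathcal{D}(\mathcal{L}_0)$. One small slip to fix: the limit of your left-hand side is $e^{-it\mathcal{L}}\,\Omega^+(\mathcal{L},\mathcal{L}_0)\,e^{it\mathcal{L}_0}\rho$, not $e^{-it\mathcal{L}}e^{it\mathcal{L}_0}\,\Omega^+(\mathcal{L},\mathcal{L}_0)\rho$ (the latter would presuppose the commutation being proved); with the corrected expression the substitution $\rho\mapsto e^{-it\mathcal{L}_0}\rho$ gives \eqref{eq:intert1} exactly as you intend, and your remarks on the invariance of $\mathcal{D}$ flag a point the paper itself passes over silently.
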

\begin{proof}
We only verify statements \eqref{eq:intert2} and \eqref{eq:intert4}. For $\rho^- \in \mathcal{D}$ and an arbitrary fixed $t \ge 0$, we have that
\begin{equation*}
e^{ i s \mathcal{L}_0 } e^{-is \mathcal{L}} e^{-it\mathcal{L}} \rho^- = e^{ - i t \mathcal{L}_0 } e^{ i (t + s ) \mathcal{L}_0 } e^{-i (t + s ) \mathcal{L}} \rho^-.
\end{equation*}
Taking $s \to \infty$ implies \eqref{eq:intert2}. The proof of \eqref{eq:intert1} is identical.

Next, we prove \eqref{eq:intert4}. Since $\{ e^{ - i t \mathcal{L} } \}_{ t \ge 0 }$ and $\{ e^{ i t \mathcal{L}_0 } \}_{ t \in \mathbb{R} }$ leave $\mathcal{D}( \mathcal{L} ) = \mathcal{D}( \mathcal{L}_0 )$ invariant, we obviously have that $\Omega^{-}( \mathcal{L}_0,\mathcal{L})[ \mathcal{D} \cap \mathcal{D}(\mathcal{L}_0)]\subset \mathcal{D}(\mathcal{L}_0)$. We then obtain, applying \eqref{eq:intert2} to $\rho^- \in \mathcal{D} \cap \mathcal{D}(\mathcal{L}_0)$, that
\begin{equation*}
t^{-1} \big ( e^{-it\mathcal{L}_0} - \mathrm{Id} \big )  \Omega^-(\mathcal{L}_0,\mathcal{L}) \rho^- = \Omega^-(\mathcal{L}_0,\mathcal{L}) t^{-1} \big ( e^{-it\mathcal{L}} - \mathrm{Id} \big ) \rho^-.
\end{equation*}
Passing to the limit $t \to 0$ yields \eqref{eq:intert4}. The proof of \eqref{eq:intert3} is identical.
\end{proof}

The existence of the wave operator $\Omega^+( \mathcal{L} , \mathcal{L}_0 )$ is the content of the next theorem. Our proof of this result, using Assumption \ref{Y0}, is close to the one in \cite{Davies2}.
But our proof of existence of $\Omega^+( \mathcal{L} , \mathcal{L}_0 )$, using Assumption \ref{Z0} instead of Assumption \ref{Y0}, appears to be new.
\begin{theorem}\label{thm:existenceOmega-}
Suppose that either Assumption \ref{Y0} holds, or that Assumption \ref{Z0} holds, with $\mathrm{c}_0 < 2$. Then $\Omega^+( \mathcal{L} , \mathcal{L}_0 )$ exists on $\mathcal{J}_1( \mathcal{H} )$.
\end{theorem}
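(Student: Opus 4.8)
The plan is to prove the two assertions separately, since they rest on rather different mechanisms.

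\textbf{Existence under Assumption \ref{Y0}.} Here a direct Cook argument suffices. Since $\mathcal{W}:=\mathcal{L}-\mathcal{L}_0$ is bounded on $\mathcal{J}_1(\mathcal{H})$ (with $\|\mathcal{W}\|_{\mathcal{B}(\mathcal{J}_1)}\le 2\|C\|^2$) and $\mathcal{L}_0$ commutes with $e^{it\mathcal{L}_0}$, differentiation gives
\[
 e^{-it\mathcal{L}}e^{it\mathcal{L}_0}\rho=\rho-i\int_0^t e^{-is\mathcal{L}}\,\mathcal{W}\,e^{is\mathcal{L}_0}\rho\,ds,
\]
first for $\rho\in\mathcal{D}(\mathcal{L}_0)$ and then, by continuity, for all $\rho\in\mathcal{J}_1(\mathcal{H})$. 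By Remark \ref{rk:intro1}, $\|e^{-is\mathcal{L}}\|_{\mathcal{B}(\mathcal{J}_1)}\le2$, so $\Omega^+(\mathcal{L},\mathcal{L}_0)\rho$ exists once $\int_0^\infty\|\mathcal{W}e^{is\mathcal{L}_0}\rho\|_1\,ds<\infty$. For $\rho=|u\rangle\langle v|$ with $u,v\in\mathcal{E}$ one has $e^{is\mathcal{L}_0}\rho=|e^{isH_0}u\rangle\langle e^{isH_0}v|$ and hence
\[
 \|\mathcal{W}e^{is\mathcal{L}_0}\rho\|_1\le\tfrac12\|C^*Ce^{isH_0}u\|\,\|v\|+\tfrac12\|u\|\,\|C^*Ce^{isH_0}v\|+\|Ce^{isH_0}u\|\,\|Ce^{isH_0}v\|.
\]
The first two terms are integrable in $s$ by Assumption \ref{Y0}; for the third, the elementary bound $\|Ce^{isH_0}u\|^2\le\|u\|\,\|C^*Ce^{isH_0}u\|$ shows $s\mapsto Ce^{isH_0}u$ lies in $L^2$, and Cauchy--Schwarz in $s$ closes the estimate. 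Finite linear combinations of such $|u\rangle\langle v|$ are dense in $\mathcal{J}_1(\mathcal{H})$, and the uniform bound $\|e^{-it\mathcal{L}}e^{it\mathcal{L}_0}\|_{\mathcal{B}(\mathcal{J}_1)}\le2$ then propagates existence of the strong limit to all of $\mathcal{J}_1(\mathcal{H})$ via the usual $\varepsilon/3$ argument.

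\textbf{Existence under Assumption \ref{Z0} with $\mathrm{c}_0<2$: Dyson expansion around the auxiliary semigroup.} By Theorem \ref{thm:bijec}, $\{e^{-itH}\}_{t\in\mathbb{R}}$ is then a uniformly bounded group, $\|e^{-itH}\|\le M:=(1-\mathrm{c}_0/2)^{-1}$, the wave operator $W:=W_+(H,H_0)$ exists, and \eqref{eq:Z''0} holds with some $\tilde{\mathrm{c}}_0<1$, so also $\int_0^\infty\|Ce^{i\sigma H}w\|^2\,d\sigma\le\frac{\tilde{\mathrm{c}}_0^2}{1-\tilde{\mathrm{c}}_0^2}\|w\|^2$ by Lemma \ref{lm:ABCD}. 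Put $P(\tau)\sigma:=e^{-i\tau H}\sigma e^{i\tau H^*}$ — a uniformly bounded completely positive group on $\mathcal{J}_1(\mathcal{H})$ — and $\mathcal{C}\sigma:=C\sigma C^*$. Iterating Duhamel's formula $e^{-it\mathcal{L}}\sigma=P(t)\sigma+\int_0^t P(t-s)\mathcal{C}(e^{-is\mathcal{L}}\sigma)\,ds$ gives, for each fixed $t$, the $\mathcal{J}_1(\mathcal{H})$-convergent Dyson series $e^{-it\mathcal{L}}=\sum_{n\ge0}\mathcal{D}_n(t)$, $\mathcal{D}_n(t)=\int_{\Delta_n(t)}P(t-s_1)\mathcal{C}P(s_1-s_2)\cdots\mathcal{C}P(s_n)\,ds$, $\Delta_n(t)=\{t\ge s_1\ge\cdots\ge s_n\ge0\}$, whence $e^{-it\mathcal{L}}e^{it\mathcal{L}_0}\rho=\sum_{n\ge0}\mathcal{D}_n(t)e^{it\mathcal{L}_0}\rho$. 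The argument now splits into two facts. First, \emph{uniform summability}: all $P(\tau)$ and $\mathcal{C}$ are completely positive, so $\|\mathcal{D}_n(t)\xi\|_1=\mathrm{tr}(\mathcal{D}_n(t)\xi)$ for $\xi\ge0$; from $\tfrac{d}{d\tau}\mathrm{tr}(P(\tau)\eta)=-\mathrm{tr}(C^*CP(\tau)\eta)$ and $\lim_{\tau\to\infty}\mathrm{tr}(P(\tau)\eta)\ge(1-\tilde{\mathrm{c}}_0^2)\mathrm{tr}(\eta)$ (checked on rank-one $\eta$ via \eqref{eq:Z''0}) one gets the ``$P$-smoothing'' bound $\int_0^\infty\mathrm{tr}(C^*CP(\tau)\eta)\,d\tau\le\tilde{\mathrm{c}}_0^2\,\mathrm{tr}(\eta)$; integrating the $s_j$ out one at a time, each time pairing a factor $\mathcal{C}$ with an adjacent $P$, yields $\mathrm{tr}(\mathcal{D}_n(t)\xi)\le\tilde{\mathrm{c}}_0^{2n}\mathrm{tr}(\xi)$, hence $\|\mathcal{D}_n(t)\xi\|_1\le4\tilde{\mathrm{c}}_0^{2n}\|\xi\|_1$ for general $\xi$, and $\sum_{n\ge N}\|\mathcal{D}_n(t)e^{it\mathcal{L}_0}\rho\|_1\le4(1-\tilde{\mathrm{c}}_0^2)^{-1}\tilde{\mathrm{c}}_0^{2N}\|\rho\|_1$ uniformly in $t$.

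Second, \emph{convergence of each Dyson term}: one shows $\lim_{t\to+\infty}\mathcal{D}_n(t)e^{it\mathcal{L}_0}\rho$ exists in $\mathcal{J}_1(\mathcal{H})$ for each fixed $n$. By polarization, linearity and the bound just proved it suffices to take $\rho=|a\rangle\langle a|$; after the substitution $u_j=t-s_j$, $\mathcal{D}_n(t)e^{it\mathcal{L}_0}\rho=\int_{\{0\le u_1\le\cdots\le u_n\le t\}}|z(u;t)\rangle\langle z(u;t)|\,d^nu$ with $z(u;t)=e^{-iu_1H}Ce^{-i(u_2-u_1)H}C\cdots Ce^{-i(u_n-u_{n-1})H}C\,e^{-i(t-u_n)H}e^{itH_0}a$. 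Writing $A_t:=e^{-itH}e^{itH_0}$, the group identity $e^{-i(t-u_n)H}=e^{iu_nH}e^{-itH}$ together with the intertwining relation $We^{iu_nH_0}=e^{iu_nH}W$ (Proposition \ref{prop:01}) gives $e^{-i(t-u_n)H}e^{itH_0}a=e^{iu_nH}A_ta$, so that, defining $z_\infty(u)$ by replacing $A_ta$ with $Wa$,
\[
 z(u;t)-z_\infty(u)=e^{-iu_1H}Ce^{-i(u_2-u_1)H}C\cdots Ce^{-i(u_n-u_{n-1})H}C\,e^{iu_nH}(A_t-W)a.
\]
Since $A_t\to W$ strongly (Proposition \ref{prop:existence}), $\|(A_t-W)a\|\to0$. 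Peeling off the integrations $u_1,\dots,u_{n-1}$ with the Kato bound $\int_0^\infty\|Ce^{-i\sigma H}w\|^2\,d\sigma\le\tilde{\mathrm{c}}_0^2\|w\|^2$, and $u_n$ with $\int_0^\infty\|Ce^{i\sigma H}w\|^2\,d\sigma\le\frac{\tilde{\mathrm{c}}_0^2}{1-\tilde{\mathrm{c}}_0^2}\|w\|^2$, yields $\int_{\Delta_n(t)}\|z(u;t)-z_\infty(u)\|^2\,d^nu\le M^2(1-\tilde{\mathrm{c}}_0^2)^{-1}\tilde{\mathrm{c}}_0^{2n}\|(A_t-W)a\|^2\to0$; the same computation bounds $\int\|z(u;t)\|^2$ and $\int\|z_\infty(u)\|^2$ over the relevant simplices uniformly in $t$ and shows $|z_\infty\rangle\langle z_\infty|$ is trace-norm integrable over $\{0\le u_1\le\cdots\le u_n<\infty\}$. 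Combining $\||p\rangle\langle p|-|q\rangle\langle q|\|_1\le\|p-q\|(\|p\|+\|q\|)$, Cauchy--Schwarz in $u$, and vanishing of the simplex tail, one concludes $\mathcal{D}_n(t)e^{it\mathcal{L}_0}\rho\to\int_{\{0\le u_1\le\cdots\le u_n<\infty\}}|z_\infty(u)\rangle\langle z_\infty(u)|\,d^nu$ in $\mathcal{J}_1(\mathcal{H})$. (The $t$-dependence enters only through $P(t-u_n)e^{it\mathcal{L}_0}\rho$, which one checks equals $B_{t,u_n}\rho B_{t,u_n}^*$ with $B_{t,u_n}=A_{t-u_n}e^{iu_nH_0}\to We^{iu_nH_0}$ strongly; strong convergence against a fixed trace-class operator gives trace-norm convergence.) Granting the two facts, fix $N$ so the tail in the first is $<\varepsilon$ for all $t$; the finite sum over $n<N$ converges as $t\to+\infty$ by the second; hence $\{e^{-it\mathcal{L}}e^{it\mathcal{L}_0}\rho\}_{t\ge0}$ is Cauchy in $\mathcal{J}_1(\mathcal{H})$, i.e. $\Omega^+(\mathcal{L},\mathcal{L}_0)\rho$ exists, for every $\rho\in\mathcal{J}_1(\mathcal{H})$.

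\textbf{Main obstacle.} The hard part is the termwise convergence of the Dyson terms. Under Assumption \ref{Z0} one cannot expect $\|C^*Ce^{isH_0}u\|$ to be integrable, so the naive Cook argument of the first case is not available; comparing instead with the auxiliary semigroup $P$ replaces this with the problem that $\mathcal{D}_n(t)e^{it\mathcal{L}_0}\rho$ is an integral over a simplex $\Delta_n(t)$ of volume $\sim t^n$, so a crude bound does not give convergence. The decisive points are (a) the group identity $e^{-iu_1H}\cdots e^{-i(t-u_n)H}=e^{-itH}$, which isolates the single convergent factor $(A_t-W)a$ no matter how $t$ is split among the $u_j$, and (b) the fact that $\mathrm{c}_0<2$ forces $\tilde{\mathrm{c}}_0<1$, which both makes the $n$-sum geometric and provides the finite $H$-smoothness integral $\int_0^\infty\|Ce^{i\sigma H}w\|^2\,d\sigma<\infty$ used to integrate out the ``outgoing'' variable $u_n$; without the strict inequality $\tilde{\mathrm{c}}_0<1$ neither ingredient is available.
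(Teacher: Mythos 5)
Your proof is correct. The first half (under Assumption \ref{Y0}) is essentially the paper's argument: a Cook/Duhamel estimate on a dense set of finite-rank operators, with only a cosmetic difference in how the gain term $C(\cdot)C^*$ is controlled (you use $\|Ce^{isH_0}u\|^2\le\|u\|\,\|C^*Ce^{isH_0}u\|$ plus Cauchy--Schwarz in $s$; the paper uses cyclicity of the trace to reduce it to the loss term). The second half, however, takes a genuinely different and noticeably heavier route. The paper factorizes through the auxiliary Lindbladian $\mathcal{L}_1=\mathrm{ad}(H)$: it first shows $e^{-it\mathcal{L}_1}e^{it\mathcal{L}_0}\rho\to W_+\rho W_+^*$ (a purely Hilbert-space statement, via $\mathcal{J}_2(\mathcal{H})\simeq\mathcal{H}\otimes\mathcal{H}$ and $e^{-itH}e^{itH_0}\to W_+$), and then obtains $\Omega^+(\mathcal{L},\mathcal{L}_1)$ from a \emph{single} Duhamel step, since $\|e^{-is\mathcal{L}}C(e^{is\mathcal{L}_1}\rho)C^*\|_1\le 2\|Ce^{isH}u\|_2^2$ is integrable by \eqref{eq:ABC0}; the chain rule $\Omega^+(\mathcal{L},\mathcal{L}_0)=\Omega^+(\mathcal{L},\mathcal{L}_1)\,(W_+(\cdot)W_+^*)$ then finishes the proof. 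You instead expand the full Dyson--Phillips series around $P(\tau)=e^{-i\tau H}(\cdot)e^{i\tau H^*}$ and prove termwise convergence of each $\mathcal{D}_n(t)e^{it\mathcal{L}_0}\rho$, together with a geometric tail bound $\tilde{\mathrm{c}}_0^{2n}$ uniform in $t$; the key identity $e^{-i(t-u_n)H}e^{itH_0}=e^{iu_nH}(e^{-itH}e^{itH_0})$ that isolates the single convergent factor $(A_t-W)a$ is exactly the group-plus-intertwining mechanism that the paper's factorization exploits once and for all at the level of operators rather than inside every Dyson term. Both arguments rest on the same estimates (Theorem \ref{thm:bijec}, Lemma \ref{lm:ABCD}, Kato smoothness); what your version buys is that it makes the structure of $\Omega^+(\mathcal{L},\mathcal{L}_0)$ as an explicit norm-convergent series of completely positive maps visible (machinery the paper only deploys later, in Lemma \ref{thm:unif-boundL}, for completeness), at the cost of redoing the simplex bookkeeping that the chain rule renders unnecessary. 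Note also that your tail estimate already requires $\tilde{\mathrm{c}}_0<1$ (hence $\mathrm{c}_0<2$) for the geometric sum, whereas the paper's single Duhamel step needs only the integrability \eqref{eq:Z'''0}/\eqref{eq:ABC0}; in the end the hypotheses used are the same, but the paper's route makes clearer which ingredient is responsible for which conclusion.
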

\begin{proof}
We first assume that Assumption \ref{Y0} holds. Since $\mathcal{E}$ is dense in $\mathcal{H}$, the set of (finite) linear combinations of projections $ | u_i \rangle \langle u_i |$, with $u_i \in \mathcal{E}$, is dense in $\mathcal{J}_1( \mathcal{H} )$. Let $\rho = \sum_{i=1}^n \lambda_i  | u_i \rangle \langle u_i |$ be such a linear combination. Clearly
\begin{align}
e^{ - i t \mathcal{L} } e^{ i t \mathcal{L}_0 } \rho &= \rho - i \int_0^t e^{ - i s \mathcal{L} } \mathcal{W} e^{ i s \mathcal{L}_0 } \rho \notag \\
& = \rho + \int_0^t e^{ - i s \mathcal{L} } \Big ( -\frac12 \big ( C^* C  e^{ i s H_0 } \rho e^{ - i s H_0 } + e^{ i s H_0 } \rho e^{ - i s H_0 } C^* C \big ) \notag \\
& \qquad \qquad \qquad \qquad + C e^{ i s H_0 } \rho e^{ - i s H_0 } C^* \Big ) ds. \label{eq:AA1}
\end{align}
We now show that the above integrals converge in the norm of $\mathcal{J}_1( \mathcal{H} )$, uniformly in $t$. Using that the semi-group $\{ e^{ - i s \mathcal{L} } \}_{ s \ge 0 }$ is uniformly bounded on $\mathcal{J}_1( \mathcal{H} )$ by $2$, we write
\begin{align}
\big \|Êe^{ - i s \mathcal{L} } C^* C e^{ i s H_0 } \rho e^{ - i s H_0 } \big \|_1& \le 2 \sum_{i=1}^n | \lambda_i | \big \| C^* C e^{ i s H_0 } | u_i \rangle \langle u_i | e^{ - i s H_0 } \big \|_1 \notag \\ 
&\le 2 \sum_{i=1}^n | \lambda_i | \big \| C^* C e^{ i s H_0 } u_i \big \|_{ \mathcal{H} }Ê\| u_i \|_{ \mathcal{H} } . \label{eq:EFGH}
\end{align}
The second inequality follows from the Cauchy-Schwarz inequality, using that for any orthonormal basis $(e_j)$ in $\mathcal{H}$,
\begin{align*}
\sum_{ j \in \mathbb{N} } \big | \langle e_j , C^* C e^{ i s H_0 } u_i \rangle \langle u_i , e^{ - i s H_0 } e_j \rangle \big | &\le \Big ( \sum_{ j \in \mathbb{N} } \big | \langle e_j , C^* C e^{ i s H_0 } u_i \rangle \big |^2 \Big )^{\frac12} \Big ( \sum_{ j \in \mathbb{N} } \big | \langle u_i , e^{ - i s H_0 } e_j \rangle \big |^2 \Big )^{\frac12} \\
& = \big \| C^* C e^{ i s H_0 } u_i \big \|_{ \mathcal{H} }Ê\| u_i \|_{ \mathcal{H} }.
\end{align*}
 Since $s \mapsto \big \| C^* C e^{ i s H_0 } u_i \big \|_{ \mathcal{H} }$ is integrable on $[ 0 , \infty )$, by Assumption \ref{Y0}, Eq. \eqref{eq:EFGH} implies that the function
$$s \mapsto \big \|Êe^{ - i s \mathcal{L} } C^* C e^{ i s H_0 } \rho e^{ - i s H_0 } \big \|_1$$
 is also integrable on $[ 0 , \infty )$. The same argument shows that $s \mapsto \big \|Êe^{ - i s \mathcal{L} } e^{ i s H_0 } \rho e^{ - i s H_0 } C^* C \big \|_1$
  is integrable on $[ 0 , \infty )$ as well.

To bound the third term in \eqref{eq:AA1}, we notice that
\begin{align*}
\big \|Êe^{ - i s \mathcal{L} } C e^{ i s H_0 } \rho e^{ - i s H_0 } C^* \big \|_1 & \le 2 \sum_{i = 1}^n |Ê\lambda_i | \big \| C e^{ i s H_0 } | u_i \rangle \langle u_i | e^{ - i s H_0 } C^* \big \|_1 \\
& = 2 \sum_{i = 1}^n |Ê\lambda_i | \mathrm{tr} \big ( C e^{ i s H_0 } | u_i \rangle \langle u_i | e^{ - i s H_0 } C^* \big ) \\
& \le 2 \sum_{i = 1}^n |Ê\lambda_i | \big \| C^* C e^{ i s H_0 } | u_i \rangle \langle u_i | e^{ - i s H_0 } \big \|_1,
\end{align*}
and we have used the cyclicity of the trace. Therefore $s \mapsto \big \|Êe^{ - i s \mathcal{L} } C e^{ i s H_0 } u^* u e^{ - i s H_0 } C^* \big \|_1$ is integrable on $[ 0 , \infty )$.

Combining the previous estimates, we have shown that
\begin{equation}
\int_0^\infty \big \| e^{ - i s \mathcal{L} } \mathcal{W} e^{ i s \mathcal{L}_0 } \rho \big \|_1 ds < \infty .
\end{equation}
The proof is concluded by appealing to a density argument.

Next, we suppose that Assumption \ref{Z0} holds, with $\mathrm{c}_0 < 2$. Using the linearity of $e^{ - i t \mathcal{L} } e^{ i t \mathcal{L}_0 }$ and the fact that any $\rho \in \mathcal{J}_1( \mathcal{H} )$ can be written as a linear combination of four positive operators, we see that it suffices to prove the existence of $\lim e^{ - i t \mathcal{L} } e^{ i t \mathcal{L}_0 } \rho$, as $t \to \infty$, for any $\rho \in \mathcal{J}_1^+ ( \mathcal{H} )$. Thus, we let $\rho \in \mathcal{J}_1^+ ( \mathcal{H} )$ and write $\rho = u^* u$, for some $u \in \mathcal{J}_2( \mathcal{H} )$.

Let
\begin{align*}
  \mathcal{L}_1 := \mathrm{ad}( H ) \equiv H( \cdot ) - (  \cdot )H^* ,
  \end{align*}
with domain $\mathcal{D}( \mathrm{ad}( H ) )= \mathcal{D}(\text{ad}(H_0)) \subset \mathcal{J}_1( \mathcal{H} )$. For $t \ge 0$ and $\rho \ge 0$, we write
\begin{align*}
e^{ - i t \mathcal{L}Ê} e^{ i t \mathcal{L}_0 } \rho = e^{ - i t \mathcal{L}Ê} e^{ i t \mathcal{L}_1 } e^{ - i t \mathcal{L}_1 } e^{ i t \mathcal{L}_0 } \rho.
\end{align*}
By Theorems \ref{thm:invertibility} and \ref{thm:bijec}, we know that $\{ e^{  i t H } \}$ is uniformly bounded, for $t \in \mathbb{R}$, and that $W_+ ( H , H_0 ) = \slim e^{ -i t H }Êe^{ i t H_0 }$ ($t\to\infty$) exists on $\mathcal{H}$. This implies that $\{ e^{  i t \mathcal{L}_1 } \}$ is uniformly bounded, for $t \in \mathbb{R}$, and that $\slim e^{ -i  t \mathcal{L}_1 }Êe^{ i t \mathcal{L}_0 }$ ($t\to\infty$) exists on $\mathcal{J}_1( \mathcal{H} )$. Indeed, since $\rho = u^* u$, $u \in \mathcal{J}_2( \mathcal{H} )$, we find using Theorem \ref{thm:bijec} that
\begin{equation}
\big \| e^{  i t \mathcal{L}_1Ê} \rho \big \|_1 = \big \| e^{  i t H } u^* \big \|^2_2 \le \big \|Êe^{  i t H } \big \|_{Ê\mathcal{B}( \mathcal{H} ) }^2 \|Êu^* \|_2^2 \le \Big ( \frac{2}{ 2 - \mathrm{c}_0 } \Big )^2 \|Ê\rho \|_1 , \quad t \in \mathbb{R}. \label{eq:BBC1}
\end{equation}
To see that $\slim e^{ -i  t \mathcal{L}_1 }Êe^{ i t \mathcal{L}_0 }$ exists on $\mathcal{J}_1( \mathcal{H} )$, we observe that
  \begin{align*}
   e^{ - i t \mathcal{L}_1 } ( e^{i t \mathcal{L}_0 } \rho) = e^{ - i t H } e^{ i t H_{0} } \rho e^{ - i t H_0 } e^{ i t H^* },
  \end{align*}
  and therefore
  \begin{align}
    \big \| e^{ - i t H } e^{i t H_0 } \rho e^{ - i t H_0 } e^{ i t H^* } - W_+ \rho W_+^* \big \|_1 \to 0 , \quad t \to \infty . \label{eq:BBB1}
  \end{align}
  To simplify our notations, we set $W_+ \equiv W_+( H , H_0 )$ in the previous equation and throughout the rest of the proof. Statement \eqref{eq:BBB1} follows from
  \begin{align*}
    & \big \| e^{ - i t H } e^{ i t H_{ 0 } } \rho e^{ - i t H_{ 0 } } e^{ i t H^* } - W_+ \rho W_+^* \big \|_1 \\
    & = \big \| e^{ - i t H } e^{i t H_{ 0 } } u^* u e^{ - i t H_{ 0 } } e^{ i t H^* } - W_+ u^* u W_+^* \big \|_1 \\
    & \le \big \| ( e^{ - i t H } e^{i t H_{ 0 } } - W_+ ) u^* u e^{ - i t H_{ 0 } } e^{ i t H^* } - W_+ u^* u ( W_+^* - e^{ - i t H_{ 0 } } e^{ i t H^* } ) \big \|_1 \\
    & \le \big \| ( e^{ - i t H } e^{i t H_{ 0 } } - W_+ ) u^* \big \|_2 \| u \|_2 + \| u^* \|_2 \big \| u ( W_+^* - e^{ - i t H_{ 0 } } e^{ i t H^* } ) \big \|_2.
  \end{align*}
The right side is seen to tend to $0$, as $t \to \infty$, by recalling the isomorphism
$\mathcal{J}_2( \mathcal{H} ) \simeq \mathcal{H} \otimes \mathcal{H}$.

Equations \eqref{eq:BBC1} and \eqref{eq:BBB1} imply that
\begin{equation}
e^{ - i t \mathcal{L}Ê} e^{ i t \mathcal{L}_1 } e^{ - i t \mathcal{L}_1 } e^{ i t \mathcal{L}_0 } \rho = e^{ - i t \mathcal{L}Ê} e^{ i t \mathcal{L}_1 } ( W_+ \rho W_+^* ) + o ( 1 ) , \quad t \to \infty. \label{eq:BBD1}
\end{equation}
Next, we prove that $e^{ - i t \mathcal{L}Ê} e^{ i t \mathcal{L}_1 }$ converges strongly on 
$\mathcal{J}_1( \mathcal{H} )$, as $t \rightarrow \infty$. For any $\rho = u^* u$, $u \in \mathcal{J}_2( \mathcal{H} )$, we have that
\begin{equation*}
e^{ - i t \mathcal{L}Ê} e^{ i t \mathcal{L}_1 } \rho = \rho + \int_0^t e^{ - i s \mathcal{L}Ê} C ( e^{ i s \mathcal{L}_1 } \rho ) C^* ds.
\end{equation*}
We then use that
\begin{equation}
\big \|Êe^{ - i s \mathcal{L}Ê} C ( e^{ i s \mathcal{L}_1 } \rho ) C^* \big \|_1 \le 2 \big \| C ( e^{ i s \mathcal{L}_1 } \rho ) C^* \big \|_1 = 2 \big \|ÊC e^{ i s H } u \big \|_2^2 , \label{eq:CC4}
\end{equation}
and Theorem \ref{thm:bijec} together with Lemma \ref{lm:ABCD} tells us that $s \mapsto \big \|ÊC e^{ i s H } u \big \|_2^2$ is integrable on $[ 0 , \infty )$. (This follows again from the isomorphism 
$\mathcal{J}_2( \mathcal{H} ) \simeq \mathcal{H} \otimes \mathcal{H}$.) Therefore
\begin{equation*}
\Omega^+( \mathcal{L} , \mathcal{L}_1 ) := \underset{ t \to \infty }{ \slim } e^{ - i t \mathcal{L}Ê} e^{ i t \mathcal{L}_1 } 
\end{equation*}
exists on $\mathcal{J}_1^+( \mathcal{H} )$, hence on $\mathcal{J}_1( \mathcal{H} )$. We then deduce from \eqref{eq:BBD1} that $\Omega^+( \mathcal{L} , \mathcal{L}_0 )$ exists on $\mathcal{J}_1^+( \mathcal{H} )$ and satisfies
\begin{equation*}
\Omega^+( \mathcal{L} , \mathcal{L}_0 ) = \Omega^+( \mathcal{L} , \mathcal{L}_1 ) \Omega^+( \mathcal{L}_1 , \mathcal{L}_0 ) = \Omega^+( \mathcal{L} , \mathcal{L}_1 ) ( W_+ ( H , H_0 )\text{  } (\cdot) \text{  } W_+^*( H , H_0 ) ).
\end{equation*}
\end{proof}
\begin{remark}
Using Lemma \ref{lm:ABCD}, the above proof shows that, in the statement of Theorem \ref{thm:existenceOmega-}, the hypothesis that Assumption \ref{Z0} holds, with $\mathrm{c}_0 < 2$, can be replaced by the weaker hypothesis that Assumption \ref{Z0} holds, with $\tilde{\mathrm{c}}_0 < 1$, where 
$\tilde{\mathrm{c}}_0$ is defined in \eqref{eq:Z''0}.
\end{remark}

\subsection{Existence of $\Omega^-( \mathcal{L}_0 , \mathcal{L} )$}

We prove the existence of $\Omega^-( \mathcal{L}_0 , \mathcal{L} )$ following arguments in \cite[Theorem 4]{Davies2}, with some modifications. 
\begin{lemma}\label{thm:existenceOmega+}
Suppose that the map $s \mapsto \big \|ÊC (e^{ - i s \mathcal{L} } \rho) C^* \big \|_1$ is integrable on $[0,\infty)$, for all $\rho$ in a dense subset of $\mathcal{J}_1^+( \mathcal{H} )$. Then $\Omega^-( \mathcal{L}_0 , \mathcal{L} )$ exists on $\mathcal{J}_1( \mathcal{H} )$.
\end{lemma}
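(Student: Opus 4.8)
The plan is to prove that $e^{it\mathcal{L}_0}e^{-it\mathcal{L}}\rho$ converges in $\|\cdot\|_1$ by expanding $e^{-it\mathcal{L}}$ into its quantum–trajectory (Dyson) series for the splitting $\mathcal{L}=\mathcal{L}_1+iC(\,\cdot\,)C^*$, where $\mathcal{L}_1:=\mathrm{ad}(H)=H(\,\cdot\,)-(\,\cdot\,)H^*$, and then controlling the series term by term. Since $e^{it\mathcal{L}_0}e^{-it\mathcal{L}}$ is uniformly bounded on $\mathcal{J}_1(\mathcal{H})$ (Remark \ref{rk:intro1}) and every $\rho\in\mathcal{J}_1(\mathcal{H})$ is a linear combination of four elements of $\mathcal{J}_1^+(\mathcal{H})$, a density argument reduces matters to the case where $\rho$ lies in the dense subset of $\mathcal{J}_1^+(\mathcal{H})$ on which the hypothesis holds; fix such a $\rho$ and write $\rho=w^*w$ with $w\in\mathcal{J}_2(\mathcal{H})$.

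First I would set up the unraveling. As $\{e^{-it\mathcal{L}_1}\}_{t\ge0}$ is a contraction semigroup on $\mathcal{J}_1(\mathcal{H})$ (because $e^{-itH}$ is a contraction for $t\ge0$) and $C(\,\cdot\,)C^*$ is bounded, the Dyson series $e^{-it\mathcal{L}}\rho=\sum_{n\ge0}P_n(t)$ converges absolutely in $\mathcal{J}_1(\mathcal{H})$, with $P_0(t)=e^{-itH}\rho\,e^{itH^*}$ and $P_n(t)=\int_0^t e^{-i(t-s)\mathcal{L}_1}\bigl(C\,P_{n-1}(s)\,C^*\bigr)\,ds$. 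Unwinding gives $P_n(t)=\int_{\Delta_n(t)}(A_n^{s,t}w^*)(A_n^{s,t}w^*)^*\,ds$, where $\Delta_n(t)=\{0<s_1<\dots<s_n<t\}$ and $A_n^{s,t}=e^{-i(t-s_n)H}C\,e^{-i(s_n-s_{n-1})H}C\cdots C\,e^{-is_1H}$; in particular each $P_n(t)\ge0$ and $\sum_n\|P_n(t)\|_1=\sum_n\mathrm{tr}(P_n(t))=\mathrm{tr}(e^{-it\mathcal{L}}\rho)=\mathrm{tr}(\rho)$. The hypothesis enters through an a priori bound on the expected number of jumps: differentiating the Duhamel identity for $P_n$ and using $\mathrm{tr}(\mathcal{L}_1\,\cdot\,)=-i\,\mathrm{tr}(C^*C\,\cdot\,)$ gives $\tfrac{d}{dt}\mathrm{tr}(P_n(t))=-\mathrm{tr}(C^*C\,P_n(t))+\mathrm{tr}(C^*C\,P_{n-1}(t))$ for $n\ge1$; multiplying by $n$ and summing over $1\le n\le N$, the right-hand side telescopes and is dominated by $\mathrm{tr}(C^*C\,e^{-it\mathcal{L}}\rho)=\|C(e^{-it\mathcal{L}}\rho)C^*\|_1$, so integrating over $[0,t]$ yields $\sum_{n=1}^N n\,\mathrm{tr}(P_n(t))\le\int_0^t\|C(e^{-is\mathcal{L}}\rho)C^*\|_1\,ds$. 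Letting $N\to\infty$ and using the hypothesis, $\mathrm{m}:=\sup_{t\ge0}\sum_{n\ge0}n\,\mathrm{tr}(P_n(t))<\infty$, whence $\bigl\|\sum_{n\ge N}P_n(t)\bigr\|_1=\sum_{n\ge N}\mathrm{tr}(P_n(t))\le\mathrm{m}/N$ uniformly in $t$.

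It then remains to show that, for each fixed $n$, $e^{it\mathcal{L}_0}P_n(t)$ converges in $\mathcal{J}_1(\mathcal{H})$ as $t\to+\infty$. From $e^{it\mathcal{L}_0}P_n(t)=\int_{\Delta_n(t)}(e^{itH_0}A_n^{s,t}w^*)(e^{itH_0}A_n^{s,t}w^*)^*\,ds$ and the factorization $e^{itH_0}A_n^{s,t}=e^{is_nH_0}\bigl(e^{i(t-s_n)H_0}e^{-i(t-s_n)H}\bigr)B_n^{s}$, where $B_n^{s}=C\,e^{-i(s_n-s_{n-1})H}C\cdots C\,e^{-is_1H}$ does not depend on $t$, one has $e^{i(t-s_n)H_0}e^{-i(t-s_n)H}\to W_-(H_0,H)$ strongly as $t\to\infty$. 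Provided $W_-(H_0,H)$ exists on all of $\mathcal{H}$ (which holds, e.g., under Assumption \ref{Z0}, by Proposition \ref{prop:existence}), this strong limit lifts, for each $s$, to convergence of $e^{itH_0}A_n^{s,t}w^*$ in $\mathcal{J}_2(\mathcal{H})$, and hence of the integrand in $\mathcal{J}_1(\mathcal{H})$. A $t$-uniform integrable majorant is at hand: $\|e^{itH_0}A_n^{s,t}w^*\|_2\le\|B_n^{s}w^*\|_2$ (the factors $e^{-i(t-s_n)H}$, $e^{i(t-s_n)H_0}$ being a contraction and a unitary for $t\ge s_n$), and the always-valid smoothing bound $\int_0^\infty\|C\,e^{-irH}Y\|_2^2\,dr\le\|Y\|_2^2$ for $Y\in\mathcal{J}_2(\mathcal{H})$ (the Hilbert--Schmidt form of \eqref{eq:Z'''0}), applied $n$ times, gives $\int_{\Delta_n(\infty)}\|B_n^{s}w^*\|_2^2\,ds\le\|w\|_2^2$; dominated convergence then delivers convergence of $e^{it\mathcal{L}_0}P_n(t)$ (and an explicit series formula for the limit). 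Finally, given $\varepsilon>0$, choose $N$ with $\mathrm{m}/N<\varepsilon/4$; since $e^{it\mathcal{L}_0}$ is isometric on $\mathcal{J}_1(\mathcal{H})$, for $t_1,t_2$ large $\|e^{it_2\mathcal{L}_0}e^{-it_2\mathcal{L}}\rho-e^{it_1\mathcal{L}_0}e^{-it_1\mathcal{L}}\rho\|_1\le\sum_{n<N}\|e^{it_2\mathcal{L}_0}P_n(t_2)-e^{it_1\mathcal{L}_0}P_n(t_1)\|_1+2\mathrm{m}/N<\varepsilon$, which is the Cauchy property; its limit is $\Omega^-(\mathcal{L}_0,\mathcal{L})\rho$.

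The step I expect to be the main obstacle is the treatment of the no-jump (and, more generally, the post-last-jump) contributions. A direct Cook argument comparing $e^{-it\mathcal{L}}$ with $e^{-it\mathcal{L}_0}$ does not work: the anticommutator part $-\tfrac i2\{C^*C,\,\cdot\,\}$ of $\mathcal{L}-\mathcal{L}_0$, applied to $e^{-is\mathcal{L}}\rho$, is in $\|\cdot\|_1$ only bounded by a constant multiple of $\mathrm{tr}(\rho)^{1/2}\|C(e^{-is\mathcal{L}}\rho)C^*\|_1^{1/2}$, and the square root of an integrable function need not be integrable. Expanding into trajectories circumvents this — the contractive no-jump propagators $e^{-i\tau H}$ carry the evolution between jumps — but it shifts the difficulty onto (i) the rigorous term-by-term differentiation of the Dyson series used to extract the finite-jump-number bound from the hypothesis, and (ii) the dominated-convergence argument in the fixed-$n$ step, which relies on the existence of the dissipative wave operator $W_-(H_0,H)$ on $\mathcal{H}$. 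Point (ii) is precisely where a Kato-smoothness assumption such as Assumption \ref{Z0} is needed (to guarantee $W_-(H_0,H)$ via Proposition \ref{prop:existence}), so this lemma is to be applied in settings where that assumption — or at least the existence of $W_-(H_0,H)$ — is available.
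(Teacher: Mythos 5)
Your proposal is correct, but it takes a genuinely different and substantially longer route than the paper. The paper stops after a \emph{single} Duhamel iteration: it writes $e^{it\mathcal{L}_0}e^{-it\mathcal{L}}\rho = e^{it\mathcal{L}_0}e^{-it\mathcal{L}_1}\rho + \int_0^t e^{is\mathcal{L}_0}\bigl(e^{i(t-s)\mathcal{L}_0}e^{-i(t-s)\mathcal{L}_1}\bigr)C(e^{-is\mathcal{L}}\rho)C^*\,ds$, keeps the \emph{full} interacting evolution $e^{-is\mathcal{L}}\rho$ under the integral, and then applies dominated convergence directly, the hypothesis $\int_0^\infty\|C(e^{-is\mathcal{L}}\rho)C^*\|_1\,ds<\infty$ serving verbatim as the integrable majorant (the prefactor being a trace-norm contraction on positives). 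You instead expand $e^{-it\mathcal{L}}$ into the full Dyson--Phillips series and must therefore control the tail $\sum_{n\ge N}P_n(t)$ uniformly in $t$; your telescoping first-moment bound $\sum_n n\,\mathrm{tr}(P_n(t))\le\int_0^t\|C(e^{-is\mathcal{L}}\rho)C^*\|_1\,ds$ is a correct and rather elegant way to do this (and gives the nice reading of the hypothesis as ``finite expected number of quantum jumps''), but it is an extra lemma the paper's argument simply does not need; the differentiation it rests on is best carried out in integrated form, as you anticipate, using that $\tau\mapsto\mathrm{tr}(e^{-i\tau H}Xe^{i\tau H^*})$ is absolutely continuous with derivative $-\mathrm{tr}(C^*Ce^{-i\tau H}Xe^{i\tau H^*})$ for $X\ge0$. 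Your diagnosis of why a naive Cook argument against $\mathcal{L}_0$ fails (the square root of an integrable function) is exactly the point, and both proofs resolve it by routing through $\mathcal{L}_1=\mathrm{ad}(H)$. Finally, you are right that the existence of $W_-(H_0,H)$ on all of $\mathcal{H}$ is an unstated but essential input: the paper's proof uses it in exactly the same place (the strong limit $e^{i\tau\mathcal{L}_0}e^{-i\tau\mathcal{L}_1}X\to W_-XW_-^*$), and it is supplied in the subsequent theorem by Assumption \ref{Z0}. In sum: your argument is sound and yields an explicit series formula for $\Omega^-(\mathcal{L}_0,\mathcal{L})\rho$, at the cost of the tail estimate; the paper's one-step Duhamel achieves the same conclusion with the hypothesis used as-is.
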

\begin{proof}
 As above, we set
  \begin{align*}
  \mathcal{L}_1 = \mathrm{ad}( H ) \equiv H( \cdot) - (\cdot )H^* ,
  \end{align*}
with domain $\mathcal{D}(\text{ad}(H)) = \mathcal{D}( \mathrm{ad}( H_{0} ) ) \subset \mathcal{J}_1( \mathcal{H} )$. We write
  \begin{align}\label{eq:A1}
    e^{i t \mathcal{L}_0 } e^{ - i t \mathcal{L} } =  e^{i t \mathcal{L}_0 } e^{ - i t \mathcal{L}_1 } + e^{i t \mathcal{L}_0 } \big ( e^{ - i t \mathcal{L} } - e^{ - i t \mathcal{L}_1 } \big ).
  \end{align}
As in the proof of Theorem \ref{thm:existenceOmega-}, it suffices to prove strong convergence of $e^{ i t \mathcal{L}_0 } e^{ - i t \mathcal{L} }$ on the cone of positive operators. Thus, let $\rho \in \mathcal{J}_1^+( \mathcal{H} )$ belong to a dense subset as in the statement of the lemma and decompose $\rho = u^* u$, with $u \in \mathcal{J}_2( \mathcal{H} )$. By the same arguments as in \eqref{eq:BBB1}, we have that
  \begin{align}
    \big \| e^{i t H_0 } e^{ - i t H } \rho e^{ i t H^* } e^{ - i t H_0 } - W_- \rho W_-^* \big \|_1 \to 0 , \quad \text{as    }t \to \infty , \label{eq:BB1}
  \end{align}
with $W_- \equiv W_-( H_0 , H )$.
 
 Next, we treat the second term in \eqref{eq:A1}. We write
  \begin{align*}
    e^{i t \mathcal{L}_0 } \big ( e^{ - i t \mathcal{L} } - e^{ - i t \mathcal{L}_1 } \big ) \rho & =  e^{i t \mathcal{L}_0 } \int_0^t  e^{ - i (t-s) \mathcal{L}_1 } C (e^{ - i s \mathcal{L} } \rho ) C^* ds \\
    & = \int_0^t e^{i s \mathcal{L}_0 } e^{i (t-s) \mathcal{L}_0 } e^{ - i (t-s) \mathcal{L}_1 } C (e^{ - i s \mathcal{L} } \rho) C^* ds .
  \end{align*}
  For any fixed $s \geq 0$, we have that
  \begin{align*}
    \lim_{ t \to \infty } e^{i s \mathcal{L}_0 } e^{i (t-s) \mathcal{L}_0 } e^{ - i (t-s) \mathcal{L}_1 } C (e^{ - i s \mathcal{L} } \rho ) C^* = e^{i s \mathcal{L}_0 } W_- C (e^{ - i s \mathcal{L} } \rho ) C^* W_-^*
  \end{align*}
  in $\mathcal{J}_1( \mathcal{H} )$.
 
 The existence of the limit
  \begin{align*}
    \lim_{ t \to \infty } \int_0^t e^{i s \mathcal{L}_0 } e^{i (t-s) \mathcal{L}_0 } e^{ - i (t-s) \mathcal{L}_1 } C (e^{ - i s \mathcal{L} } \rho) C^* ds = \int_0^\infty e^{i s \mathcal{L}_0 } W_- C (e^{ - i s \mathcal{L} } \rho) C^* W_-^* ds ,
  \end{align*}
then follows from the dominated convergence theorem, since
  \begin{align*}
    \mathds{1}_{[0,t]}(s) \big \| e^{i s \mathcal{L}_0 } e^{i (t-s) \mathcal{L}_0 } e^{ - i (t-s) \mathcal{L}_1 } C (e^{ - i s \mathcal{L} } \rho) C^* \big \|_1 \le \mathds{1}_{ [0,\infty )}( s) \big \|ÊC (e^{ - i s \mathcal{L} } \rho) C^* \big \|_1, 
  \end{align*}
and since the map $s \mapsto \big \|ÊC (e^{ - i s \mathcal{L} } \rho) C^* \big \|_1$ is  integrable on $[0,\infty)$, by assumption. 

  Summarizing, we have shown that, for all
  $\rho$ in a dense subset of $\mathcal{J}_1^+( \mathcal{H} )$, 
  \begin{align*}
    \lim_{ t \to \infty } e^{i t \mathcal{L}_0 } e^{ - i t \mathcal{L} } \rho = W_- \rho W_-^* + \int_0^\infty e^{i s \mathcal{L}_0 } W_- C (e^{ - i s \mathcal{L} } \rho) C^* W_-^* ds 
  \end{align*}
  in $\mathcal{J}_1( \mathcal{H} )$. By a density argument, the existence of the limit $\lim_{ t \to \infty } e^{i t \mathcal{L}_0 } e^{ - i t \mathcal{L} } \rho$ extend to all $\rho \in \mathcal{J}_1^+( \mathcal{H} )$,  and this concludes the
  proof.
\end{proof}
\begin{remark}
The terms 
$$W_- ( H_0 , H ) \rho W_-( H_0 , H )^* \text{   and    }\text{   }\Omega^-( \mathcal{L}_0 , \mathcal{L} ) \rho - W_- ( H_0 , H ) \rho W_-( H_0 , H )^* $$
 of the decomposition
 \begin{align*}
\Omega^-( \mathcal{L}_0 , \mathcal{L} ) \rho = W_-( H_0 , H ) \rho W_-( H_0 , H )^* + \int_0^\infty e^{i s \mathcal{L}_0 } W_-( H_0 , H ) C (e^{ - i s \mathcal{L} } \rho) C^* W_-( H_0 , H )^* ds ,
  \end{align*}
appearing the in the proof of the previous lemma are usually referred to as the elastically and inelastically scattered components of $\rho$. 
\end{remark}
\begin{theorem}
Suppose that the wave operator $W_-( H_0 , H )$ defined in \eqref{eq:defwaves} exists on $\mathcal{H}$, is injective and has closed range. Then $\Omega^-( \mathcal{L}_0 , \mathcal{L} )$ exists on $\mathcal{J}_1( \mathcal{H} )$. In particular, if Assumption \ref{Z0} holds, with $\mathrm{c}_0 < 2$, (or, more generally, if Assumption \ref{Z0} holds and $\tilde{\mathrm{c}}_0<1$, where $\tilde{\mathrm{c}}_0$ is defined in \eqref{eq:Z''0}) then $\Omega^-( \mathcal{L}_0 , \mathcal{L} )$ exists on $\mathcal{J}_1( \mathcal{H} )$.
\end{theorem}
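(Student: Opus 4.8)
The plan is to verify the integrability hypothesis of Lemma~\ref{thm:existenceOmega+} and then invoke that lemma. Fix $\rho\in\mathcal{J}_1^+(\mathcal{H})$; since $\{e^{-is\mathcal{L}}\}_{s\ge0}$ is positivity preserving, $e^{-is\mathcal{L}}\rho\ge0$, hence $C(e^{-is\mathcal{L}}\rho)C^*\ge0$ and $\|C(e^{-is\mathcal{L}}\rho)C^*\|_1=\mathrm{tr}(C^*C\,e^{-is\mathcal{L}}\rho)$, so everything reduces to proving $\int_0^\infty\mathrm{tr}(C^*C\,e^{-is\mathcal{L}}\rho)\,ds<\infty$. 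A preliminary step is to extract a lower bound for the contraction semigroup $\{e^{-itH}\}_{t\ge0}$: since $W_-(H_0,H)$ is injective with closed range it is bounded below, $\|W_-(H_0,H)u\|\ge c\|u\|$ for all $u$ and some $c\in(0,1]$ (the bound $c\le1$ because $W_-(H_0,H)$ is a contraction); combining the intertwining relation $e^{-itH_0}W_-(H_0,H)=W_-(H_0,H)e^{-itH}$ (immediate from the definition of $W_-(H_0,H)$, cf.\ Proposition~\ref{prop:01}) with the unitarity of $e^{-itH_0}$ gives, for $t\ge0$,
\[
c\|u\|\le\|W_-(H_0,H)u\|=\|W_-(H_0,H)e^{-itH}u\|\le\|e^{-itH}u\| .
\]
Equivalently, by \eqref{eq:Z'0}, \eqref{eq:Z''0} holds with $\tilde{\mathrm{c}}_0=(1-c^2)^{1/2}<1$, so by Lemma~\ref{lm:ABCD} the group $\{e^{-itH}\}_{t\in\mathbb{R}}$ is uniformly bounded.

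The core of the argument is a Dyson-series estimate. Write $\mathcal{L}=\mathcal{L}_1+i\mathcal{J}$ with $\mathcal{L}_1:=\mathrm{ad}(H)$ and $\mathcal{J}(\cdot):=C(\cdot)C^*$, a bounded operator on $\mathcal{J}_1(\mathcal{H})$; since $\{e^{-it\mathcal{L}_1}\}_{t\ge0}$ is a contraction semigroup, the Duhamel iteration gives the $\mathcal{J}_1(\mathcal{H})$-norm-convergent expansion $e^{-it\mathcal{L}}\rho=\sum_{n\ge0}I_n(t)$, where $I_0(t)=e^{-it\mathcal{L}_1}\rho$ and, for $n\ge1$,
\[
I_n(t)=\int_{\substack{\tau_0,\dots,\tau_n\ge0\\ \tau_0+\cdots+\tau_n=t}} e^{-i\tau_n\mathcal{L}_1}\,\mathcal{J}\,e^{-i\tau_{n-1}\mathcal{L}_1}\,\mathcal{J}\cdots\mathcal{J}\,e^{-i\tau_0\mathcal{L}_1}\rho\ d\tau_0\cdots d\tau_{n-1},
\]
each $I_n(t)$ being a positive operator because both $e^{-i\tau\mathcal{L}_1}(\cdot)=e^{-i\tau H}(\cdot)e^{i\tau H^*}$ and $\mathcal{J}$ preserve positivity. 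The elementary identity $\frac{d}{d\tau}\mathrm{tr}(e^{-i\tau H}\sigma e^{i\tau H^*})=-\mathrm{tr}(C^*C\,e^{-i\tau H}\sigma e^{i\tau H^*})$ (using $H-H^*=-iC^*C$) together with the lower bound of the first paragraph (which yields $\mathrm{tr}(e^{-i\tau H}\sigma e^{i\tau H^*})=\|e^{-i\tau H}\sigma^{1/2}\|_2^2\ge c^2\mathrm{tr}(\sigma)$) gives, for every $\sigma\ge0$,
\[
\int_0^\infty\mathrm{tr}\big(C^*C\,e^{-i\tau\mathcal{L}_1}\sigma\big)\,d\tau=\mathrm{tr}(\sigma)-\lim_{\tau\to\infty}\mathrm{tr}\big(e^{-i\tau H}\sigma e^{i\tau H^*}\big)\le(1-c^2)\,\mathrm{tr}(\sigma).
\]

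Now I would integrate $\mathrm{tr}(C^*C\,I_n(t))$ over $t\in[0,\infty)$, pass to the ``gap'' variables $\tau_0,\dots,\tau_n$, and integrate successively over $\tau_n,\tau_{n-1},\dots,\tau_0$, applying the last displayed inequality at each of the $n+1$ steps and using $\mathrm{tr}(C\sigma C^*)=\mathrm{tr}(C^*C\sigma)$ to re-expose a factor $C^*C$ after each jump. This produces $\int_0^\infty\mathrm{tr}(C^*C\,I_n(t))\,dt\le(1-c^2)^{n+1}\mathrm{tr}(\rho)$, and summing the geometric series (legitimate by Tonelli, since every integrand is nonnegative) yields $\int_0^\infty\mathrm{tr}(C^*C\,e^{-it\mathcal{L}}\rho)\,dt\le\frac{1-c^2}{c^2}\|\rho\|_1<\infty$. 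This verifies the hypothesis of Lemma~\ref{thm:existenceOmega+}, hence $\Omega^-(\mathcal{L}_0,\mathcal{L})$ exists on $\mathcal{J}_1(\mathcal{H})$. For the ``in particular'' clause: if Assumption~\ref{Z0} holds with $\mathrm{c}_0<2$, or more generally if Assumption~\ref{Z0} holds and $\tilde{\mathrm{c}}_0<1$, then $W_-(H_0,H)$ exists by Proposition~\ref{prop:existence} and is invertible in $\mathcal{B}(\mathcal{H})$ by Theorem~\ref{thm:bijec} (resp.\ Theorem~\ref{thm:invertibility}), hence injective with closed range, so the first part applies.

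I expect the main obstacle to lie in this bookkeeping: the crude bound $\int_0^\infty\mathrm{tr}(C^*C\,e^{-i\tau\mathcal{L}_1}\sigma)\,d\tau\le\mathrm{tr}(\sigma)$, which follows from \eqref{eq:Z'0} with no hypothesis whatsoever, is useless here because it only produces the divergent estimate $\sum_n\mathrm{tr}(\rho)$; one must squeeze out the \emph{strict} contraction factor $1-c^2<1$ at each of the $n+1$ free time-segments between successive jumps, and it is precisely (and only) here that the injectivity and closed-range assumption on $W_-(H_0,H)$ enters. The remaining ingredients --- convergence of the Dyson series (a bounded perturbation of a contraction semigroup), positivity of each $I_n(t)$, and the interchange of sum and integral --- are routine.
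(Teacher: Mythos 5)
Your proof is correct, but the core of your argument for the integrability of $s \mapsto \|C(e^{-is\mathcal{L}}\rho)C^*\|_1$ is genuinely different from the paper's. The paper disposes of this in one stroke: from the lower bound $\|W_-\varphi\|\ge c\|\varphi\|$ (injectivity plus closed range) it gets $\|C\rho_sC^*\|_1\le c^{-2}\|W_-C\rho_sC^*W_-^*\|_1$, and then observes that, by the intertwining relation $H_0W_-=W_-H$, one has $e^{is\mathcal{L}_0}W_-C(e^{-is\mathcal{L}}\rho)C^*W_-^*=\partial_s\bigl(e^{is\mathcal{L}_0}W_-(e^{-is\mathcal{L}}\rho)W_-^*\bigr)$, so the integral of the trace telescopes and is bounded by $\mathrm{tr}(W_-\rho_tW_-^*)-\mathrm{tr}(W_-\rho W_-^*)$, uniformly in $t$. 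You instead push the same lower bound through the semigroup (obtaining $\|e^{-itH}u\|\ge c\|u\|$, equivalently $\tilde{\mathrm{c}}_0=(1-c^2)^{1/2}<1$) and then run a Dyson expansion of $e^{-it\mathcal{L}}$ around $\mathcal{L}_1=\mathrm{ad}(H)$, extracting a strict contraction factor $1-c^2$ from each of the $n+1$ free flights and summing the geometric series. Both arguments use the closed-range hypothesis only through the same constant $c$, and both end with the same bound $\frac{1-c^2}{c^2}\|\rho\|_1$; the paper's version is shorter and needs no series, while yours isolates more explicitly \emph{why} the hypothesis is what makes the estimate summable (your closing remark on the uselessness of the crude bound $\le\mathrm{tr}(\sigma)$ is exactly right), and it parallels the mechanism the paper deploys later in Lemma \ref{thm:unif-boundL} for negative times --- with the improvement that, for $t\ge 0$, positivity and the trace identity let you avoid the restriction $\tilde{\mathrm{c}}_0<1/\sqrt{2}$ needed there. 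Two minor remarks: the appeal to Lemma \ref{lm:ABCD} for uniform boundedness of the full group $\{e^{-itH}\}_{t\in\mathbb{R}}$ is not actually used (only the lower bound for $t\ge0$ enters), and the bookkeeping step you leave implicit (innermost-first integration over the gap variables, re-exposing $C^*C$ by cyclicity after each jump) is indeed routine and justified by Tonelli as you say.
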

\begin{proof}
As before, it suffices to prove strong convergence of $e^{ i t \mathcal{L}_0 } e^{ - i t \mathcal{L} }$ on the cone of positive operators. By Lemma \ref{thm:existenceOmega+}, it suffices to show that the map $s \mapsto \big \|ÊC (e^{ - i s \mathcal{L} } \rho) C^* \big \|_1$ is integrable on $[0,\infty)$. We use again the notation $W_- \equiv W_-( H_0 , H )$. Since, by assumption, $ W_- $ is injective, with closed range, there exists a positive constant $c$ such that
  $\| W_- \varphi \| \ge c \| \varphi \|$, for all  $\varphi \in \mathcal{H}$.  Consequently, for all $\rho \in \mathcal{J}_1( \mathcal{H} )$, $\rho \ge 0$,
  \begin{align*}
    \big \|ÊC(e^{ - i s \mathcal{L} } \rho)  C^* \big \|_1 \le c^{-2}Ê \big \|ÊW_- C (e^{ - i s \mathcal{L} } \rho)C^* W_-^* \big \|_1 = c^{-2}Ê \big \|Êe^{is \mathcal{L}_0 } W_- C (e^{ - i s \mathcal{L} } \rho ) C^* W_-^* \big \|_1.
  \end{align*}
  To prove this inequality, we use that
  \begin{align*}
    \big \|ÊC ( e^{ - i s \mathcal{L} } \rho)  C^* \big \|_1 = \big \| C ( e^{ - i s \mathcal{L} } \rho)^{\frac12} \big \|^2_2 ,
  \end{align*}
together with the isomorphism $\mathcal{J}_2( \mathcal{H} ) \simeq \mathcal{H} \otimes \mathcal{H}$.
  Using the intertwining relation $H_0 W_- = W_- H$, see Proposition \ref{prop:01}, we  observe that
  \begin{align*}
    e^{is \mathcal{L}_0 } W_- C (e^{ - i s \mathcal{L} } \rho ) C^* W_-^* = \partial_s e^{is \mathcal{L}_0 } W_- (e^{ - i s \mathcal{L} } \rho ) W_-^*.
  \end{align*}
  Therefore $s \mapsto \| e^{is \mathcal{L}_0 } W_- C (e^{ - i s \mathcal{L} } \rho ) C^* W_-^* \|_1$ is integrable on $[0,\infty)$; for
  \begin{align*}
    \int_0^t \big \|  e^{is \mathcal{L}_0 } W_- C (e^{ - i s \mathcal{L} } \rho ) C^* W_-^* \big \|_1 ds &=  \int_0^t \mathrm{tr} \big ( e^{is \mathcal{L}_0 } W_- C (e^{ - i s \mathcal{L} } \rho ) C^* W_-^* \big ) ds \\
    &=  \Big [ \mathrm{tr} \big ( e^{is \mathcal{L}_0 } W_- (e^{ - i s \mathcal{L} } \rho ) W_-^* \big ) \Big ]_0^t \\
    &= \mathrm{tr} \big ( W_- (e^{ - i t \mathcal{L} } \rho ) W_-^* \big ) - \mathrm{tr} \big ( W_- \rho W_-^* \big ) ,
  \end{align*}
  is uniformly bounded in $t \in [ 0 , \infty )$.
  
By Theorem \ref{thm:invertibility}, $W_-( H_0 , H )$ is a bijection on $\mathcal{H}$ if Assumptions \ref{Z0} is satisfied and \eqref{eq:Z''0} holds with $\mathrm{c}_0 < 1$. 
\end{proof}

\subsection{Asymptotic completeness of wave operators} \label{Asymc}
In this section we prove (asymptotic) completeness of the wave operators. We use again the notation
  \begin{align*}
  \mathcal{L}_1 = \mathrm{ad}( H ) \equiv H (\cdot) - (\cdot) H^* ,
  \end{align*}
The following Dyson-Phillips series \cite{Phillips1} converges in 
$\mathcal{B}( \mathcal{J}_1( \mathcal{H} ) )$, for all $t \in \mathbb{R}$:
\begin{align}
e^{ - i t \mathcal{L} } = e^{ - i t \mathcal{L}_1 } + \sum_{ n \ge 1 } \mathcal{S}_n ( t ) , \label{eq:DysonL1}
\end{align}
where, for all $n \in \mathbb{N}$,
\begin{align}
\mathcal{S}_n ( t ) \rho := \int_0^t \int_0^{s_1} \cdots \int_0^{s_{n-1}} & e^{ -i ( t - s_1) H }  C e^{ - i ( s_1 - s_2 ) H } C \cdots e^{ - i ( s_{ n -1 } - s_n ) H } C e^{ - i s_n H } \rho e^{ i s_n H^*} C^* \notag \\
& e^{ i ( s_{ n -1 } - s_n ) H^* } \cdots C^* e^{ i ( s_1 - s_2 )H^* } C^* e^{ i ( t - s_1 ) H^* }  ds_n \dots d s_1 , \label{eq:DysonL2}
\end{align}
for $\rho \in \mathcal{J}_1( \mathcal{H} )$. For all $t \in \mathbb{R}$ and all $n \in \mathbb{N}$, $\mathcal{S}_n(t) \in \mathcal{B}( \mathcal{J}_1( \mathcal{H} ) )$, and the series $\sum_{ n \ge 1 } \mathcal{S}_n ( t )$ converges normally in $\mathcal{B}( \mathcal{J}_1( \mathcal{H} ) )$.

\begin{lemma}\label{thm:unif-boundL}
Suppose that Assumption \ref{Z0} is satisfied and that \eqref{eq:Z''0} holds, with $\tilde{\mathrm{c}}_0 < 1/ \sqrt{2}$. Then, there exists a positive constant $\mathrm{d}_0$ such that, for all $\rho \in \mathcal{J}_1 ( \mathcal{H} )$,
\begin{equation}
\big \| e^{ - i t \mathcal{L} } \rho \big \|_1 \le \mathrm{d}_0 \| \rho \|_1, \quad t \in \mathbb{R} . \label{eq:unif-bound2}
\end{equation}
Moreover, there exists a constant $\tilde{\mathrm{d}}_0 > 0$ such that
\begin{equation}
\int_{ \mathbb{R} } \big \| C ( e^{ - i t \mathcal{L} } \rho ) C^* \big \|_1 dt \le \tilde{\mathrm{d}}_0 \| \rho \|_1. \label{eq:unif-bound3}
\end{equation}
\end{lemma}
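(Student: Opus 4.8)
The plan is to combine the Dyson--Phillips expansion \eqref{eq:DysonL1}--\eqref{eq:DysonL2} with the Kato-smoothness estimates for the dissipative operator $H$ provided by Lemma \ref{lm:ABCD}. Since $\tilde{\mathrm{c}}_0<1/\sqrt{2}<1$, Lemma \ref{lm:ABCD} is available; writing $M:=(1-\tilde{\mathrm{c}}_0^2)^{-1/2}$ and $\kappa^2:=\tilde{\mathrm{c}}_0^2(1-\tilde{\mathrm{c}}_0^2)^{-1}$, it yields $\|e^{-i\theta H}\|_{\mathcal{B}(\mathcal{H})}\le M$ for all $\theta\in\mathbb{R}$, the ``forward'' bound $\int_0^\infty\|Ce^{-i\theta H}u\|^2\,d\theta\le\tilde{\mathrm{c}}_0^2\|u\|^2$ (this is \eqref{eq:Z''0}), and the ``backward'' bound $\int_0^\infty\|Ce^{i\theta H}u\|^2\,d\theta\le\kappa^2\|u\|^2$ (this is \eqref{eq:ABC0}); the point of the hypothesis is that $\kappa<1$ is \emph{equivalent} to $\tilde{\mathrm{c}}_0<1/\sqrt{2}$. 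All three bounds pass verbatim from vectors $u\in\mathcal{H}$ to Hilbert--Schmidt operators $a\in\mathcal{J}_2(\mathcal{H})$ by summing over an orthonormal basis, since $\|CBa\|_2^2=\sum_k\|CB(ae_k)\|^2$ for any bounded $B$. Finally, any $\rho\in\mathcal{J}_1(\mathcal{H})$ can be written $\rho=bc^*$ with $b,c\in\mathcal{J}_2(\mathcal{H})$ and $\|b\|_2\|c\|_2=\|\rho\|_1$ (take $b=u|\rho|^{1/2}$, $c=|\rho|^{1/2}$ from the polar decomposition $\rho=u|\rho|$), and for such $\rho$ the $n$-th term \eqref{eq:DysonL2} of the series reads $\mathcal{S}_n(t)\rho=\int_{\Delta_n(t)}(X_nb)(X_nc)^*\,ds$, where $X_n=X_n(t;s_1,\dots,s_n):=e^{-i(t-s_1)H}C\,e^{-i(s_1-s_2)H}C\cdots C\,e^{-is_nH}$ and $\Delta_n(t)=\{0\le s_n\le\cdots\le s_1\le t\}$ for $t\ge0$ (and the reflected simplex inside $[t,0]$, up to an inessential factor $(-1)^n$, for $t<0$). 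Since $\|ST^*\|_1\le\|S\|_2\|T\|_2$ for $S,T\in\mathcal{J}_2(\mathcal{H})$, we get $\|\mathcal{S}_n(t)\rho\|_1\le\int_{\Delta_n(t)}\|X_nb\|_2\|X_nc\|_2\,ds$ and $\|C\mathcal{S}_n(t)\rho C^*\|_1\le\int_{\Delta_n(t)}\|CX_nb\|_2\|CX_nc\|_2\,ds$.

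Consider first $t\ge0$. I change variables from $(s_1,\dots,s_n)$ to the gaps $\tau_0:=t-s_1$, $\tau_j:=s_j-s_{j+1}$ ($s_{n+1}:=0$), which are nonnegative, sum to $t$, and have unit Jacobian. Each link $e^{-i\tau_j H}$ is then a contraction, so $\|X_nb\|_2\le\|Ce^{-i\tau_1 H}C\cdots Ce^{-i\tau_n H}b\|_2$; a Cauchy--Schwarz step in the $ds$-integral, followed by dropping the harmless constraint $\tau_1+\cdots+\tau_n\le t$, gives
\begin{equation*}
\|\mathcal{S}_n(t)\rho\|_1\le\Bigl(\int_{[0,\infty)^n}\!\|Ce^{-i\tau_1 H}\cdots Ce^{-i\tau_n H}b\|_2^2\,d\tau\Bigr)^{1/2}\Bigl(\int_{[0,\infty)^n}\!\|Ce^{-i\tau_1 H}\cdots Ce^{-i\tau_n H}c\|_2^2\,d\tau\Bigr)^{1/2}.
\end{equation*}
Iterating \eqref{eq:Z''0} -- peel off the leftmost $C$, integrate the corresponding gap, then the next, and so on -- bounds the integral in each factor by $\tilde{\mathrm{c}}_0^{2n}\|b\|_2^2$ resp. $\tilde{\mathrm{c}}_0^{2n}\|c\|_2^2$, whence $\|\mathcal{S}_n(t)\rho\|_1\le\tilde{\mathrm{c}}_0^{2n}\|\rho\|_1$. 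For \eqref{eq:unif-bound3} one also integrates over $t\ge0$; as $(t,s_1,\dots,s_n)\mapsto(\tau_0,\tau_1,\dots,\tau_n)$ is a measure-preserving bijection onto $[0,\infty)^{n+1}$, the same peeling (now absorbing all $n+1$ copies of $C$) gives $\int_0^\infty\|C\mathcal{S}_n(t)\rho C^*\|_1\,dt\le\tilde{\mathrm{c}}_0^{2(n+1)}\|\rho\|_1$. Together with the $n=0$ term $e^{-it\mathcal{L}_1}\rho=(e^{-itH}b)(e^{-itH}c)^*$ -- for which $\|e^{-it\mathcal{L}_1}\rho\|_1\le\|\rho\|_1$ and $\int_0^\infty\|Ce^{-it\mathcal{L}_1}\rho C^*\|_1\,dt\le\tilde{\mathrm{c}}_0^2\|\rho\|_1$ -- summing the geometric series in $\tilde{\mathrm{c}}_0^2<1$ yields, for $t\ge0$ and all $\rho$, the bounds $\|e^{-it\mathcal{L}}\rho\|_1\le(1-\tilde{\mathrm{c}}_0^2)^{-1}\|\rho\|_1$ and $\int_0^\infty\|C(e^{-it\mathcal{L}}\rho)C^*\|_1\,dt\le\tilde{\mathrm{c}}_0^2(1-\tilde{\mathrm{c}}_0^2)^{-1}\|\rho\|_1$.

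For $t<0$ the same scheme works after the substitution $s_j\mapsto-\sigma_j$: the links become $e^{i\theta_j H}$ with $\theta_j\ge0$ and $\theta_0+\cdots+\theta_n=|t|$, the outermost one being bounded only by $M$ (no longer a contraction), and the relevant smoothness constant becomes $\kappa^2$ from \eqref{eq:ABC0} in place of $\tilde{\mathrm{c}}_0^2$. This gives $\|\mathcal{S}_n(t)\rho\|_1\le M^2\kappa^{2n}\|\rho\|_1$, whereas in the integrated estimate the leading $C$ again pairs with $e^{i\theta_0 H}$ through \eqref{eq:ABC0}, so no factor $M$ survives and $\int_{-\infty}^0\|C\mathcal{S}_n(t)\rho C^*\|_1\,dt\le\kappa^{2(n+1)}\|\rho\|_1$; the $n=0$ term contributes $\|e^{-it\mathcal{L}_1}\rho\|_1\le M^2\|\rho\|_1$ and $\int_{-\infty}^0\|Ce^{-it\mathcal{L}_1}\rho C^*\|_1\,dt\le\kappa^2\|\rho\|_1$. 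The series $\sum_n\kappa^{2n}$ converges exactly because $\kappa<1$, i.e. because $\tilde{\mathrm{c}}_0<1/\sqrt{2}$ -- this is the only place the hypothesis enters -- and summing over $n\ge0$ and combining with the $t\ge0$ bounds gives \eqref{eq:unif-bound2} with $\mathrm{d}_0=(1-2\tilde{\mathrm{c}}_0^2)^{-1}$ and \eqref{eq:unif-bound3} with $\tilde{\mathrm{d}}_0=\tilde{\mathrm{c}}_0^2\bigl((1-\tilde{\mathrm{c}}_0^2)^{-1}+(1-2\tilde{\mathrm{c}}_0^2)^{-1}\bigr)$. All interchanges of $\sum_n$ with $\int dt$ and with $\int_{\Delta_n}$ are justified by Tonelli's theorem (nonnegative integrands) together with the normal convergence of the Dyson--Phillips series in $\mathcal{B}(\mathcal{J}_1(\mathcal{H}))$ recalled just before the statement. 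I expect the one delicate point to be the bookkeeping in the negative-time change of variables and the accompanying observation that the pertinent constant is $\kappa=\tilde{\mathrm{c}}_0(1-\tilde{\mathrm{c}}_0^2)^{-1/2}$, whose smallness is precisely $\tilde{\mathrm{c}}_0<1/\sqrt{2}$; the iterated-smoothness (``peeling'') argument is otherwise routine.
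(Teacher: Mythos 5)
Your proof is correct and follows essentially the same route as the paper: the Dyson--Phillips expansion, the uniform bound $\|e^{-itH}\|\le(1-\tilde{\mathrm{c}}_0^2)^{-1/2}$ and the smoothness bound $\int\|Ce^{-itH}u\|^2\,dt\le\tilde{\mathrm{c}}_0^2(1-\tilde{\mathrm{c}}_0^2)^{-1}\|u\|^2$ from Lemma \ref{lm:ABCD}, iterated link by link to produce a geometric series with ratio $\tilde{\mathrm{c}}_0^2(1-\tilde{\mathrm{c}}_0^2)^{-1}<1$, yielding the same constant $(1-2\tilde{\mathrm{c}}_0^2)^{-1}$. The only (harmless) deviations are that you treat general $\rho$ via the factorization $\rho=bc^*$ instead of reducing to the positive cone, and that you write out the integrated estimate \eqref{eq:unif-bound3}, which the paper leaves to the reader.
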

\begin{proof}
It suffices to prove the lemma for $\rho$ in the cone of positive operators. Let $\rho \in \mathcal{J}_1^+( \mathcal{H} )$, $\rho = u^* u$, with $u \in \mathcal{J}_2( \mathcal{H} )$. For $t \ge 0$, we can choose $\mathrm{d}_0 = 2$ in \eqref{eq:unif-bound2} as mentioned in Remark \ref{rk:intro1}. We prove \eqref{eq:unif-bound2} for $t \le 0$. We estimate the terms in the Dyson series \eqref{eq:DysonL1}--\eqref{eq:DysonL2} as follows: By Lemma \ref{lm:ABCD}, we know that $\| e^{ - i t H } \|_{ \mathcal{B}( \mathcal{H} ) } \le 1 / ( 1 - \tilde{\mathrm{c}}_0^2 )^{1/2}$, which shows that
\begin{equation}
\big \| e^{ - i t \mathcal{L}_1 } \rho \big \|_1 \le  \frac{ 1 }{ 1 - \tilde{\mathrm{c}}_0^2 }  \| \rho \|_1. \label{eq:B0}
\end{equation}
The terms \eqref{eq:DysonL2} are then bounded by
\begin{align*}
\big \|Ê\mathcal{S}_n ( t ) \rho \big \|_1 \le \frac{ 1 }{ 1 - \tilde{\mathrm{c}}_0^2 } \int_0^t \int_0^{s_1} \cdots \int_0^{s_{n-1}} \big \| & C e^{ - i ( s_1 - s_2 ) H } \cdots C e^{ - i ( s_{ n -1 } - s_n ) H } \notag \\
& \quad C e^{ - i s_n H } u^* \big \|_2^2 \notag  ds_n \dots ds_1 . 
\end{align*}
Applying again Lemma \ref{lm:ABCD}, one obtains that
\begin{align}\label{eq:A0}
\int_{ - \infty }^0 \big \|ÊC e^{ - i t H } u \big \|_{ \mathcal{H} }^2 dt \le \frac{ \tilde{\mathrm{c}}_0^2 }{ 1 - \tilde{\mathrm{c}}_0^2 } \| u \|_{ \mathcal{H} }^2 ,
\end{align}
for all $u \in \mathcal{H}$. This in fact implies that
\begin{align}\label{eq:A10}
\int_{ \mathbb{R}Ê} \big \|ÊC e^{ - i t H } u \big \|_{ \mathcal{H} }^2 dt \le \frac{ \tilde{\mathrm{c}}_0^2 }{ 1 - \tilde{\mathrm{c}}_0^2 } \| u \|_{ \mathcal{H} }^2 ,
\end{align}
because, for $s \ge 0$,
\begin{align*}
\int_{ - \infty }^s \big \|ÊC e^{ - i t H } u \big \|_{ \mathcal{H} }^2 dt = \int_{ - \infty }^0 \big \|ÊC e^{ - i ( t + s ) H } u \big \|_{ \mathcal{H} }^2 dt  \le \frac{ \tilde{\mathrm{c}}_0^2 }{ 1 - \tilde{\mathrm{c}}_0^2 } \| e^{ - i s H }Êu \|_{ \mathcal{H} }^2 \le \frac{ \tilde{\mathrm{c}}_0^2 }{ 1 - \tilde{\mathrm{c}}_0^2 } \|Êu \|_{ \mathcal{H} }^2 ,
\end{align*}
where we use the contractivity of $e^{ - i s H }$ in the last inequality. Passing to the limit $s \to \infty$ gives \eqref{eq:A10}.

Now, applying \eqref{eq:A10} $n$ times, and using once again that $\mathcal{J}_2( \mathcal{H} ) \simeq \mathcal{H} \otimes \mathcal{H}$, we find that
\begin{align*}
\big \|Ê\mathcal{S}_n ( t ) \rho \big \|_1 \le \frac{ 1 }{ 1 - \tilde{\mathrm{c}}_0^2 } \Big ( \frac{ \tilde{\mathrm{c}}_0^2 }{ 1 - \tilde{\mathrm{c}}_0^2 } \Big )^{n} \| u^* \|_2^2 .
\end{align*}
Plugging \eqref{eq:B0}  and this estimate into Eqs. \eqref{eq:DysonL1}--\eqref{eq:DysonL2} yields the bound
\begin{equation}
\big \| e^{ - i t \mathcal{L} } \rho \big \|_1 \le \frac{ 1 }{ 1 - 2 \tilde{\mathrm{c}}_0^2 }  \| \rho \|_1 ,
\end{equation}
which proves \eqref{eq:unif-bound2}.

The proof of \eqref{eq:unif-bound3} follows similarly and is left to the reader.
\end{proof}
\begin{theorem}\label{thm:invertibility2}
Suppose that Assumption \ref{Z0} holds, with $\mathrm{c}_0 < 2 - \sqrt{2}$. Then the wave operators 
$\Omega^\pm( \mathcal{L} , \mathcal{L}_0 )$ and $\Omega^\pm( \mathcal{L}_0  , \mathcal{L} )$ exist on 
$\mathcal{J}_1( \mathcal{H} )$, are invertible in $\mathcal{B}( \mathcal{J}_1( \mathcal{H} ) )$ and are inverses of each other,
\begin{align}
\Omega^{ \pm } ( \mathcal{L} , \mathcal{L}_0 )^{-1} = \Omega^{ \pm }( \mathcal{L}_0 , \mathcal{L} ). \label{eq:invert2}
\end{align}
Moreover, these four wave operators leave $\mathcal{D}( \mathcal{L}_0 ) = \mathcal{D}( \mathcal{L} )$ invariant, and the following intertwining property holds:
\begin{align}\label{eq:intertwing2}
\mathcal{L}  = \Omega^{ \pm }( \mathcal{L} , \mathcal{L}_0 ) \mathcal{L}_0 \Omega^{ \pm } ( \mathcal{L}_0 , \mathcal{L} ).
\end{align}
\end{theorem}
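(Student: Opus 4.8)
The plan is to reduce everything, exactly as in the proof of Theorem~\ref{thm:invertibility} for the dissipative operator $H$, to the uniform boundedness of the \emph{group} $\{e^{-it\mathcal{L}}\}_{t\in\mathbb{R}}$ on $\mathcal{J}_1(\mathcal{H})$, which is what the hypothesis $\mathrm{c}_0<2-\sqrt2$ buys us. First I would record the elementary fact that $\mathrm{c}_0<2-\sqrt2$ forces $\tilde{\mathrm{c}}_0<1/\sqrt2$: by Theorem~\ref{thm:bijec} we have $\|e^{-itH}\|_{\mathcal{B}(\mathcal{H})}\le(1-\mathrm{c}_0/2)^{-1}$ for all $t\in\mathbb{R}$, and feeding this into the converse part of Lemma~\ref{lm:ABCD} gives $\tilde{\mathrm{c}}_0^2\le 1-(1-\mathrm{c}_0/2)^2<1/2$. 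Consequently Lemma~\ref{thm:unif-boundL} applies and yields a constant $\mathrm{d}_0$ with $\|e^{-it\mathcal{L}}\rho\|_1\le\mathrm{d}_0\|\rho\|_1$ for all $t\in\mathbb{R}$ (the group being defined, for $t<0$, through the Dyson--Phillips series \eqref{eq:DysonL1}--\eqref{eq:DysonL2}), together with the two-sided Kato-type bound $\int_{\mathbb{R}}\|C(e^{-it\mathcal{L}}\rho)C^*\|_1\,dt\le\tilde{\mathrm{d}}_0\|\rho\|_1$. Moreover, since in particular $\tilde{\mathrm{c}}_0<1$, Theorem~\ref{thm:invertibility} guarantees that the four Hilbert-space wave operators $W_\pm(H,H_0)$ and $W_\pm(H_0,H)$ exist and are invertible in $\mathcal{B}(\mathcal{H})$, that $\{e^{-itH}\}_{t\in\mathbb{R}}$ is uniformly bounded, and (via Lemma~\ref{lm:ABCD}) that $s\mapsto\|Ce^{\pm isH}u\|^2$ is integrable on $\mathbb{R}$ for every $u\in\mathcal{H}$.

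Next I would establish the existence of all four operators $\Omega^\pm(\mathcal{L},\mathcal{L}_0):=\slim_{t\to\mp\infty}e^{it\mathcal{L}}e^{-it\mathcal{L}_0}$ and $\Omega^\pm(\mathcal{L}_0,\mathcal{L}):=\slim_{t\to\mp\infty}e^{it\mathcal{L}_0}e^{-it\mathcal{L}}$. For $\Omega^+(\mathcal{L},\mathcal{L}_0)$ this is Theorem~\ref{thm:existenceOmega-}, and for $\Omega^-(\mathcal{L}_0,\mathcal{L})$ it is the theorem immediately preceding the present one (whose hypotheses hold because, by Theorem~\ref{thm:invertibility}, $W_-(H_0,H)$ is injective with closed --- in fact full --- range). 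For the two remaining operators $\Omega^-(\mathcal{L},\mathcal{L}_0)$ and $\Omega^+(\mathcal{L}_0,\mathcal{L})$ I would rerun the proofs of Theorem~\ref{thm:existenceOmega-} and of Lemma~\ref{thm:existenceOmega+} with the roles of $+\infty$ and $-\infty$ interchanged: the only inputs of those proofs are the uniform boundedness of $\{e^{-it\mathcal{L}}\}$ (now available on all of $\mathbb{R}$ by Step~1, rather than merely the contractivity on $t\ge0$), the existence and invertibility of the $W$'s, the factorisation through $\mathcal{L}_1=\mathrm{ad}(H)$ together with the isomorphism $\mathcal{J}_2(\mathcal{H})\simeq\mathcal{H}\otimes\mathcal{H}$, and the integrability on $\mathbb{R}$ of $s\mapsto\|Ce^{\mp isH}u\|_2^2$ and of $s\mapsto\|C(e^{-is\mathcal{L}}\rho)C^*\|_1$ --- all provided by Step~1. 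I expect this to be the only genuinely delicate point of the argument: checking that the earlier existence proofs really are insensitive to the direction of time, which is exactly where one needs the two-sided versions of the smoothness estimates furnished by Lemma~\ref{thm:unif-boundL}.

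The invertibility and the relation \eqref{eq:invert2} then follow exactly as in the proof of Theorem~\ref{thm:invertibility}. For any $\rho\in\mathcal{J}_1(\mathcal{H})$, using that $\{e^{-it\mathcal{L}}\}$ and $\{e^{-it\mathcal{L}_0}\}$ are groups, one writes
\begin{equation*}
\rho = e^{it\mathcal{L}_0}e^{-it\mathcal{L}}\,\big(e^{it\mathcal{L}}e^{-it\mathcal{L}_0}\rho\big) = e^{it\mathcal{L}_0}e^{-it\mathcal{L}}\big(\Omega^-(\mathcal{L},\mathcal{L}_0)\rho+o(1)\big),
\end{equation*}
and letting $t\to+\infty$ --- the error term $e^{it\mathcal{L}_0}e^{-it\mathcal{L}}o(1)$ being bounded by $\mathrm{d}_0\,o(1)$ thanks to Lemma~\ref{thm:unif-boundL} and the fact that $e^{it\mathcal{L}_0}$ is an isometry of $\mathcal{J}_1(\mathcal{H})$ --- one obtains $\rho=\Omega^-(\mathcal{L}_0,\mathcal{L})\,\Omega^-(\mathcal{L},\mathcal{L}_0)\rho$; interchanging the two factors gives $\Omega^-(\mathcal{L},\mathcal{L}_0)\Omega^-(\mathcal{L}_0,\mathcal{L})=\mathrm{Id}$, and the computation for the $+$ operators is identical with $t\to-\infty$. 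This proves \eqref{eq:invert2} and, in particular, that $\mathcal{L}$ and $\mathcal{L}_0$ are similar. Finally, the invariance of $\mathcal{D}(\mathcal{L}_0)=\mathcal{D}(\mathcal{L})$ under the four wave operators and the intertwining identity \eqref{eq:intertwing2} follow by combining Proposition~\ref{prop:1} --- and its evident analogues for $\Omega^-(\mathcal{L},\mathcal{L}_0)$ and $\Omega^+(\mathcal{L}_0,\mathcal{L})$, proved in exactly the same way --- with the invertibility just established: composing $\mathcal{L}\,\Omega^\pm(\mathcal{L},\mathcal{L}_0)=\Omega^\pm(\mathcal{L},\mathcal{L}_0)\,\mathcal{L}_0$ on the right with the now-available inverse $\Omega^\pm(\mathcal{L}_0,\mathcal{L})$ yields \eqref{eq:intertwing2}; no restriction on domains arises since $\mathcal{L}_0$, hence also the similar operator $\mathcal{L}$, has no eigenvalue, so that $\mathcal{D}=\mathcal{J}_1(\mathcal{H})$.
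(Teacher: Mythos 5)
Your proposal is correct and follows essentially the same route as the paper: reduce $\mathrm{c}_0<2-\sqrt2$ to $\tilde{\mathrm{c}}_0<1/\sqrt2$ via Theorem \ref{thm:bijec} and the converse part of Lemma \ref{lm:ABCD}, invoke Lemma \ref{thm:unif-boundL} for the uniform boundedness of the group $\{e^{-it\mathcal{L}}\}_{t\in\mathbb{R}}$ and the two-sided integrability of $s\mapsto\|C(e^{-is\mathcal{L}}\rho)C^*\|_1$, obtain all four wave operators by factoring through $\mathcal{L}_1=\mathrm{ad}(H)$ and rerunning the existence arguments in both time directions, and then get invertibility by the telescoping argument and the intertwining as in Proposition \ref{prop:1}. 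Your explicit verification that $\tilde{\mathrm{c}}_0^2\le 1-(1-\mathrm{c}_0/2)^2<1/2$ is exactly the computation the paper leaves implicit.
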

\begin{proof}
It follows from Theorem \ref{thm:bijec} and Lemma \ref{lm:ABCD} that if $\mathrm{c}_0 < 2 - \sqrt{2}$ we can choose $\tilde{\mathrm{c}}_0 < 1 / \sqrt{2}$ in \eqref{eq:Z''0}. In particular, the conclusions of Lemma \ref{thm:unif-boundL} hold.

By Theorem \ref{thm:invertibility}, we know that the wave operators $W_\pm( H , H_0)$ and $W_\pm( H_0 , H )$ exist on $\mathcal{H}$. As in Statement \eqref{eq:BBB1} appearing in the proof of Theorem \ref{thm:existenceOmega-}, this implies that $\Omega^\pm( \mathcal{L}_1 , \mathcal{L}_0 )$ and $\Omega^\pm( \mathcal{L}_0  , \mathcal{L}_1 )$ exist on $\mathcal{J}_1( \mathcal{H} )$.

Next, using that $s \mapsto \| C ( e^{ - i s \mathcal{L}_1 } \rho ) C^* \|_1$ is integrable on $\mathbb{R}$, for all $\rho \in \mathcal{J}_1( \mathcal{H} )$, see Lemma \ref{lm:ABCD} and \eqref{eq:CC4}, and that $s \mapsto \| C ( e^{ - i s \mathcal{L} } \rho ) C^* \|_1$ is integrable on $\mathbb{R}$, by Lemma \ref{thm:unif-boundL}, we prove by using the same arguments as in the proof of Theorem \ref{thm:existenceOmega-} that the wave operators $\Omega^\pm( \mathcal{L} , \mathcal{L}_1 )$ and $\Omega^\pm( \mathcal{L}_1  , \mathcal{L} )$  exist on $\mathcal{J}_1( \mathcal{H} )$.

Since the groups $\{ e^{ - i t \mathcal{L}_0 } \}_{ t \in \mathbb{R} }$, $\{ e^{ - i t \mathcal{L}_1 } \}_{ t \in \mathbb{R} }$ and $\{ e^{ - i t \mathcal{L}} \}_{ t \in \mathbb{R} }$ are all uniformly bounded, it is then easy to prove that the wave operators $\Omega^\pm( \mathcal{L} , \mathcal{L}_0 )$ and 
$\Omega^\pm( \mathcal{L}_0  , \mathcal{L} )$ exist, using the ``chain rules''
\begin{align*}
& \Omega^\pm( \mathcal{L} , \mathcal{L}_0 ) = \Omega^\pm( \mathcal{L} , \mathcal{L}_1 ) \Omega^\pm( \mathcal{L}_1 , \mathcal{L}_0 ),  \qquad  \Omega^\pm( \mathcal{L}_0 , \mathcal{L} ) = \Omega^\pm( \mathcal{L}_0 , \mathcal{L}_1 ) \Omega^\pm( \mathcal{L}_1 , \mathcal{L} ) .
\end{align*}
Invertibility of the wave operators and \eqref{eq:invert2} are proven in the same way. The intertwining property follows as in the proof of Proposition \ref{prop:1}.
\end{proof}

\section{A concrete example}\label{sec:example}
\subsection{Choice of a model}
In this section, we study a concrete model of a particle scattering off a 
dynamical target, whose effective dynamics is
given by a master equation of Lindblad type. Pure states of the particle
are unit rays in the Hilbert space $L^2(\mathbb{R}^3) \otimes \mathfrak{h}$, where $\mathfrak{h}$ is a
complex separable Hilbert space used to describe internal degrees
of freedom of the particle, and mixed states are given by density matrices, (i.e., by operators of trace
$1$ in the convex cone of positive trace-class operators). The effective dynamics of the particle is approximated by a one-parameter semi-group generated by a Lindblad operator of the
form
\begin{equation}\label{Lindbla}
  \mathcal{L}:= \mathrm{ad}(-\Delta + H_{\mathrm{int}}) - \frac{i }{2}  \sum_{j \in J} \{ C^{*}_j C_j, (\cdot) \}  +  i  \sum_{j \in J}C_j (\cdot) C^{*}_j,  
\end{equation}
where $\mathrm{ad}(A)\rho:=A \rho - \rho A^*$, and $H_{\mathrm{int}}$ is a self-adjoint operator on 
$\mathfrak{h}$ describing the dynamics of the internal degrees of freedom of the particle. 
To simplify matters, we suppose that $\mathrm{dim}( \mathfrak{h} ) < \infty$, and, without loss of generality, we assume that $H_{ \mathrm{int} } \ge 0$. The Lindbladian $\mathcal{L}$
acts on the Banach space
$\mathcal{J}_1(L^2(\mathbb{R}^3) \otimes \mathfrak{h})$ of trace-class
operators on $L^2(\mathbb{R}^3) \otimes \mathfrak{h}$. Its domain is
denoted by $\mathcal{D}(\mathcal{L})$. In the following, we give conditions on the operators $C_j, j \in J,$ that guarantee the
existence of wave operators, and we prove asymptotic completeness
for certain choices of the $C_j$'s.

  We begin by explaining how to derive meaningful expressions for the
operators $C_j$, $j \in J$.  In many situations, the interaction of the
particle $P$ with the target causes decoherence over the spectrum
of an observable $A=A^{*}$ acting on the Hilbert space $\mathcal{H} = L^2(\mathbb{R}^3) \otimes \mathfrak{h}$ 
of the particle. In our model, we use that every density matrix $\rho$ on
$L^2(\mathbb{R}^3) \otimes \mathfrak{h} $ can be represented as a
kernel operator, 
\begin{equation} \label{argu} \rho:= \rho(x,x'),
\end{equation}
where $x,x' \in \mathbb{R}^3$, and
$$\rho(x,x') \in \mathfrak{h} \otimes \mathfrak{h}, \text{     because     }\text{  }
\mathcal{J}_1(L^2(\mathbb{R}^3) \otimes \mathfrak{h}) \subset
\mathcal{J}_2(L^2(\mathbb{R}^3) \otimes \mathfrak{h}) \simeq
L^2(\mathbb{R}^3\times \mathbb{R}^3 ; \mathfrak{h} \otimes \mathfrak{h}) ;
$$
(see Appendix \ref{kernel} for more details). The variable $x$ stands for the position of the particle. This representation is useful if the interaction of the particle with the target causes decoherence in particle position space. Alternatively, we may consider a model exhibiting decoherence over the spectrum of the momentum operator of the particle, replacing
$x$ and $x'$ in \eqref{argu} by the particle momentum variables $p$ and
$p'$. In the former case (i.e., if decoherence in position space arises), then
\begin{equation}
  (e^{-it \mathcal{L}} \rho)(x,x') \rightarrow \rho(x,x) \delta_{x, x'}
\end{equation}
as $t$ tends to $+ \infty$, as long as $x$ and $x'$ belong to the support of the target.  
A typical choice of a Lindblad generator, $\mathcal{L}_{\text{dec}}$, leading to
this asymptotic behavior is
\begin{equation} \label{ldec} 
\mathcal{L}_{\mathrm{dec}}\rho:=- i \lambda \sum_{j=1}^{3}[G_j, [G_j, \rho ]],
\end{equation}
where $[ A, B]= AB-BA$, $\lambda$ is a complex constant with
$\mathrm{Re}(\lambda)>0$, and $G_j$ is the operator of multiplication by $x_{j} g_{j}(x)$, where $x_{j}$ is
$j$-th component of the particle position, $x$, in standard Cartesian coordinates of $\mathbb{R}^{3}$, and $g_{j}(x)$, $j=1,2,3$ are functions identically equal to $1$ on the support of the target and decreasing rapidly to $0$, outside the target.
We note that
\begin{align*}
  [G_j, \rho](x,x')=(x_j g_j(x) -x'_j g_j(x')) \rho(x,x'),
\end{align*}
hence
\begin{equation}
  (\mathcal{L}_{\mathrm{dec}} \rho)(x,x')=- i \lambda \sum_{j=1}^{3} (x_j g_j(x) - x'_j g_j(x'))^2 \rho(x,x').
\end{equation}
We observe that
$ \mathcal{L}_{\mathrm{dec}}$ can be recast in the form of
\eqref{Lindbla}, because
\begin{equation}
  - i [G_j, [G_j, \rho]]= -i [G_j, G_j \rho- \rho G_j] = -i (G_j^2 \rho + \rho G_j^2) + 2 i G_j \rho G_j, 
\end{equation}
hence $C_j=G_j$, $j \in J \equiv \lbrace 1,2,3 \rbrace$.

If the time evolution of the density matrix $\rho$ were given by
\begin{equation}
  \partial_t \rho_t(x,x') = -i  (\mathcal{L}_{\mathrm{dec}} \rho_t)(x,x') \equiv - \lambda \vert x-x' \vert ^2 \rho_t(x,x'),
\end{equation}
whenever $x$ and $x'$ belong to the support of the target, we would deduce that the matrix elements 
$\rho_t(x,x')$, $x\neq x'$, with $x$ and $x'$ in the support of the target, of the density matrix $\rho$ decay exponentially fast in $t$, with a rate proportional to the square of the distance between $x$ and $x'$. 

 Of course decoherence can also arise in the internal space of the particle, i.e., for the internal degrees of freedom of $P$, in momentum space, or in momentum space \textit{and} position space, or in momentum space and/or position space and/or internal space. As in previous sections, we assume that $J=\{1\}$, 
 since this does not affect the nature of our conclusions, 
and we denote $C_1$ in \eqref{Lindbla} by $C$, throughout the rest of this section.
We consider three classes of examples:
\vspace{2mm}

\begin{itemize}
\item $C= g(X) \cdot X$, where $X$ is multiplication by $x=(x_1, x_2, x_3) \in \mathbb{R}^{3}$,
  $g: \mathbb{R}^3 \rightarrow \mathbb{C}^3$ is a function of
  rapid decay at infinity. This is a slightly simplified version of the example discussed above,
  where the interaction of the particle with the target is
  localized in space near the support of the target. It leads to (partial) decoherence 
  in position space. The Lindblad operator in \eqref{Lindbla} is then given by
  \begin{equation}
    \label{Lili}
    \mathcal{L}\rho= \mathrm{ad}(-\Delta + H_{\mathrm{int}} )\rho - i [g(X) \cdot X, [g(X) \cdot X, \rho]].
  \end{equation}
  \vspace{0mm}

\item $C$ depends non-trivially on internal degrees of freedom of the
  particle, e.g., on a component of the spin of the particle. If $\mathrm{dim}(\mathfrak{h})< \infty$
  a physically reasonable choice is
$$C=g(X)\cdot S,$$
where $g: \mathbb{R}^3 \rightarrow \mathbb{C}^3$ is a function that
vanishes rapidly at infinity, and $S$ is the spin
operator. The Lindblad operator in \eqref{Lindbla} is then given by
\begin{equation}\label{Lili2}
  \mathcal{L}\rho= \mathrm{ad}(-\Delta + H_{\mathrm{int}} + \beta B(X) \cdot S)\rho - i [g(X)\cdot S, [g(X)\cdot S, \rho]], 
\end{equation}
where $B(x) \in \mathbb{R}^{3}$ is the magnetic field at the point $x \in \mathbb{R}^{3}$, and 
$\beta$ is a coupling constant. The operator $ \beta B(X) \cdot S$ describes the
Zeeman term.  \vspace{4mm}

\item The interaction between the particle and the target may lead to
  decoherence in position space \textit{and} in momentum space. In this case, we may choose $C$ to be given by
$$C=g(X)\cdot (\alpha X + \beta P) f(P) + \mathrm{h.c.}$$
where $g: \mathbb{R}^3 \rightarrow \mathbb{C}^3$ and
$f: \mathbb{R}^3 \rightarrow \mathbb{C}$ are functions decreasing
rapidly at infinity.
\end{itemize}

\subsection{Validating abstract assumptions by imposing simple conditions on $C$}

In order to verify the assumptions of Theorem \ref{thm:main0} for our concrete choices of operators $C$, we appeal to a variety of known results. In what follows we discuss some examples.

Let $\langle X \rangle$ be the operator of multiplication by  $\sqrt{1+x^{2}}$. It is well-known that the map $t \mapsto \big \|Ê\langle X \rangle^{-1-\varepsilon } e^{ i t \Delta } \varphi \big \|$ is integrable on $\mathbb{R}$, for all 
$\varepsilon > 0$ and all $\varphi \in \mathcal{D}( \langle X \rangle^{1+\varepsilon} ) \subset L^2( \mathbb{R}^3 )$. This yields the following result.
\begin{proposition}
Suppose that $\| C^* C \langle X \rangle^{1+\varepsilon} \| < \infty$, for some $\varepsilon > 0$. Then $\Omega^+( \mathcal{L} , \mathcal{L}_0 )$ exists on $\mathcal{J}_1( L^2( \mathbb{R}^3 \otimes \mathfrak{h} ) )$.
\end{proposition}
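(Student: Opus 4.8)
The plan is to reduce the statement to Theorem~\ref{thm:main0} (equivalently, to Theorem~\ref{thm:existenceOmega-}) by checking that Assumption~\ref{Y0} holds for the free operator $H_0 := -\Delta + H_{\mathrm{int}}$ on $\mathcal{H} := L^2(\mathbb{R}^3)\otimes\mathfrak{h}$, so that $\mathcal{L}_0 = \mathrm{ad}(H_0)$. Since $-\Delta$ and $H_{\mathrm{int}}$ act on different tensor factors and therefore commute, the free evolution factorizes, $e^{-itH_0} = e^{it\Delta}\otimes e^{-itH_{\mathrm{int}}}$, and this is the only structural fact about $H_0$ that I will use.

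First I would fix the dense subspace: let $\mathcal{E}$ be the finite linear span of elementary tensors $\varphi\otimes\chi$ with $\varphi\in\mathcal{D}(\langle X\rangle^{1+\varepsilon})$ and $\chi\in\mathfrak{h}$. Since $\mathcal{D}(\langle X\rangle^{1+\varepsilon})$ is dense in $L^2(\mathbb{R}^3)$ and $\dim\mathfrak{h}<\infty$, the set $\mathcal{E}$ is dense in $\mathcal{H}$. Then, for $u=\varphi\otimes\chi$, I would use the factorization $C^*C = (C^*C\langle X\rangle^{1+\varepsilon})\langle X\rangle^{-1-\varepsilon}$ (legitimate because $\langle X\rangle^{1+\varepsilon}\langle X\rangle^{-1-\varepsilon}=\mathrm{Id}$ and $C^*C\langle X\rangle^{1+\varepsilon}$ is bounded by hypothesis), together with the facts that the weight $\langle X\rangle$ acts only on the $L^2(\mathbb{R}^3)$ factor — hence commutes with $e^{-itH_{\mathrm{int}}}$ — and that $e^{-itH_{\mathrm{int}}}$ is unitary, to obtain
\begin{align*}
\big\| C^*C\, e^{-itH_0}u \big\|_{\mathcal{H}}
&\le \big\| C^*C\langle X\rangle^{1+\varepsilon} \big\|\; \big\| \langle X\rangle^{-1-\varepsilon} e^{-itH_0}u \big\|_{\mathcal{H}} \\
&= \big\| C^*C\langle X\rangle^{1+\varepsilon} \big\|\; \|\chi\|_{\mathfrak{h}}\; \big\| \langle X\rangle^{-1-\varepsilon} e^{it\Delta}\varphi \big\|_{L^2(\mathbb{R}^3)} .
\end{align*}
By the weighted dispersive estimate recalled just before the statement, the last factor is integrable in $t$ over $\mathbb{R}$; summing over the finitely many elementary tensors making up a general $u\in\mathcal{E}$ then shows that \eqref{eq:Y0} holds on $\mathcal{E}$, so Assumption~\ref{Y0} is verified and Theorem~\ref{thm:main0} yields existence of $\Omega^+(\mathcal{L},\mathcal{L}_0)$ on $\mathcal{J}_1(\mathcal{H})$.

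I do not expect a serious obstacle: essentially all the analytic content sits in the free-evolution decay $t\mapsto\|\langle X\rangle^{-1-\varepsilon}e^{it\Delta}\varphi\|_{L^2}\in L^1(\mathbb{R})$, which is quoted as classical. The only points needing a line of justification are the bookkeeping — that $C^*C$ factors through $\langle X\rangle^{-1-\varepsilon}$ with bounded first factor, and that the position-space weight commutes with the internal unitary $e^{-itH_{\mathrm{int}}}$ — both immediate here because $H_{\mathrm{int}}$ acts on the finite-dimensional factor $\mathfrak{h}$ alone. (If one wished to absorb an $x$-dependent term such as a Zeeman term $\beta B(X)\cdot S$ into the free generator, this factorization would fail and one would instead invoke a weighted decay estimate for $e^{-it(-\Delta+\beta B(X)\cdot S)}$; but for $H_0=-\Delta+H_{\mathrm{int}}$ as in \eqref{Lindbla} no such refinement is needed.)
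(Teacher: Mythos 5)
Your proposal is correct and is exactly the argument the paper intends: the paper quotes the weighted decay $t\mapsto\|\langle X\rangle^{-1-\varepsilon}e^{it\Delta}\varphi\|\in L^1(\mathbb{R})$ and immediately deduces the proposition, i.e.\ it verifies Assumption \ref{Y0} via the factorization $C^*C=(C^*C\langle X\rangle^{1+\varepsilon})\langle X\rangle^{-1-\varepsilon}$ on a dense set of vectors and then invokes Theorem \ref{thm:main0}. Your bookkeeping with the tensor factorization $e^{-itH_0}=e^{it\Delta}\otimes e^{-itH_{\mathrm{int}}}$ and the unitarity of $e^{-itH_{\mathrm{int}}}$ just makes explicit the details the paper leaves to the reader.
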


The optimal Kato smoothness estimate
\begin{align*}
  & \int_{ \mathbb{R} } \big \| | X |^{-1} e^{ i t \Delta } \varphi \big \|^2 dt \le \pi \| \varphi \|^2 ,
\end{align*}
for all $\varphi \in L^2( \mathbb{R}^3 )$, is established in \cite{Simon}. Applying Theorem \ref{thm:main0}, we immediately arrive at the following proposition.
\begin{proposition}
Suppose that $\| C | X | \| < 2 \pi^{-1/2} $. Then $\Omega^+( \mathcal{L} , \mathcal{L}_0 )$ and $\Omega^-( \mathcal{L}_0 , \mathcal{L} )$ exist on $\mathcal{J}_1( L^2( \mathbb{R}^3 \otimes \mathfrak{h} ) )$.

If  $\|ÊC | X | \| <  ( 2 - \sqrt{2} ) \pi^{-1/2}$ then $\Omega^+( \mathcal{L} , \mathcal{L}_0 )$ and $\Omega^-( \mathcal{L}_0 , \mathcal{L} )$ exist and are complete.
\end{proposition}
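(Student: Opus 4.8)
The plan is to derive both assertions directly from Theorem~\ref{thm:main0} by verifying Assumption~\ref{Z0} for the operator $C$ at hand, the sharp constant being supplied by the estimate of \cite{Simon} quoted above. Recall that here $H_0=-\Delta+H_{\mathrm{int}}$ and $\mathcal{L}_0=\mathrm{ad}(H_0)$, with $H_{\mathrm{int}}$ acting only on the finite-dimensional factor $\mathfrak h$, so that $e^{-itH_0}=e^{it\Delta}\otimes e^{-itH_{\mathrm{int}}}$ on $L^2(\mathbb{R}^3)\otimes\mathfrak h$, and the multiplication operator $|X|$ (acting as $|X|\otimes\mathrm{Id}$) commutes with $\mathrm{Id}\otimes e^{-itH_{\mathrm{int}}}$.

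First I would factor $C=(C|X|)\,|X|^{-1}$, where $C|X|$ is bounded by hypothesis. For $u=\sum_k\varphi_k\otimes\chi_k$ with $\varphi_k\in C_c^\infty(\mathbb{R}^3)$ and $(\chi_k)$ an orthonormal basis of $\mathfrak h$, the vector $|X|^{-1}e^{-itH_0}u=\sum_k(|X|^{-1}e^{it\Delta}\varphi_k)\otimes(e^{-itH_{\mathrm{int}}}\chi_k)$ is well defined for every $t$ (since $|x|^{-1}\in L^2_{\mathrm{loc}}(\mathbb{R}^3)$ and each $e^{it\Delta}\varphi_k$ is bounded), and $|X|$ applied to it returns $e^{-itH_0}u$; hence
\[
\big\|Ce^{-itH_0}u\big\|\le\|C|X|\|\,\big\||X|^{-1}e^{-itH_0}u\big\|.
\]
Because $\mathrm{Id}\otimes e^{-itH_{\mathrm{int}}}$ is unitary and commutes with $|X|^{-1}$, the squared norm splits orthogonally, and the optimal bound $\int_{\mathbb{R}}\||X|^{-1}e^{it\Delta}\varphi\|^2\,dt\le\pi\|\varphi\|^2$ of \cite{Simon} gives
\[
\int_{\mathbb{R}}\big\||X|^{-1}e^{-itH_0}u\big\|^2\,dt=\sum_k\int_{\mathbb{R}}\big\||X|^{-1}e^{it\Delta}\varphi_k\big\|^2\,dt\le\pi\sum_k\|\varphi_k\|^2=\pi\|u\|^2.
\]
Combining the two displays yields $\int_{\mathbb{R}}\|Ce^{-itH_0}u\|^2\,dt\le\pi\|C|X|\|^2\|u\|^2$ for $u$ in a dense subspace; since $C$ is bounded and the right-hand side is continuous in $u$, the bound extends to all $u\in L^2(\mathbb{R}^3)\otimes\mathfrak h$. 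Thus Assumption~\ref{Z0} holds with $\mathrm{c}_0=\pi^{1/2}\|C|X|\|$.

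It then remains to invoke Theorem~\ref{thm:main0}: the hypothesis $\|C|X|\|<2\pi^{-1/2}$ is exactly $\mathrm{c}_0<2$, giving the existence of $\Omega^+(\mathcal{L},\mathcal{L}_0)$ and $\Omega^-(\mathcal{L}_0,\mathcal{L})$ on $\mathcal{J}_1(L^2(\mathbb{R}^3)\otimes\mathfrak h)$; and $\|C|X|\|<(2-\sqrt2)\pi^{-1/2}$ is exactly $\mathrm{c}_0<2-\sqrt2$, which in addition gives that these wave operators are invertible in $\mathcal{B}(\mathcal{J}_1(\mathcal{H}))$, mutually inverse, and asymptotically complete. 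The computation is short; the only points needing care are making the factorization $C=(C|X|)|X|^{-1}$ rigorous in the presence of the singular weight (handled by working first on a dense set of smooth compactly supported data) and checking that the scalar estimate of \cite{Simon} tensorizes through the internal factor — which it does trivially because $H_{\mathrm{int}}$ enters only via a unitary commuting with $|X|$. So there is no substantial obstacle beyond correctly applying Theorem~\ref{thm:main0}.
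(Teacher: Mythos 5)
Your proposal is correct and follows essentially the same route as the paper: the paper likewise obtains Assumption \ref{Z0} with $\mathrm{c}_0=\pi^{1/2}\lVert C\lvert X\rvert\rVert$ from the optimal Kato smoothness estimate of \cite{Simon} via the factorization $C=(C\lvert X\rvert)\lvert X\rvert^{-1}$, and then applies Theorem \ref{thm:main0} directly. The additional details you supply (the tensorization through the unitary internal dynamics, the dense-subspace treatment of the singular weight, and the extension of the bound by density) are exactly the routine steps the paper leaves implicit.
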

In a similar way we may rely on the estimate \cite{Simon}:
\begin{align*}
  & \int_{ \mathbb{R} } \big \| \langle X \rangle^{-1} ( 1 - \Delta )^{\frac14} e^{ i t \Delta } \varphi \big \|^2 dt \le \frac{Ê\pi }{ 2 } \| \varphi \|^2 .
\end{align*}

Another possibility is to relate the operator $C$ to a potential from a large class, in particular to a Rollnik potential, using the estimate 
\begin{align}
  & \int_{ \mathbb{R} } \big \| D( X ) e^{ i t \Delta } \varphi \big \|^2 dt \le \frac{ \|D^2\|_{ \mathrm{R} } }{ 2 \pi } \| \varphi \|^2 , \label{eq:Rollnik}
\end{align}
for all $\varphi \in L^2( \mathbb{R}^3 )$, where $D( X )$ denotes the operator of multiplication by the real-valued Rollnik potential $D(x)$. We recall \cite{ReedSimon2} that a measurable function $D : \mathbb{R}^3 \to \mathbb{C}$ is called a Rollnik potential iff
\begin{equation*}
  \| D \|_{ \mathrm{R} }^2 := \int_{ \mathbb{R}^3 } \frac{ | D ( x ) | | D ( y ) | }{ | x - y |^2 } dx dy < \infty.
\end{equation*}
Estimate \eqref{eq:Rollnik} follows from the fact that for any real-valued Rollnik potential $D$, and for all $\kappa \in \mathbb{C}$ with $\mathrm{Re}( \kappa ) > 0$, the operator $D(X) ( - \Delta + \kappa^2 )^{-1} D(X)$,  has the kernel
  \begin{equation*}
    \frac{ D ( x )Êe^{ - \kappa | x - y | } D( y ) }{ 4 \pi | x - y | } ,
  \end{equation*}
and hence, for all $z \in \mathbb{C} \setminus \mathbb{R}$,
  \begin{equation*}
    \big \| D( X ) ( - \Delta - z )^{-1} D( X ) \big \| \le \frac{1}{ 4 \pi } \| D^2 \|_{ \mathrm{R} }.
  \end{equation*}
By \cite{Kato1}, this implies \eqref{eq:Rollnik}, and applying Theorem \ref{thm:main0}, we obtain the following result.
\begin{proposition}
Suppose that $D$ is a real-valued, invertible Rollnik potential such that $\| C D( X )^{-1}  \| \| D^2 \|_{ \mathrm{R} }^{1/2} < 8 \pi^{1/2}$. Then $\Omega^+( \mathcal{L} , \mathcal{L}_0 )$ and $\Omega^-( \mathcal{L}_0 , \mathcal{L} )$ exist on $\mathcal{J}_1( L^2( \mathbb{R}^3 \otimes \mathfrak{h} ) )$.

If  $\|ÊC D( X )^{-1} \| \| D^2 \|_{ \mathrm{R} }^{1/2} <  4 ( 2 - \sqrt{2} ) \pi^{1/2}$ then $\Omega^+( \mathcal{L} , \mathcal{L}_0 )$ and $\Omega^-( \mathcal{L}_0 , \mathcal{L} )$ exist and are complete.
\end{proposition}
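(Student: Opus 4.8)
The plan is to reduce both assertions to a verification of Assumption \ref{Z0} for the operator $C$ relative to $H_0 := -\Delta + H_{\mathrm{int}}$, and then to quote Theorem \ref{thm:main0}. Since $D$ is real-valued and invertible, $D(X)^{-1}$ is a (densely defined) multiplication operator, and the hypothesis $\|CD(X)^{-1}\|\,\|D^2\|_{\mathrm{R}}^{1/2}<\infty$ presupposes that $B := CD(X)^{-1}$ extends to a bounded operator on $L^2(\mathbb{R}^3)\otimes\mathfrak{h}$; one then has the factorisation $C = B\,D(X)$, where here and below $D(X)$ stands for $D(X)\otimes\mathrm{Id}_{\mathfrak{h}}$. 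Hence, for all $u\in L^2(\mathbb{R}^3)\otimes\mathfrak{h}$,
\[
\int_{\mathbb{R}}\big\|C e^{-itH_0}u\big\|^2\,dt \;\le\; \|B\|^2\int_{\mathbb{R}}\big\|D(X)e^{-itH_0}u\big\|^2\,dt ,
\]
so it suffices to bound the last integral by a multiple of $\|u\|^2$.

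To do so, I would first dispose of the internal degrees of freedom. As $H_{\mathrm{int}}$ is self-adjoint on the finite-dimensional space $\mathfrak{h}$, fix an orthonormal eigenbasis $(f_k)$ with $H_{\mathrm{int}}f_k=\mu_k f_k$ and write $u=\sum_k u_k\otimes f_k$ with $u_k\in L^2(\mathbb{R}^3)$. Since $-\Delta\otimes\mathrm{Id}$ and $\mathrm{Id}\otimes H_{\mathrm{int}}$ commute, $e^{-itH_0}u=\sum_k e^{-it\mu_k}(e^{it\Delta}u_k)\otimes f_k$; as $D(X)\otimes\mathrm{Id}_{\mathfrak{h}}$ acts trivially on the $\mathfrak{h}$-factor and the phases $e^{-it\mu_k}$ drop out of the norm, we get $\|D(X)e^{-itH_0}u\|^2=\sum_k\|D(X)e^{it\Delta}u_k\|^2$. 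Integrating in $t$ and applying the Rollnik Kato-smoothness estimate \eqref{eq:Rollnik} term by term,
\[
\int_{\mathbb{R}}\big\|D(X)e^{-itH_0}u\big\|^2\,dt \;=\; \sum_k\int_{\mathbb{R}}\big\|D(X)e^{it\Delta}u_k\big\|^2\,dt \;\le\; a^2\,\|D^2\|_{\mathrm{R}}\,\|u\|^2 ,
\]
where $a>0$ is the explicit numerical constant furnished by \eqref{eq:Rollnik}. Combining the two displays shows that Assumption \ref{Z0} holds with $\mathrm{c}_0 = a\,\|CD(X)^{-1}\|\,\|D^2\|_{\mathrm{R}}^{1/2}$. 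The first bound in the statement is then exactly the condition $\mathrm{c}_0 < 2$ and the second exactly $\mathrm{c}_0 < 2-\sqrt{2}$, so Theorem \ref{thm:main0} yields the existence of $\Omega^+(\mathcal{L},\mathcal{L}_0)$ and $\Omega^-(\mathcal{L}_0,\mathcal{L})$ in the first case, and their existence together with asymptotic completeness in the second.

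There is no essential obstacle: once \eqref{eq:Rollnik} is in hand the argument is a short verification. The two points requiring care are (a) that $CD(X)^{-1}$ is genuinely bounded --- this is where the invertibility hypothesis on $D$ is used, and it is built into the stated norm condition --- and (b) bookkeeping of the constant $a$, so that ``$\mathrm{c}_0<2$'' and ``$\mathrm{c}_0<2-\sqrt{2}$'' translate into the displayed inequalities for $\|CD(X)^{-1}\|\,\|D^2\|_{\mathrm{R}}^{1/2}$. The reduction over $\mathfrak{h}$ is harmless precisely because $\mathrm{dim}(\mathfrak{h})<\infty$ and $H_{\mathrm{int}}$ commutes with $-\Delta\otimes\mathrm{Id}$; if a Zeeman term $\beta B(X)\cdot S$ were present in the free Hamiltonian, the same scheme would apply, provided \eqref{eq:Rollnik} is replaced by a Kato-smoothness estimate for $D(X)$ relative to $-\Delta+H_{\mathrm{int}}+\beta B(X)\cdot S$.
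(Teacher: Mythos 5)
Your reduction is exactly the one the paper intends: factor $C = \bigl(CD(X)^{-1}\bigr)\,D(X)$, use the Rollnik smoothness estimate \eqref{eq:Rollnik} to verify Assumption \ref{Z0} (the diagonalization of $H_{\mathrm{int}}$ on the finite-dimensional factor $\mathfrak{h}$, which you spell out, is indeed the only point the paper leaves implicit), and then invoke Theorem \ref{thm:main0}. Structurally the argument is sound.

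The one step you defer to ``bookkeeping'' is, however, precisely where the claim fails as written. From \eqref{eq:Rollnik} your constant is $a = (2\pi)^{-1/2}$, so
\begin{equation*}
\mathrm{c}_0 \;=\; \frac{1}{\sqrt{2\pi}}\,\bigl\| C D(X)^{-1} \bigr\|\,\| D^2 \|_{\mathrm{R}}^{1/2},
\end{equation*}
and the condition $\mathrm{c}_0 < 2$ is equivalent to $\| C D(X)^{-1} \|\,\| D^2 \|_{\mathrm{R}}^{1/2} < 2\sqrt{2\pi} = \sqrt{8}\,\pi^{1/2} \approx 5.01$, \emph{not} to the displayed bound $8\,\pi^{1/2} \approx 14.18$; similarly $\mathrm{c}_0 < 2-\sqrt{2}$ gives the threshold $\sqrt{2}\,(2-\sqrt{2})\,\pi^{1/2}$ rather than $4(2-\sqrt{2})\,\pi^{1/2}$. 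Both stated constants exceed what the argument delivers by the same factor $2\sqrt{2}$, which strongly suggests a misprint in the statement ($8$ in place of $8^{1/2}$, then propagated); compare with the preceding proposition, where the analogous bookkeeping $\|C|X|\|\sqrt{\pi}<2$ does come out correctly as $\|C|X|\|<2\pi^{-1/2}$. So your proof establishes the proposition with the corrected thresholds $\sqrt{8}\,\pi^{1/2}$ and $\sqrt{2}(2-\sqrt{2})\,\pi^{1/2}$, but your assertion that the displayed inequalities are ``exactly'' $\mathrm{c}_0<2$ and $\mathrm{c}_0<2-\sqrt{2}$ is false; you should carry out the arithmetic and either prove the statement with the smaller constants or note explicitly that the stated ones do not follow from \eqref{eq:Rollnik}.
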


Using the Hardy-Littlewood-Sobolev inequality, (see e.g. \cite{LiebLoss}), the previous proposition can be applied to the concrete examples of the previous subsection. Considering for instance the Lindblad operator of \eqref{Lili2}, we have:
\begin{corollary}
  Let   $g_j\in L^{3/2} (\mathbb{R}^3 )\cap L^{\infty} (\mathbb{R}^3 )$, $g_j>0$ almost everywhere for $j = 1 , 2 , 3$. If
  $ \lVert (\sum_jg_j)^{1/2} \rVert_{\infty}^{} \lVert \sum_jg_j
  \rVert_{3/2}^{1/2} < \bigl(3\lVert S
  \rVert_{}^{} \bigr)^{-1}\pi^{1/3}(\tfrac{2^{19}}{3})^{\frac{1}{6}}$,
  then $\Omega^+(\mathcal{L},\mathcal{L}_0)$ and
  $\Omega^-(\mathcal{L}_0,\mathcal{L})$ exist for the Lindblad-type
  operator of \eqref{Lili2}.

  If in addition,
  $\lVert (\sum_jg_j)^{1/2} \rVert_{\infty}^{} \lVert \sum_jg_j
  \rVert_{3/2}^{1/2} < \bigl(3\lVert S
  \rVert_{}^{} \bigr)^{-1}\pi^{1/3}(\tfrac{2^{19}}{3^{13}})^{\frac{1}{6}}$,
  then the wave operators are asymptotically complete.
\end{corollary}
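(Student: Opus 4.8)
The plan is to deduce the Corollary from the preceding Proposition on Rollnik potentials by exhibiting, for the operator $C = g(X)\cdot S = \sum_{j=1}^{3} g_j(X) S_j$ of \eqref{Lili2}, a scalar Rollnik potential $D$ built from the $g_j$'s whose Rollnik norm is controlled, through the Hardy--Littlewood--Sobolev (HLS) inequality, by the $L^{3/2}$-norm that appears in the statement, and then tracking the constants.

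First, I would take $D(x) := \big(\sum_j g_j(x)\big)^{1/2}$ (up to a convenient constant multiple). Since each $g_j > 0$ almost everywhere, $D$ is real-valued and $D(X)$ is injective, with $D(X)^{-1}$ the operator of multiplication by $(\sum_j g_j)^{-1/2}$; moreover $D^2 = \sum_j g_j \in L^{3/2}(\mathbb{R}^3)$ by hypothesis. The sharp HLS inequality on $\mathbb{R}^3$ for the Riesz kernel $|x-y|^{-2}$ (Lieb's theorem; see \cite{LiebLoss}) then shows that $D^2$ is a Rollnik potential with $\| D^2 \|_{\mathrm R} \le C_{\mathrm{HLS}}^{1/2}\,\big\| \sum_j g_j \big\|_{3/2}$, where $C_{\mathrm{HLS}}$ is the explicit sharp constant (a ratio of Gamma factors times a power of $\pi$).

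Second, I would estimate the bounded operator $C D(X)^{-1}$. Because $\dim\mathfrak{h}<\infty$ and $S$ is bounded, it is enough to bound, pointwise in $x$, the fibre operator $\sum_j g_j(x) S_j\,(\sum_k g_k(x))^{-1/2}$ on $\mathfrak{h}$; using the triangle inequality over the three spin components together with $\|S_j\|\le\|S\|$ and $\sum_k g_k \ge g_j$, this yields a bound of the form $\| C D(X)^{-1} \| \le (\text{numerical const})\cdot\|S\|\,\big\|(\sum_j g_j)^{1/2}\big\|_\infty$, and this combinatorial constant is precisely what produces the factor $3$ in the statement. Here one should also note that Assumption \ref{Z0} for the generator of \eqref{Lili2}, whose free part carries the additional terms $H_{\mathrm{int}}$ and $\beta B(X)\cdot S$, reduces to the estimate \eqref{eq:Rollnik} for $e^{it\Delta}$: $H_{\mathrm{int}}$ commutes with $-\Delta$ and generates a unitary group on the finite-dimensional space $\mathfrak{h}$, so it drops out of the Kato-smoothness integral, and the Zeeman term is treated the same way (or absorbed) in the regime considered.

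Finally, I would feed the bounds from the first two steps into the two thresholds of the preceding Proposition, namely $\|CD(X)^{-1}\|\,\|D^2\|_{\mathrm R}^{1/2}<8\pi^{1/2}$ for the existence of $\Omega^{\pm}(\mathcal{L},\mathcal{L}_0)$ and $\Omega^{\pm}(\mathcal{L}_0,\mathcal{L})$, and $\|CD(X)^{-1}\|\,\|D^2\|_{\mathrm R}^{1/2}<4(2-\sqrt2)\pi^{1/2}$ for their completeness, and simplify. The only real obstacle is the arithmetic: one must carry the sharp HLS constant through its Gamma-function value, keep the factors of $3$ (respectively $\sqrt3$) arising from summing over the three spin components, and combine everything with the two thresholds so that the resulting product of powers of $2$, $3$, $\pi$ and $\|S\|$ comes out exactly as $(3\|S\|)^{-1}\pi^{1/3}(2^{19}/3)^{1/6}$ and $(3\|S\|)^{-1}\pi^{1/3}(2^{19}/3^{13})^{1/6}$. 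Once these inequalities hold, existence and completeness of the four wave operators are immediate from the preceding Proposition, hence from Theorem \ref{thm:main0}.
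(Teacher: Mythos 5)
Your strategy is exactly the one the paper intends: the paper's entire ``proof'' of this corollary is the one-line remark that the Hardy--Littlewood--Sobolev inequality lets one apply the preceding Rollnik proposition, and your choice $D=(\sum_jg_j)^{1/2}$ (legitimate since $g_j>0$ a.e.\ makes $D(X)$ invertible), the sharp HLS bound $\lVert D^2\rVert_{\mathrm R}\le 2^{1/3}\pi^{2/3}\lVert\sum_jg_j\rVert_{3/2}$, and the bound $\lVert CD(X)^{-1}\rVert\le 3\lVert S\rVert\,\lVert(\sum_jg_j)^{1/2}\rVert_{\infty}$ are the natural way to fill that in. (Note that the pointwise identity $\sum_jg_j(x)\,(\sum_kg_k(x))^{-1/2}=(\sum_jg_j(x))^{1/2}$ actually gives $\lVert CD(X)^{-1}\rVert\le\lVert S\rVert\,\lVert(\sum_jg_j)^{1/2}\rVert_{\infty}$ without the factor $3$; the $3$ in the statement comes from the cruder componentwise bound $g_j\le\sum_kg_k$.)

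The genuine gap is the step you dismiss as ``only arithmetic.'' You cannot simply assert that the constants ``come out exactly'' as $(3\lVert S\rVert)^{-1}\pi^{1/3}(2^{19}/3)^{1/6}$ and $(3\lVert S\rVert)^{-1}\pi^{1/3}(2^{19}/3^{13})^{1/6}$: whatever single bound $\lVert CD(X)^{-1}\rVert\,\lVert D^2\rVert_{\mathrm R}^{1/2}\le K\,\lVert(\sum_jg_j)^{1/2}\rVert_{\infty}\lVert\sum_jg_j\rVert_{3/2}^{1/2}$ you establish, feeding it into the two thresholds $8\pi^{1/2}$ and $4(2-\sqrt2)\pi^{1/2}$ of the Rollnik proposition yields two sufficient conditions whose ratio is forced to be $2/(2-\sqrt2)=2+\sqrt2\approx 3.41$, whereas the ratio of the two constants in the corollary is $(3^{12})^{1/6}=9$. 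So at most one of the two displayed constants can be produced by your substitution, and you must actually carry out the computation, state the constants you obtain, and reconcile (or correct) the discrepancy rather than assume it closes. A second, smaller gap: for \eqref{Lili2} the self-adjoint part of the Lindbladian is $-\Delta+H_{\mathrm{int}}+\beta B(X)\cdot S$. Your remark that $H_{\mathrm{int}}$ drops out is correct, since $e^{-itH_{\mathrm{int}}}$ is unitary and commutes with multiplication by $g_j(x)$; but a position-dependent Zeeman term does not commute with $-\Delta$, so it cannot be ``treated the same way'' --- you must either take $\mathcal{L}_0=\mathrm{ad}(-\Delta+H_{\mathrm{int}})$ and account for $\beta B(X)\cdot S$ separately, assume $B$ constant, or verify the Kato-smoothness estimate \eqref{eq:Rollnik} for the full free Hamiltonian. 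Make that choice explicit.
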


\section{Scattering theory and particle capture}\label{sec:capture}

In this section we explain how the analysis of Sections \ref{sec:dissipative} and \ref{sec:Lindblad} can be modified to prove Theorem \ref{thm:capture}. As in Section \ref{sec:dissipative}, we set $\| \cdot  \| = \| \cdot \|_{ \mathcal{H} }$ to simplify the notations.

\subsection{Proof of Theorem \ref{thm:capture}}

We begin our proof by studying the wave operators for the dissipative operator $H$. We recall that the absolutely continuous subspace, $\mathcal{H}_{ \mathrm{ac} }( H )$, for the dissipative operator $H \equiv H_0  + V - i C^* C / 2$ can be defined as follows (\cite{Davies4,Davies1}): Let
\begin{equation*}
M( H ) := \Big \{ u \in \mathcal{H} , \exists \mathrm{c}_u > 0 ,  \forall v \in \mathcal{H} , \int_0^\infty \big |Ê\langle e^{ - i t H }Êu , v \rangle \big |^2 dt \le \mathrm{c}_u \| v \|^2 \Big \}.
\end{equation*}
Then $\mathcal{H}_{ \mathrm{ac} }( H ) := \overline{ M ( H ) }$ is the closure of $M( H )$ in $\mathcal{H}$. It is proven in \cite{Davies1} that
\begin{equation*}
\mathcal{H}_{ \mathrm{ac} }( H ) = \mathcal{H}_{ \mathrm{b} }( H )^\perp ,
\end{equation*}
where, we recall, $\mathcal{H}_{ \mathrm{b} }( H )$ denotes the closure of the set of eigenvectors of $H$ in $\mathcal{H}$. Moreover, if $u \in \mathcal{H}_{ \mathrm{ac} }( H )$ then
\begin{equation}\label{eq:limitAC}
\lim_{ t \to \infty } \langle e^{ - i t H }Êu , v \rangle = \lim_{ t \to \infty } \big \|ÊK e^{ - i t H }Êu \big \|Ê= 0,
\end{equation}
for all $v \in \mathcal{H}$ and all compact operators $K$ on $\mathcal{H}$; see \cite{Davies4}.

We also recall the definitions
\begin{align*}
& \mathcal{H}_{ \mathrm{d} }( H ) := \big \{ u \in \mathcal{H} , \lim_{ t \to \infty } \|Êe^{ - i t H }Êu \| = 0 \big \},  \quad \mathcal{H}_{ \mathrm{d} }( H^* ) := \big \{ u \in \mathcal{H} , \lim_{ t \to \infty } \|Êe^{ i t H^* }Êu \| = 0 \big \}.
\end{align*}
\begin{theorem}\label{thm:disswavemodified}
Suppose that Assumptions \ref{V0} and \ref{V1} hold, with $\mathrm{c}_V < 2$. Then the wave operator $W_+( H , H_0 ) = \underset{t\to + \infty }{\slim} e^{ - i t H } e^{ i t H_0 }$ exists on $\mathcal{H}$ and is injective, and its range is equal to
\begin{equation}\label{eq:range}
\mathrm{Ran}( W_+( H , H_0 ) ) = \big ( \mathcal{H}_{ \mathrm{b} }( H ) \oplus \mathcal{H}_{ \mathrm{d} }( H^* ) \big )^\perp.
\end{equation}
Moreover, the wave operator
\begin{equation*}
W_- ( H_0 , H ) := \underset{t\to + \infty }{\slim} e^{ i t H_0 } e^{ - i t H } \Pi_{ \mathrm{ac} } ( H ) 
\end{equation*}
exists on $\mathcal{H}$. (Here $\Pi_{ \mathrm{ac} } ( H )$ denotes the orthogonal projection onto the absolutely continuous subspace of $H$.)
\end{theorem}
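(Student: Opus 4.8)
\emph{Strategy.} I would reduce the dissipative scattering of the pair $(H,H_0)$ to two chained problems: the standard self-adjoint problem for $(H_V,H_0)$, covered by Assumption~\ref{V0}, and the dissipative problem for $(H,H_V)$ restricted to the absolutely continuous subspace of $H_V$, covered by Assumption~\ref{V1}. The workhorses are the Kato bound $\int_0^\infty\|Ce^{itH_V}\Pi_{\mathrm{ac}}(H_V)u\|^2\,dt\le \mathrm{c}_V^2\|u\|^2$ and the two \emph{unconditional} bounds $\int_0^\infty\|Ce^{itH^*}u\|^2\,dt\le\|u\|^2$ and $\int_0^\infty\|Ce^{-itH}u\|^2\,dt\le\|u\|^2$ (the latter two follow from $\mathrm{Im}\langle H^*u,u\rangle=\tfrac12\|Cu\|^2=-\mathrm{Im}\langle Hu,u\rangle$, cf.~\eqref{eq:adjoint0}, \eqref{eq:Z'''0}). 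Concretely, I introduce $W_+(H,H_V):=\slim_{t\to+\infty}e^{-itH}e^{itH_V}\Pi_{\mathrm{ac}}(H_V)$; since $\tfrac{d}{ds}(e^{-isH}e^{isH_V})=-\tfrac12 e^{-isH}C^*Ce^{isH_V}$, Cauchy--Schwarz bounds $\bigl\|\int_{t_1}^{t_2}e^{-isH}C^*Ce^{isH_V}\Pi_{\mathrm{ac}}(H_V)w\,ds\bigr\|$ by $\sup_{\|v\|=1}(\int_{t_1}^{t_2}\|Ce^{isH^*}v\|^2)^{1/2}(\int_{t_1}^{t_2}\|Ce^{isH_V}\Pi_{\mathrm{ac}}(H_V)w\|^2)^{1/2}\le \mathrm{c}_V\|w\|$ with vanishing tail, so $W_+(H,H_V)$ exists and $\|e^{-itH}e^{itH_V}\Pi_{\mathrm{ac}}(H_V)\|\le 1+\mathrm{c}_V/2<2$. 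As $W_+(H_V,H_0)$ exists and is unitary from $\mathcal H$ onto $\mathcal H_{\mathrm{ac}}(H_V)$ (Assumption~\ref{V0}) and $e^{-itH}$ is a contraction for $t\ge0$, the chain rule gives $W_+(H,H_0)=W_+(H,H_V)W_+(H_V,H_0)$ on $\mathcal H$; injectivity then reduces, by unitarity of $W_+(H_V,H_0)$, to the Martin--Davies injectivity of $W_+(H,H_V)$ on $\mathcal H_{\mathrm{ac}}(H_V)$ (Proposition~\ref{prop:injectivity}, with $H_V|_{\mathcal H_{\mathrm{ac}}(H_V)}$ in the role of $H_0$).

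\emph{The range, easy half.} For $\varphi\in\mathcal H_{\mathrm b}(H)=\mathcal H_{\mathrm b}(H^*)$ with $H^*\varphi=\lambda\varphi$, $\lambda\in\mathbb R$, one has $|\langle\varphi,e^{-itH}e^{itH_0}u\rangle|=|\langle e^{itH^*}\varphi,e^{itH_0}u\rangle|=|\langle\varphi,e^{itH_0}u\rangle|\to0$ by Riemann--Lebesgue (the spectral measure of $u$ for $H_0$ is absolutely continuous); and for $\varphi\in\mathcal H_{\mathrm d}(H^*)$, $|\langle\varphi,e^{-itH}e^{itH_0}u\rangle|\le\|e^{itH^*}\varphi\|\|u\|\to0$. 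The same computations give $\mathcal H_{\mathrm b}(H)\perp\mathcal H_{\mathrm d}(H^*)$ (so the sum in~\eqref{eq:range} is orthogonal and the projection $\Pi$ of~\eqref{eq:modifwaveoperator} is well defined) and the inclusion $\mathrm{Ran}\,W_+(H,H_0)\subset(\mathcal H_{\mathrm b}(H)\oplus\mathcal H_{\mathrm d}(H^*))^\perp$; in particular $\mathrm{Ran}\,W_+(H,H_0)\subset\mathcal H_{\mathrm b}(H)^\perp=\mathcal H_{\mathrm{ac}}(H)$.

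\emph{The range, reverse inclusion --- the obstacle.} Here $\mathrm{c}_V<2$ enters, but only to prove \emph{closedness}: the Cook identity used above, now with $w\in\mathcal H_{\mathrm{ac}}(H_V)$, gives $\|e^{-itH}e^{itH_V}w\|\ge(1-\mathrm{c}_V/2)\|w\|$, hence $\|W_+(H,H_0)u\|\ge(1-\mathrm{c}_V/2)\|u\|$, so $W_+(H,H_0)$ is bounded below and has closed range. It then suffices to prove \emph{density} of $\mathrm{Ran}\,W_+(H,H_0)$ in $(\mathcal H_{\mathrm b}(H)\oplus\mathcal H_{\mathrm d}(H^*))^\perp=\mathcal H_{\mathrm{ac}}(H^*)\cap\mathcal H_{\mathrm d}(H^*)^\perp$. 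The plan is: by the adjoint chain rule and an analogous Cook argument, $W_+(H,H_V)^*=\slim_{t\to+\infty}e^{-itH_V}\Pi_{\mathrm{ac}}(H_V)e^{itH^*}$ exists with values in $\mathcal H_{\mathrm{ac}}(H_V)$, so $\mathrm{Ran}\,W_+(H,H_0)^\perp=\mathrm{Ker}\,W_+(H,H_0)^*=\{\psi:\slim_t e^{-itH_V}\Pi_{\mathrm{ac}}(H_V)e^{itH^*}\psi=0\}$. Take such a $\psi$ lying moreover in $\mathcal H_{\mathrm{ac}}(H^*)\cap\mathcal H_{\mathrm d}(H^*)^\perp$; I claim $\psi=0$. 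One has $e^{itH^*}\psi\rightharpoonup0$ (a.c.\ subspace), $\|e^{itH^*}\psi\|\downarrow\ell$ with $\int_0^\infty\|Ce^{itH^*}\psi\|^2dt=\|\psi\|^2-\ell^2<\infty$, and $\|\Pi_{\mathrm{ac}}(H_V)e^{itH^*}\psi\|\to0$ (since $e^{-itH_V}$ is unitary). If $\ell>0$, pick $t_n\to\infty$ with $\|Ce^{it_nH^*}\psi\|\to0$ (possible by the $L^2$ bound); then $\eta_n:=\Pi_{\mathrm{pp}}(H_V)e^{it_nH^*}\psi$ satisfies $\|e^{it_nH^*}\psi-\eta_n\|\to0$, $\|\eta_n\|\to\ell$, and a subsequential limit $\eta_\infty$ (compactness in the finite-dimensional $\mathcal H_{\mathrm{pp}}(H_V)$) obeys $\|C\eta_\infty\|=0$, hence $\eta_\infty\in\mathcal H_{\mathrm{pp}}(H_V)\cap\mathrm{Ker}\,C=\mathcal H_{\mathrm b}(H^*)$, while also $e^{it_nH^*}\psi\to\eta_\infty$ strongly --- contradicting $e^{it_nH^*}\psi\rightharpoonup0$ and $\|\eta_\infty\|=\ell>0$. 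Thus $\ell=0$, $\psi\in\mathcal H_{\mathrm d}(H^*)$, so $\psi=0$. \emph{This density argument (and the fact that closedness, hence~\eqref{eq:range} itself, requires $\mathrm{c}_V<2$, consistent with the remark after the theorem) is the crux; the rest is bookkeeping.}

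\emph{Existence of $W_-(H_0,H)$.} Mirror the above: $W_-(H_0,H_V):=\slim_{t\to+\infty}e^{itH_0}e^{-itH_V}\Pi_{\mathrm{ac}}(H_V)$ exists by Assumption~\ref{V0}, so it remains to produce $W_-(H_V,H):=\slim_{t\to+\infty}e^{itH_V}e^{-itH}\Pi_{\mathrm{ac}}(H)$ with range in $\mathcal H_{\mathrm{ac}}(H_V)$ and to invoke the chain rule $W_-(H_0,H)=W_-(H_0,H_V)W_-(H_V,H)$. The range condition follows since $\langle W_-(H_V,H)u,\varphi_k\rangle=\lim_t e^{it\mu_k}\langle e^{-itH}u,\varphi_k\rangle=0$ for any eigenvector $\varphi_k$ of $H_V$ and $u\in\mathcal H_{\mathrm{ac}}(H)$ by~\eqref{eq:limitAC} (no singular continuous spectrum). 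For existence of $W_-(H_V,H)$, Cook's method with $\tfrac{d}{ds}(e^{isH_V}e^{-isH})=-\tfrac12 e^{isH_V}C^*Ce^{-isH}$: pairing against $v=v_{\mathrm{ac}}+v_{\mathrm{pp}}$ relative to $H_V$, the $v_{\mathrm{ac}}$ part is controlled by Assumption~\ref{V1} and $\int_0^\infty\|Ce^{-isH}u\|^2ds\le\|u\|^2$, and for the finitely many eigendirections $\varphi_k\notin\mathrm{Ker}\,C$ one uses $C^*C\varphi_k=2i(H-\mu_k)\varphi_k$ to rewrite $\int e^{is\mu_k}\langle C^*C\varphi_k,e^{-isH}u\rangle\,ds=2i\int\langle e^{is(H^*-\mu_k)}\varphi_k,(H-\mu_k)u\rangle\,ds$ and integrate by parts, the relative compactness of $C^*C$ supplying the decay that closes the estimate.
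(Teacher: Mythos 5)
Your proposal is correct and follows essentially the same route as the paper's proof: chaining the wave operators through $H_V$, Cook's method combined with Cauchy--Schwarz and the smoothness bounds of Assumption~\ref{V1} together with \eqref{eq:Z'''0} and \eqref{eq:adjoint0}, the lower bound $(1-\mathrm{c}_V/2)$ giving injectivity and closedness of the range, and the duality $\mathrm{Ran}(W_+(H,H_0))=\mathrm{Ker}(W_+(H,H_0)^*)^\perp$. The only divergence is in how the $\Pi_{\mathrm{pp}}(H_V)$-components are disposed of: the paper invokes \eqref{eq:limitAC} with the compact projection $\Pi_{\mathrm{pp}}(H_V)$ (both to identify $\mathrm{Ker}(W_+(H_0,H^*))$ directly and to construct $W_-(H_0,H)$), whereas you substitute hands-on arguments --- the compactness/contradiction step in the finite-dimensional space $\mathcal{H}_{\mathrm{pp}}(H_V)$ and the integration by parts along eigendirections --- which amount to the same thing.
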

\begin{proof}
We first prove that $W_+( H , H_0 )$ exists on $\mathcal{H}$. Let $u \in \mathcal{H}$. Using Assumption \ref{V0}, we write
\begin{equation*}
e^{ - i t H }Êe^{ i t H_0 } u = e^{ - i t H }Êe^{ i t H_V } e^{ - i t H_V }Êe^{ i t H_0 }Êu = e^{ - i t H } e^{ i t H_V } W_+( H_V , H_0 ) u + o(1) , \quad t \to \infty .
\end{equation*}
By Assumption \ref{V0} we also know that $W_+( H_V , H_0 )$ is a unitary operator from $\mathcal{H}$ to $\mathrm{Ran} ( \Pi_{ \mathrm{ac} }( H_V ) )$. Therefore it suffices to prove that 
\begin{equation*}
W_+( H , H_V ) := \underset{t\to + \infty }{\slim} e^{ - i t H } e^{ i t H_V } \Pi_{ \mathrm{ac} }( H_V ),
\end{equation*}
exists on $\mathcal{H}$ and is injective on $\mathrm{Ran} ( \Pi_{ \mathrm{ac} }( H_V ) )$, with closed range. Existence can be proven in the same way as in Proposition \ref{prop:existence}, using Cook's argument together with Assumption \ref{V1}. We then have that
\begin{align*}
&\big \|Êe^{ - i t H } e^{ i t H_V } \Pi_{ \mathrm{ac} }( H_V ) u \big \|Ê\\
&\ge \| \Pi_{ \mathrm{ac} }( H_V ) u \|Ê- \frac12 \Big \|Ê\int_0^t e^{ - i s H }ÊC^* C e^{ i s H_V } \Pi_{Ê\mathrm{ac}Ê}( H_V ) u ds \Big \| \\
&\ge \| \Pi_{ \mathrm{ac} }( H_V ) u \|Ê- \frac12 \sup_{ v \in \mathcal{H} , \| v \|Ê= 1 } \Big ( \int_0^t \big \|ÊC e^{ - i s H } v \big \|^2 ds \Big )^{\frac12} \Big (\int_0^t \big \| C e^{ i s H_V } \Pi_{Ê\mathrm{ac}Ê}( H_V ) u \big \|^2 ds \Big )^{\frac12} \\
& \ge ( 1 - \mathrm{c}_V / 2 ) \| \Pi_{ \mathrm{ac} }( H_V ) u \|,
\end{align*}
for all $u \in \mathcal{H}$.
Since $\mathrm{c}_V < 2$ by assumption, this shows that $W_+( H , H_V )$, and hence $W_+( H , H_0 )$, are injective, with closed ranges.

Next, we establish existence of $W_- ( H_0 , H )$. Since $\Pi_{Ê\mathrm{pp} }( H_V )$ is compact, we know that $\Pi_{ \mathrm{pp} }( H_V ) e^{ - i t H } \Pi_{ \mathrm{ac} } ( H ) \to 0$, as $t \to \infty$, by \eqref{eq:limitAC}. It therefore suffices to prove existence of
\begin{equation*}
\underset{t\to + \infty }{\slim} e^{ i t H_0 } \Pi_{ \mathrm{ac} }( H_V ) e^{ - i t H } \Pi_{ \mathrm{ac} } ( H )
\end{equation*}
on $\mathcal{H}$. Writing $e^{ i t H_0 } \Pi_{ \mathrm{ac} }( H_V ) e^{ - i t H } = e^{ i t H_0 } e^{ - i t H_V } \Pi_{ \mathrm{ac} }( H_V ) e^{ i t H_V } e^{ - i t H }$, one can proceed in the argument as above. This shows that $\slim e^{ i t H_0 } \Pi_{ \mathrm{ac} }( H_V ) e^{ - i t H }$, $t \to + \infty$, exists on $\mathcal{H}$, (and that its restriction to $\mathrm{Ran}( \Pi_{ \mathrm{ac} } ( H_V ) )$ is injective, with closed range). Therefore $W_-( H_0 , H )$ exists.

Finally we prove \eqref{eq:range}. From the definition of $\mathcal{H}_{ \mathrm{ac}Ê}( H )$ we see that $\mathrm{Ran}( W_+( H , H_0 ) ) \subset \mathcal{H}_{ \mathrm{ac} }( H )$. Indeed, if $ u = W_+ ( H , H_0 ) w \in \mathrm{Ran}( W_+( H , H_0 ) )$ the intertwining property implies that
\begin{equation*}
\int_0^\infty \big |Ê\langle e^{ - i t H }Êu , v \rangle \big |^2 dt = \int_0^\infty \big |Ê\langle e^{ - i t H_0 }Êw , W_+ (H , H_0)^* v \rangle \big |^2 dt \le \mathrm{const} \| W_+( H , H_0 ) \| \| w \|^2 \| v \|^2,
\end{equation*}
for all $v \in \mathcal{H}$, since $H_0$ has purely absolutely continuous spectrum. Hence $W_+( H , H_0 ) = \Pi_{ \mathrm{ac} }( H ) W_+( H , H_0 )$. In the same way as for $W_-( H_0 , H )$, one verifies that $W_+( H_0 , H^* )$ exists, and hence
\begin{equation}\label{eq:range2}
W_+( H , H_0 )^* = W_+( H_0 , H^* ).
\end{equation}
From the definitions of $W_+( H_0 , H^* )$ and $\mathcal{H}_{\mathrm{d}}( H^* )$ we obtain that
\begin{equation*}
\mathrm{Ker} ( W_+( H_0 , H^* ) ) = \mathcal{H}_{ \mathrm{ac} }( H )^\perp \oplus \big ( \mathcal{H}_{ \mathrm{ac} }( H ) \cap \mathcal{H}_{\mathrm{d}}( H^* ) \big ) .
\end{equation*}
Since $\mathcal{H}_{ \mathrm{ac} }( H )^\perp = \mathcal{H}_{ \mathrm{b} }( H )$, and since one can easily verify that $\mathcal{H}_{\mathrm{d}}( H^* ) \subset \mathcal{H}_{ \mathrm{b} }( H )^\perp$, this equation can be rewritten as
\begin{equation*}
\mathrm{Ker} ( W_+( H_0 , H^* ) ) = \mathcal{H}_{ \mathrm{b} }( H ) \oplus \mathcal{H}_{\mathrm{d}}( H^* ).
\end{equation*}
From \eqref{eq:range2} and the fact that $\mathrm{Ran}( W_+( H , H_0 ) )$ is closed we obtain \eqref{eq:range}.
\end{proof}

\begin{proof}[Proof of Theorem \ref{thm:capture}]
To prove Theorem \ref{thm:capture} with the help of Theorem \ref{thm:disswavemodified}, it suffices to follow and adapt \cite{Davies2} in a straightforward way. We do not present the details of the arguments.
\end{proof}

 \subsection{Example}\label{subsec:examples}
 
We consider Lindblad operators of the form introduced in Section \ref{sec:example}, but add a potential to the free dynamics of the particle. Thus we consider operators of the form
\begin{equation}\label{Lindbla2}
  \mathcal{L}:= \mathrm{ad}(-\Delta  + V(X)+ H_{\mathrm{int}} ) - \frac{i }{2} \{ C^{*} C, (\cdot) \}  +  i  C (\cdot) C^{*},  
\end{equation}
on $\mathcal{J}_1( L^2( \mathbb{R}^3 ) \otimes \mathfrak{h} )$, where $V(X)$ denotes the operator of multiplication by the real-valued function $V(x)$ on $L^2( \mathbb{R}^3 )$, $H_{\mathrm{int}}$ is a positive self-adjoint operator on $\mathfrak{h}$ and $C \in \mathcal{B}( L^2( \mathbb{R}^3 ) \otimes \mathfrak{h} )$. We give an example of conditions that imply our abstract Assumptions \ref{V0} and \ref{V1}. For instance, it suffices to suppose that, for some $\varepsilon> 0$ and for all $x \in \mathbb{R}$, $|ÊV( x )  |Ê\le \mathrm{const} \langle x \rangle^{-2-\varepsilon}$ to guarantee that Assumption \ref{V0} is satisfied. Of course, this condition is far from being optimal. If, in addition, $0$ is neither an eigenvalue nor a resonance of $H_V$ then it is known, (see \cite{BeMa92_01}), that, for any $\varepsilon > 0$, there exists a constant 
$\mathrm{c}_1 > 0$ such that
\begin{equation}\label{eq:EFG}
\int_{ \mathbb{R}Ê} \big \| \langle X \rangle^{-1-\varepsilon} e^{ - i t H_V } \Pi_{ \mathrm{ac} }( H_V ) u \|^2 dt \le \mathrm{c}_1^2Ê\| \Pi_{ \mathrm{ac} }( H_V ) u \|^2 ,
\end{equation}
for all $u \in \mathcal{H}$. We say that $0$ is a resonance of $H_V$ if the equation $H_V u = 0$ has a solution $u \in ( H^{1,s}( \mathbb{R}^3 ) \otimes \mathfrak{h} ) \setminus ( L^2 ( \mathbb{R}^3 ) \otimes \mathfrak{h} )$, for any $s > 1$, where $H^{1,s}( \mathbb{R}^3 )$ is the first-order Sobolev space on 
$\mathbb{R}^3$ with weight $\langle x \rangle^{-s}$. Applying Theorem \ref{thm:capture} we obtain the following result.
\begin{theorem}
Let $\mathcal{L}$ be given by \eqref{Lindbla2} and $\mathcal{L}_0 = \mathrm{ad}( - \Delta + H_{\mathrm{int}} )$.
Suppose that the conditions on $V$ described above are satisfied and that
\begin{equation*}
\big \|ÊC \langle X \rangle^{1+\varepsilon} \big \|Ê< 2 \mathrm{c}_1^{-1} < \infty,
\end{equation*}
for some $\varepsilon > 0$, where $\mathrm{c}_1$ is defined by \eqref{eq:EFG}. Then the modified wave operator $\tilde{\Omega}^-( \mathcal{L}_0 , \mathcal{L} )$ defined in \eqref{eq:modifwaveoperator} exists. 
\end{theorem}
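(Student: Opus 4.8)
The plan is to derive this statement from Theorem~\ref{thm:capture} by verifying Assumptions~\ref{V0} and~\ref{V1}, with $\mathrm{c}_V<2$, for $H_0=-\Delta+H_{\mathrm{int}}$, $V=V(X)$ and the given $C$. A preliminary reduction to the scalar case is convenient: since $\dim(\mathfrak{h})<\infty$, diagonalizing $H_{\mathrm{int}}=\bigoplus_k\mu_k$ (with $\mu_k\ge 0$) gives $\mathcal{H}\simeq\bigoplus_k L^2(\mathbb{R}^3)$, $H_0=\bigoplus_k(-\Delta+\mu_k)$ and $H_V=\bigoplus_k(-\Delta+V(X)+\mu_k)$. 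As $e^{-it(-\Delta+V(X)+\mu_k)}=e^{-it\mu_k}e^{-it(-\Delta+V(X))}$ differs from $e^{-it(-\Delta+V(X))}$ only by a unimodular scalar, the spectral type, the eigenvalue multiplicities, the wave operators of the pair $(-\Delta+V(X),-\Delta)$ and any Kato-smoothness bound transfer summand by summand, with unchanged constants; in particular $\sigma(H_0)=\bigcup_k[\mu_k,\infty)$ is purely absolutely continuous.

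To obtain Assumption~\ref{V0}, I would use that the bound $|V(x)|\le\mathrm{const}\,\langle x\rangle^{-2-\varepsilon}$ makes $V(X)$ a bounded, $(-\Delta)$-compact multiplication operator, so $H_V$ is self-adjoint on $\mathcal{D}(H_0)$ and $V(X)$ is $H_0$-compact; such $V$ lies well inside the short-range class for which asymptotic completeness of $W_\pm(-\Delta+V,-\Delta)$, absence of singular continuous spectrum and finiteness of the point spectrum of $-\Delta+V$ are classical (Agmon--Kato--Kuroda theory; see \cite{RS-III,JensenKato,BeMa92_01}), and lifting these via the direct sum yields Assumption~\ref{V0}. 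The hypothesis $\|C\langle X\rangle^{1+\varepsilon}\|<\infty$ also gives the relative $H_0$-compactness of $C^*C$ required by the framework of Section~\ref{sec:capture}: indeed $C^*C\langle X\rangle^{1+\varepsilon}\in\mathcal{B}(\mathcal{H})$, so
\[
C^*C\,(H_0+i)^{-1}=\big(C^*C\langle X\rangle^{1+\varepsilon}\big)\,\langle X\rangle^{-1-\varepsilon}(H_0+i)^{-1}
\]
is compact because $\langle X\rangle^{-1-\varepsilon}(-\Delta+i)^{-1}$ is compact on $L^2(\mathbb{R}^3)$.

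It remains to verify Assumption~\ref{V1} with the stated constant. Factoring $C=\big(C\langle X\rangle^{1+\varepsilon}\big)\langle X\rangle^{-1-\varepsilon}$ and using the weighted propagator bound \eqref{eq:EFG} (available since $0$ is neither an eigenvalue nor a resonance of $H_V$), one gets, for all $u\in\mathcal{H}$,
\[
\int_{\mathbb{R}}\big\|Ce^{-itH_V}\Pi_{\mathrm{ac}}(H_V)u\big\|^2\,dt\le\big\|C\langle X\rangle^{1+\varepsilon}\big\|^2\int_{\mathbb{R}}\big\|\langle X\rangle^{-1-\varepsilon}e^{-itH_V}\Pi_{\mathrm{ac}}(H_V)u\big\|^2\,dt\le\big\|C\langle X\rangle^{1+\varepsilon}\big\|^2\mathrm{c}_1^2\big\|\Pi_{\mathrm{ac}}(H_V)u\big\|^2,
\]
so Assumption~\ref{V1} holds with $\mathrm{c}_V=\|C\langle X\rangle^{1+\varepsilon}\|\,\mathrm{c}_1$, which is $<2$ by hypothesis. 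Theorem~\ref{thm:capture} then applies and gives the existence of $\tilde{\Omega}^-(\mathcal{L}_0,\mathcal{L})$ on $\mathcal{J}_1(L^2(\mathbb{R}^3)\otimes\mathfrak{h})$. I expect the only real care to be needed in confirming that the Agmon--Kato--Kuroda statements and the estimate \eqref{eq:EFG}, usually quoted for $-\Delta$ on $L^2(\mathbb{R}^3)$, survive the passage to $L^2(\mathbb{R}^3)\otimes\mathfrak{h}$ (they do, $H_{\mathrm{int}}$ contributing only unimodular phases), and in checking that the hypotheses imposed on $V$ are exactly what license both Assumption~\ref{V0} and \eqref{eq:EFG}.
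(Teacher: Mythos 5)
Your proposal is correct and follows exactly the route the paper intends: the text preceding the theorem sets up Assumptions \ref{V0} and \ref{V1} via the decay condition on $V$ and the weighted resolvent/propagation estimate \eqref{eq:EFG}, and the paper's proof is simply ``Applying Theorem \ref{thm:capture} we obtain the following result.'' Your factorization $C=\bigl(C\langle X\rangle^{1+\varepsilon}\bigr)\langle X\rangle^{-1-\varepsilon}$ yielding $\mathrm{c}_V=\|C\langle X\rangle^{1+\varepsilon}\|\,\mathrm{c}_1<2$, together with the reduction over the finite-dimensional internal space and the compactness checks, is precisely the verification the authors leave implicit.
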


\appendix

\section{Proof of Lemma \ref{21}} \label{app:21}

\begin{proof}
We sketch a proof, see also \cite[Lemma 5.1 and Theorem 5.2]{Daviesbook}. We only treat the case where $H_0$ is unbounded. We introduce the
  operator
  \begin{equation} \label{22} H:=H_0 - \frac{i}{2}\sum_{j\in J}
    C_j^{*}C_j
  \end{equation}
  on $\mathcal{H}$ with domain $\mathcal{D}(H_0)$. The dissipativity
  of $H$ is clear because
  \begin{equation*}
    \mathrm{Im} (\langle \varphi, H \varphi \rangle) = -\frac{1}{2} \sum_{j} \| C_j  \varphi \|_{ \mathcal{H} }^2 \leq 0. 
  \end{equation*}
  Furthermore, we claim that there exists $\lambda_0>0$ such that
  $H-i \lambda_0$ is bounded invertible, i.e.
  $(H-i \lambda_0)^{-1} \in \mathcal{B}(\mathcal{H})$.  Indeed, $H$ is
  closed because $H_0$ is self-adjoint and therefore, since in
  addition
  $\| (H-i \lambda_0 )\varphi \| \geq \lambda_0 \| \varphi \|$ for all
  $\varphi \in \mathcal{D}( H_0 )$ and all $\lambda_0>0$, we only have
  to show that the range of $H-i \lambda_0 $ is dense for some
  $\lambda_0>0$. This is equivalent to
  $\mathrm{Ker}(H^*+i \lambda_0)=\{0\}$. This last equality holds for
  any $\lambda_0 >0$ because
  \begin{equation*}
    H^* = H_0 + \frac{i}{2}\sum_{j\in J} C_j^{*}C_j,
  \end{equation*}
  and hence $H^* +i \lambda_0$ is injective.  The theorem of
  Lumer-Phillips (see e.g. \cite{Engel}) implies that the
  dissipative operator $H$ generates a strongly continuous
  one-parameter semigroup, $\{e^{-itH}\}_{t \geq 0}$ on $\mathcal{H}$.
  The linear operator $\mathrm{ad}(H)$ on
  $\mathcal{J}_{1}(\mathcal{H})$ with domain
  $\mathcal{D}(\mathrm{ad}(H_0))$ generates consequently a
  one-parameter semigroup of contractions given by
  \begin{equation}
    \rho \mapsto e^{-itH} \rho e^{itH^*}
  \end{equation}
  for all $\rho \in \mathcal{J}_{1}(\mathcal{H})$ and all $t \geq 0$.
  Here we use that
 $$\| e^{-itH} \rho e^{itH^*} \|_1 \le \| e^{-itH} \|_{ \mathcal{B}( \mathcal{H} ) } \|\rho \|_1 \|
  e^{itH^*} \|_{ \mathcal{B}( \mathcal{H} ) } \le \| \rho \|_1.$$
  This semigroup is clearly positivity preserving.  As the operator
  $$ i\sum_{j\in J}^{}C_j\, (\cdot) \,C_j^{*}$$ is bounded, a standard
  perturbation result for semigroups (see e.g. \cite{Engel}) shows
  that the operator $\mathcal{L}$ is defined and closed on
  $\mathcal{D}(\mathrm{ad}(H_0))$ and generates a strongly continuous
  one-parameter semigroup on $\mathcal{J}_{1}( \mathcal{H} )$. The
  semigroup $\{e^{-it \mathcal{L}} \}_{t \geq 0}$ satisfies
  \eqref{item:4} and \eqref{item:5}, i.e. it preserves positivity and
  the trace. Complete positivity follows from the Dyson series
  expansion of $e^{-it \mathcal{L}}$ (see \eqref{eq:DysonL1}--\eqref{eq:DysonL2}), using that
  $C_j\,(\cdot) \,C_j^{*}$ and $e^{-itH} (\cdot) e^{itH^*}$ are completely positive. Trace preservation is also clear by differentiating $t \mapsto \mathrm{tr}( e^{ - i t \mathcal{L} } \rho )$ for any $\rho \in \mathcal{D}( \mathcal{L} )$ and using that $\mathrm{tr}( \mathcal{L} \tilde{\rho} ) = 0$ for any $\tilde \rho \in \mathcal{D}( \mathcal{L} )$.

  Finally, the contractivity property of $e^{-it \mathcal{L}}$
  restricted to $\mathcal{J}_{1}^{\mathrm{sa}}(\mathcal{H})$ follows
  directly from the decomposition $\rho = \rho_+ + \rho_-$ with $\rho_+ = \rho \mathds{1}_{ [ 0 , \infty ) }( \rho )$, $\rho_- = \rho \mathds{1}_{ ( - \infty , 0 ] } ( \rho )$ and the fact that $\| \rho \|_1 = \mathrm{tr}( \rho_+ ) - \mathrm{tr}( \rho_- )$.
\end{proof}

\section{Appendix to Section \ref{sec:dissipative}}\label{app:dissipative}

In this appendix we use the notations of Section \ref{sec:dissipative}. We establish some properties of the wave operators $W_+( H_0 , H )$ and $W_-( H , H_0 )$. Some of them are already proven in \cite{Martin} and \cite{Davies1}. We give details for the sake of completeness.
\begin{lemma}\label{lm:limits}
Suppose that either Assumption \ref{Y0} or \ref{Z0} holds. Then 
\begin{equation}
\lim_{t \to \infty} \big \| W_+( H , H_0 ) e^{ i t H_0 } u \big \| = \| u \| , \label{eq:HH1}
\end{equation}
for all $u \in \mathcal{H}$. 
\end{lemma}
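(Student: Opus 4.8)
The plan is to read off from Cook's formula an explicit expression for $\big(W_+(H,H_0)-\mathrm{Id}\big)e^{i\tau H_0}u$ and show it vanishes as $\tau\to\infty$. Since either hypothesis guarantees that $W_+(H,H_0)=\slim_{t\to+\infty}e^{-itH}e^{itH_0}$ exists on $\mathcal{H}$ (Propositions \ref{prop:existenceCook} and \ref{prop:existence}), passing to the limit in \eqref{eq:cook0}, replacing $u$ by $e^{i\tau H_0}u$, and changing variables $\sigma=s+\tau$ gives, for all $u\in\mathcal{H}$ and $\tau\ge0$,
\begin{equation*}
\big(W_+(H,H_0)-\mathrm{Id}\big)e^{i\tau H_0}u=-\frac12\int_\tau^\infty e^{-i(\sigma-\tau)H}C^*Ce^{i\sigma H_0}u\,d\sigma.
\end{equation*}
Because $e^{i\tau H_0}$ is unitary, $\|e^{i\tau H_0}u\|=\|u\|$, so by the reverse triangle inequality it suffices to prove that the norm of the right-hand side tends to $0$ as $\tau\to\infty$; then \eqref{eq:HH1} follows.

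Under Assumption \ref{Z0} I would estimate this tail by duality. Pairing with a unit vector $v$, moving $C^*C$ onto the left factor via $(e^{-i(\sigma-\tau)H})^*=e^{i(\sigma-\tau)H^*}$, and applying Cauchy--Schwarz in $\sigma$, the norm is bounded by
\[
\frac12\Big(\int_0^\infty\|Ce^{irH^*}v\|^2\,dr\Big)^{1/2}\Big(\int_\tau^\infty\|Ce^{i\sigma H_0}u\|^2\,d\sigma\Big)^{1/2}.
\]
The first factor is $\le\|v\|=1$ by the a priori bound \eqref{eq:adjoint0}; the second factor tends to $0$ as $\tau\to\infty$, since Assumption \ref{Z0}, together with the change of variables $\sigma\mapsto-\sigma$, forces $\int_{\mathbb{R}}\|Ce^{i\sigma H_0}u\|^2\,d\sigma<\infty$. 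This proves \eqref{eq:HH1} for every $u\in\mathcal{H}$, with no density argument needed.

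Under Assumption \ref{Y0} I would argue more crudely: for $u$ in the dense set $\mathcal{E}$, using $\|e^{-i(\sigma-\tau)H}\|\le1$ for $\sigma\ge\tau$, the displayed integral has norm at most $\frac12\int_\tau^\infty\|C^*Ce^{i\sigma H_0}u\|\,d\sigma$, which tends to $0$ as $\tau\to\infty$ because the integral over all of $\mathbb{R}$ is finite by Assumption \ref{Y0} (change of variables $\sigma\mapsto-\sigma$). Hence \eqref{eq:HH1} holds for $u\in\mathcal{E}$, and it extends to all of $\mathcal{H}$ by a routine approximation, using that $W_+(H,H_0)$ is a contraction (a strong limit of contractions) and that $e^{i\tau H_0}$ is unitary, so $u\mapsto\|W_+(H,H_0)e^{i\tau H_0}u\|$ is $1$-Lipschitz uniformly in $\tau$.

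I do not expect a serious obstacle. The one point that must be handled with care is that the backward semigroup $\{e^{itH}\}_{t\ge0}$ is not known to be bounded under Assumption \ref{Y0} or under Assumption \ref{Z0} alone, so the change of variables $\sigma=s+\tau$ — which arranges that every factor appears as $e^{-i(\sigma-\tau)H}$ with nonnegative exponent, i.e.\ a contraction — must be carried out before any limit is taken; and under Assumption \ref{Z0} the relevant integral is only conditionally convergent, but the duality estimate uses convergence only, exactly as in the proof of Proposition \ref{prop:existence}.
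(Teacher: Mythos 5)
Your proposal is correct and follows essentially the same route as the paper's own proof: both pass to the limit in Cook's formula, change variables to obtain $\big(W_+(H,H_0)-\mathrm{Id}\big)e^{itH_0}u=-\tfrac12\int_t^\infty e^{-i(s-t)H}C^*Ce^{isH_0}u\,ds$, bound the tail under Assumption \ref{Y0} by contractivity plus a density argument over $\mathcal{E}$, and under Assumption \ref{Z0} by the duality/Cauchy--Schwarz estimate using \eqref{eq:adjoint0}. The cautionary remarks at the end (contractivity only for nonnegative exponents, conditional convergence of the integral) are accurate and consistent with the paper's treatment.
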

\begin{proof}
First suppose that Assumption \ref{Y0} holds. The existence of $W_+( H , H_0 )$ on $\mathcal{H}$ is a consequence of Cook's argument as recalled in Proposition \ref{prop:existenceCook}. In fact we have as in \eqref{eq:cook0} that
\begin{equation}\label{eq:conv_integral_cook}
W_+( H , H_0 ) u = u - \frac12 \int_0^\infty e^{Ê- i s H }ÊC^* C e^{ i s H_0 }Êu ds ,
\end{equation}
for all $u \in \mathcal{E}$. The integral in the right-hand side obviously converges by Assumption \ref{Y0}. Changing variables, we obtain from the previous identity that
\begin{align*}
ÊW_+( H , H_0 ) e^{ i t H_0 } u = Êe^{ i t H_0 } u - \frac12 \int_t^\infty e^{ - i ( s - t ) H } C^* C e^{ i s H_0 } u ds .
\end{align*}
Since Assumption \ref{Y0} holds,
\begin{equation*}
\Big \|Ê\int_t^\infty e^{ - i ( s - t ) H } C^* C e^{ i s H_0 } u ds \Big \|Ê\le \int_t^\infty \big \| C^* C e^{ i s H_0 } u \big \| ds \to 0 ,
\end{equation*}
as $t \to \infty$. Using the triangle inequality, this implies \eqref{eq:HH1} for all $u \in \mathcal{E}$. Using that $\mathcal{E}$ is dense in $\mathcal{H}$ we deduce that \eqref{eq:HH1} holds for all $u \in \mathcal{H}$.

Now suppose that Assumption \ref{Z0} holds. We can proceed in the same way. The existence of $W_+( H , H_0 )$ on $\mathcal{H}$ as well as the convergence of the integral in \eqref{eq:conv_integral_cook} are established in the proof of Proposition \ref{prop:existence}. Moreover, since Assumption \ref{Z0} holds, we can proceed as in the proof of Proposition \ref{prop:existence}, which gives
\begin{align*}
\Big \|Ê\int_t^\infty e^{ - i ( s - t ) H } C^* C e^{ i s H_0 } u ds \Big \|Ê&\le \sup_{ v \in \mathcal{H} , \| v \|Ê= 1Ê} \Big (Ê\int_0^\infty \big \| C e^{ i s H^* } v \big \|^2 ds \Big )^{\frac12} \Big ( \int_t^\infty \big \| C e^{ i s H_0 } u \big \|^2 ds \Big )^{\frac12}   \\
&\le \Big ( \int_t^\infty \big \| C e^{ i s H_0 } u \big \|^2 ds \Big )^{\frac12} \to 0 ,
\end{align*}
as $t \to \infty$. We then conclude, as above, that \eqref{eq:HH1} holds.
\end{proof}
\begin{proposition}\label{prop:injectivity}
Suppose that either Assumption \ref{Y0} or \ref{Z0} holds. Then $W_+( H , H_0 )$ is injective.
\end{proposition}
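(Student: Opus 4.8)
The plan is to deduce injectivity directly from Lemma \ref{lm:limits}, which asserts that $W_+(H,H_0)$ is asymptotically isometric along the free flow: $\lim_{t\to\infty}\|W_+(H,H_0)e^{itH_0}u\| = \|u\|$ for all $u\in\mathcal{H}$. Since under either Assumption \ref{Y0} or Assumption \ref{Z0} the wave operator $W_+(H,H_0)$ exists on all of $\mathcal{H}$, the intertwining relation \eqref{eq:intertwin2} of Proposition \ref{prop:01} holds on all of $\mathcal{H}$, namely $e^{-itH}W_+(H,H_0) = W_+(H,H_0)e^{-itH_0}$ for every $t\in\mathbb{R}$.

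The argument then runs as follows. Suppose $u\in\mathcal{H}$ satisfies $W_+(H,H_0)u = 0$. Applying the intertwining relation, for every $t\in\mathbb{R}$ we get
\begin{equation*}
W_+(H,H_0)e^{-itH_0}u = e^{-itH}W_+(H,H_0)u = 0 .
\end{equation*}
Because $\{e^{-itH_0}\}_{t\in\mathbb{R}}$ is a group, the set $\{e^{-itH_0}u : t\in\mathbb{R}\}$ coincides with $\{e^{itH_0}u : t\ge 0\}$, so in particular $W_+(H,H_0)e^{itH_0}u = 0$ for all $t\ge 0$. Feeding this into Lemma \ref{lm:limits} yields
\begin{equation*}
\|u\| = \lim_{t\to\infty}\big\|W_+(H,H_0)e^{itH_0}u\big\| = 0 ,
\end{equation*}
hence $u = 0$. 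This shows $\mathrm{Ker}(W_+(H,H_0)) = \{0\}$, i.e.\ $W_+(H,H_0)$ is injective.

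There is no real obstacle here once Lemma \ref{lm:limits} is available: all the analytic content (the convergence of the Cook integral and the estimate of its tail, under either hypothesis) has already been packaged into that lemma, and what remains is the soft functional-analytic observation that a kernel element of an intertwining operator is carried by the free group into the kernel again, contradicting the asymptotic isometry unless it is zero. The only point to be slightly careful about is that we use $W_+(H,H_0)$ on its full domain $\mathcal{H}$, which is precisely what Propositions \ref{prop:existenceCook} and \ref{prop:existence} guarantee under Assumptions \ref{Y0} and \ref{Z0} respectively.
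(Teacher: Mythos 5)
Your proof is correct and follows essentially the same route as the paper: combine the intertwining relation of Proposition \ref{prop:01} with the asymptotic isometry of Lemma \ref{lm:limits} to show that any kernel element must have zero norm. The only cosmetic difference is your explicit remark on reversing the sign of $t$ in the intertwining relation, which the paper handles implicitly.
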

\begin{proof}
It suffices to combine Proposition \ref{prop:01} and Lemma \ref{lm:limits}. Indeed, suppose that $u \in \mathcal{H}$ satisfies $W_+( H , H_0 ) u = 0$. By Proposition \ref{prop:01},
\begin{equation*}
e^{ i t H } W_+ ( H , H_0 ) u = W_+( H , H_0 ) e^{ i t H_0 } u = 0 ,
\end{equation*}
for all $t \ge 0$. Letting $t \to \infty$ then shows that $u = 0$, by Lemma \ref{lm:limits}.
\end{proof}
 \begin{proposition}\label{prop:closedness}
Suppose that either Assumption \ref{Y0} or \ref{Z0} holds. Then $\mathrm{Ran}( W_+( H , H_0 ) )$ is closed if and only if the restriction of $\{ e^{ - i t H }Ê\}_{ t \in \mathbb{R} }$ to $\mathrm{Ran}( W_+( H , H_0 ) )$ is uniformly bounded.
\end{proposition}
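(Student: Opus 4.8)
The plan is to reduce the closedness of $\mathrm{Ran}(W_+(H,H_0))$ to a lower bound of the form $\|W_+(H,H_0)u\|\ge c\|u\|$ for some $c>0$, and then to move between this lower bound and uniform boundedness of $\{e^{-itH}\}$ on the range by combining the intertwining relation of Proposition~\ref{prop:01} with Lemma~\ref{lm:limits}.

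First I would assemble the ingredients already available, writing $W_+:=W_+(H,H_0)$ throughout. Under Assumption~\ref{Y0} or Assumption~\ref{Z0}, $W_+$ exists on all of $\mathcal{H}$ (Proposition~\ref{prop:existenceCook}, resp. Proposition~\ref{prop:existence}) and is injective (Proposition~\ref{prop:injectivity}). It is moreover a contraction: since $\|e^{-itH}e^{itH_0}u\|\le\|e^{itH_0}u\|=\|u\|$ for $t\ge0$ by \eqref{eq:contrac}, letting $t\to+\infty$ gives $\|W_+u\|\le\|u\|$. Finally, Proposition~\ref{prop:01} provides the intertwining identity $e^{-itH}W_+=W_+e^{-itH_0}$ for every $t\in\mathbb{R}$. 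I would also recall the elementary equivalence, for a bounded injective operator $T$ on a Hilbert space, between $\mathrm{Ran}(T)$ being closed and $T$ being bounded below: the direction $\Rightarrow$ is the bounded inverse theorem applied to $T:\mathcal{H}\to\mathrm{Ran}(T)$, and $\Leftarrow$ is a Cauchy-sequence argument.

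For the implication ``$\mathrm{Ran}(W_+)$ closed $\Rightarrow$ uniform boundedness'', suppose $\|W_+u\|\ge c\|u\|$ for all $u$. Given $v=W_+u\in\mathrm{Ran}(W_+)$ and $t\le0$, the intertwining identity yields $\|e^{-itH}v\|=\|W_+e^{-itH_0}u\|\le\|e^{-itH_0}u\|=\|u\|\le c^{-1}\|W_+u\|=c^{-1}\|v\|$, where I used that $W_+$ is a contraction and $e^{-itH_0}$ is unitary. For $t\ge0$ one simply has $\|e^{-itH}v\|\le\|v\|$ by \eqref{eq:contrac}. Hence $\|e^{-itH}v\|\le c^{-1}\|v\|$ for all $v\in\mathrm{Ran}(W_+)$ and all $t\in\mathbb{R}$, which is the claimed uniform boundedness. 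Conversely, assume $\|e^{-itH}v\|\le M\|v\|$ for all $v\in\mathrm{Ran}(W_+)$ and $t\in\mathbb{R}$. Taking $v=W_+u$, the intertwining identity gives $\|W_+e^{-itH_0}u\|=\|e^{-itH}W_+u\|\le M\|W_+u\|$; writing $t=-s$ and letting $s\to+\infty$, Lemma~\ref{lm:limits} gives $\|W_+e^{isH_0}u\|\to\|u\|$, so $\|u\|\le M\|W_+u\|$. Thus $W_+$ is bounded below and $\mathrm{Ran}(W_+)$ is closed.

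The one genuinely non-formal step is this last passage to the limit: the uniform bound only controls $\|W_+e^{-itH_0}u\|$ from above by $M\|W_+u\|$, and it is precisely the asymptotic isometry $\|W_+e^{isH_0}u\|\to\|u\|$ of Lemma~\ref{lm:limits} that converts it into the lower bound on $\|W_+u\|$. Everything else is soft functional analysis: the bounded inverse theorem, contractivity of the dissipative semigroup for $t\ge0$, and unitarity of $e^{-itH_0}$.
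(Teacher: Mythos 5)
Your proof is correct and follows essentially the same route as the paper: both directions rest on the intertwining relation of Proposition~\ref{prop:01}, the asymptotic isometry $\lVert W_+(H,H_0)e^{itH_0}u\rVert\to\lVert u\rVert$ of Lemma~\ref{lm:limits}, contractivity of $W_+(H,H_0)$, and the equivalence (via injectivity and the bounded inverse theorem) between closed range and being bounded below. The only cosmetic difference is that you split the cases $t\le 0$ and $t\ge 0$ in the forward direction, whereas the paper handles all $t\in\mathbb{R}$ in one line; the content is identical.
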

\begin{proof}
First assume that $\{ e^{ - i t H }Ê\}_{ t \in \mathbb{R} }$ is uniformly bounded on $\mathrm{Ran}( W_+( H , H_0 ) )$. Let $M \ge 1$ be such that $\| e^{ - i t H } W_+( H , H_0 ) u \| \le M \|ÊW_+( H , H_0 ) u \|$ for all $t \in \mathbb{R}$ and $u \in \mathcal{H}$. Applying Lemma \ref{lm:limits} and Proposition \ref{prop:01} give
\begin{equation*}
\| u \| = \lim_{t \to \infty } \| W_+(H,H_0) e^{itH_0} u \|Ê= \lim_{t \to \infty } \| e^{itH} W_+(H,H_0) u \|Ê\le M \|ÊW_+( H , H_0 ) u \|,
\end{equation*}
for all $u \in \mathcal{H}$. Hence $W_+ ( H , H_0 )$ has closed range.

Suppose now that $\mathrm{Ran}( W_+( H , H_0 ) )$ is closed. Since $W_+ ( H , H_0 )$ is also injective by Proposition \ref{prop:injectivity}, there exists $m>0$ such that $\|W_+(H,H_0) u\| \ge m \|u\|$, for all $u\in\mathcal{H}$. Using  Proposition \ref{prop:01} and the fact that $W_+(H,H_0)$ is a contraction, this implies that
\begin{equation*}
\| e^{ - i t H } W_+( H , H_0 ) u \| = \| W_+( H , H_0 ) e^{ - i t H_0 }Êu \|  \le \|Êu \|Ê\le m^{-1}Ê\|ÊW_+(H,H_0) u \|Ê,
\end{equation*}
for all $t \in \mathbb{R}$ and $u \in \mathcal{H}$.
\end{proof}

 \section{Integral kernels and trace}\label{kernel}

 In order to study the wave operators on
 $\mathcal{J}_1\bigl( L^2( \mathbb{R}^3 \otimes \mathfrak{h} )\bigr)$ in Section \ref{sec:example},
 we exploited the integral kernel representation of Hilbert-Schmidt
 operators
 \begin{equation*}
   \mathcal{J}_2\bigl( L^2( \mathbb{R}^3 \otimes \mathfrak{h} ) \bigr)\supset \mathcal{J}_1\bigl( L^2( \mathbb{R}^3 \otimes \mathfrak{h} ) \bigr).
 \end{equation*}
In this appendix we provide some details about this representation. Let
 \begin{equation*}
   d := \mathrm{dim} \, \mathfrak{h} < \infty ,
 \end{equation*}
 and $\mathbb{Z}_d := \{ 1 , 2 , \dots , d \}$.  We recall the
 following well-known result (see e.g.
 \cite{MR541149}).

 \begin{proposition} \label{prop:6} We have the following isometric
   isomorphisms:
   \begin{equation*}
     \mathcal{J}_2\bigl( L^2( \mathbb{R}^3 \otimes \mathfrak{h} ) \bigr) \equiv L^2( \mathbb{R}^6 ; \mathfrak{h} \otimes \mathfrak{h} ) \equiv L^2 \big ( (\mathbb{R}^3 \times \mathbb{Z}_{d} )^2 \big ).
   \end{equation*}
 \end{proposition}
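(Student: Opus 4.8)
The plan is to reduce everything to the classical identification of Hilbert--Schmidt operators on an $L^2$-space with their square-integrable integral kernels, and then to absorb the finite-dimensional fibre $\mathfrak{h}$ by a choice of orthonormal bases. First I would fix an orthonormal basis $(f_1,\dots,f_d)$ of $\mathfrak{h}$, which yields an isometric isomorphism $L^2(\mathbb{R}^3)\otimes\mathfrak{h}\simeq L^2(\mathbb{R}^3;\mathfrak{h})\simeq L^2(\mathbb{R}^3\times\mathbb{Z}_d)$, where $\mathbb{R}^3\times\mathbb{Z}_d$ carries the product of Lebesgue measure with counting measure. Writing $X:=\mathbb{R}^3\times\mathbb{Z}_d$, it then suffices to establish the isometric isomorphism $\mathcal{J}_2(L^2(X))\simeq L^2(X\times X)$ and to observe that $X\times X=(\mathbb{R}^3\times\mathbb{Z}_d)^2$, while $L^2(X\times X)\simeq L^2(\mathbb{R}^6;\mathbb{C}^{d^2})\simeq L^2(\mathbb{R}^6;\mathfrak{h}\otimes\mathfrak{h})$, the last step again using an orthonormal basis, this time $(f_j\otimes f_k)_{j,k}$ of $\mathfrak{h}\otimes\mathfrak{h}$. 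All of these identifications are elementary because $d<\infty$.

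For the core statement, given $K\in L^2(X\times X)$ I would define $(A_K\varphi)(x):=\int_X K(x,y)\varphi(y)\,d\mu(y)$; the Cauchy--Schwarz inequality in the $y$-variable shows that this is well defined for a.e.\ $x$ and that $\|A_K\varphi\|\le \|K\|_{L^2(X\times X)}\|\varphi\|$, so that $A_K\in\mathcal{B}(L^2(X))$. To see that $A_K$ is in fact Hilbert--Schmidt with $\|A_K\|_2=\|K\|_{L^2(X\times X)}$, I would pick an orthonormal basis $(e_j)_j$ of $L^2(X)$; then $\bigl(e_j\otimes\overline{e_k}\bigr)_{j,k}$, i.e.\ the functions $(x,y)\mapsto e_j(x)\overline{e_k(y)}$, form an orthonormal basis of $L^2(X\times X)$, so one may write $K=\sum_{j,k}c_{jk}\,e_j\otimes\overline{e_k}$ with $\sum_{j,k}|c_{jk}|^2=\|K\|_{L^2(X\times X)}^2$, and a direct computation gives $A_Ke_\ell=\sum_j c_{j\ell}e_j$, whence $\|A_K\|_2^2=\sum_\ell\|A_Ke_\ell\|^2=\sum_{j,\ell}|c_{j\ell}|^2$. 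Conversely, for $A\in\mathcal{J}_2(L^2(X))$ I would set $c_{jk}:=\langle e_j,Ae_k\rangle$, note that $\sum_{j,k}|c_{jk}|^2=\|A\|_2^2<\infty$ by the definition of the Hilbert--Schmidt norm, form $K:=\sum_{j,k}c_{jk}\,e_j\otimes\overline{e_k}\in L^2(X\times X)$, and check that $A$ and $A_K$ agree on the basis $(e_k)$, hence coincide. The maps $K\mapsto A_K$ and $A\mapsto K$ are linear and mutually inverse, so they provide the desired isometric isomorphism.

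The only points requiring genuine care — and hence the ``main obstacle'', such as it is — are the measure-theoretic ones: that $\bigl(e_j\otimes\overline{e_k}\bigr)_{j,k}$ is \emph{complete} in $L^2(X\times X)$ (a standard Fubini argument, or one may simply invoke the reference \cite{MR541149}), and that the kernel operator is meaningfully defined pointwise almost everywhere. Everything involving $\mathfrak{h}$ is trivial since $\mathfrak{h}$ is finite-dimensional, so no completion or measurability subtleties arise on the fibre. To conclude, I would just compose the chain of isometric isomorphisms
$\mathcal{J}_2\bigl(L^2(\mathbb{R}^3)\otimes\mathfrak{h}\bigr)\simeq\mathcal{J}_2\bigl(L^2(X)\bigr)\simeq L^2(X\times X)\simeq L^2\bigl(\mathbb{R}^6;\mathfrak{h}\otimes\mathfrak{h}\bigr)\simeq L^2\bigl((\mathbb{R}^3\times\mathbb{Z}_d)^2\bigr)$,
which is precisely the assertion of the proposition.
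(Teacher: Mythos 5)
Your argument is correct and is precisely the standard kernel--operator identification that the paper itself does not spell out, merely citing \cite{MR541149} for this well-known fact. The only cosmetic discrepancy is a convention: the paper assigns to $a=\sum_j\alpha_j\lvert\phi_j\rangle\langle\psi_j\rvert$ the kernel $\sum_j\alpha_j\bar\phi_j(\underline{x})\psi_j(\underline{y})$ (the complex conjugate of your $K$), but this changes the isomorphism only by composition with conjugation and does not affect isometry or correctness.
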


 Letting
 $i : \mathcal{J}_2\bigl( L^2( \mathbb{R}^3 \otimes \mathfrak{h} )
 \bigr) \to L^2 \big ( (\mathbb{R}^3 \times \mathbb{Z}_{d} )^2 \big )$
 be the isometric isomorphism of the proposition above,
 $L^2 \big ( (\mathbb{R}^3 \times \mathbb{Z}_{d} )^2 \big ) \ni a(
 \underline{x} , \underline{y} )=i(a)$
 is called the integral kernel of $a$, where
 $\underline{x} := ( x , \lambda )$ and $\underline{y} := ( y , \mu )$
 belong to $\mathbb{R}^3 \times \mathbb{Z}_d$. We will use the
 notation
 \begin{equation*}
   \int_{ \mathbb{R}^3 \times \mathbb{Z}_d } d \underline{x} := \sum_{ \lambda = 1 }^d \int_{ \mathbb{R}^3 } dx.
 \end{equation*}
 Let
 $\{\phi _j\}_j,\{\psi _j\}_j\subset L^2( \mathbb{R}^3 \times
 \mathbb{Z}_d )$
 be orthonormal collections; if
 $a=\sum_{j}\alpha _j\lvert\phi_j\rangle\langle\psi _j\rvert$, then
 $a ( \underline{x} , \underline{y} )=\sum_{j}\alpha
 _j\bar{\phi}_j(\underline{x})\psi _j(\underline{y})$,
 and the expansion converges absolutely a.e. (if the sum is infinite).

 We remark that by Proposition~\ref{prop:6}, every
 $a\in \mathcal{J}_1\bigl( L^2( \mathbb{R}^3 \otimes \mathfrak{h} )
 \bigr)$
 has an associated integral kernel $a(\underline{x},\underline{y})$;
 however, in general $a(\underline{x},\underline{x})$ may not be
 integrable. The following proposition gives a characterization of the
 trace by means of an Hardy-Littlewood averaging process
 $\mathcal{A}_n$ on $L^2( \mathbb{R}^3 \otimes \mathfrak{h} )$
 \cite{MR1123441}. Without entering too much into details, given a kernel
 $a(\underline{x},\underline{y})=\sum_{j}\alpha
 _j\bar{\phi}_j(\underline{x})\psi _j(\underline{y})$,
 let the kernel $\mathcal{A}_n^{(2)}a(\underline{x},\underline{y})$ be
 defined by
 \begin{equation*}
   \mathcal{A}_n^{(2)}a(\underline{x},\underline{y}) =\sum_{j}\alpha _j\mathcal{A}_n\bar{\phi}_j(\underline{x})\mathcal{A}_n\psi _j(\underline{y})\; .
 \end{equation*}
 Then the limit kernel $\tilde{a}(\underline{x},\underline{y})$ is defined as the pointwise
 a.e. limit
 \begin{equation*}
   \tilde{a}(\underline{x},\underline{y})=\lim_{n\to \infty } \mathcal{A}_n^{(2)}a(\underline{x},\underline{y}) \; .
 \end{equation*}

\begin{proposition}[\cite{MR1123441}]\label{prop:7}
Let
$a\in \mathcal{J}_1\bigl( L^2( \mathbb{R}^3 \otimes \mathfrak{h} )
\bigr)$,
with associated integral kernel $a(\underline{x},\underline{y})$. Then
the averaged kernel $\tilde{a}(\underline{x},\underline{x})$ exists
a.e., and
\begin{equation*}
  \mathrm{Tr}(a)=\int_{ \mathbb{R}^3 \times \mathbb{Z}_d } ^{}\tilde{a}(\underline{x},\underline{x})  d \underline{x} \; .
\end{equation*}
\end{proposition}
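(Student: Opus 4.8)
The plan is to argue directly from the singular--value decomposition of $a$. Write $a=\sum_{j}\alpha_j\lvert\phi_j\rangle\langle\psi_j\rvert$ with $\alpha_j>0$, $\sum_j\alpha_j=\lVert a\rVert_1<\infty$, and $\{\phi_j\}_j$, $\{\psi_j\}_j$ orthonormal families in $L^2(\mathbb{R}^3\times\mathbb{Z}_d)$, so that by the kernel rule recalled above $\mathcal{A}_n^{(2)}a(\underline{x},\underline{x})=\sum_j\alpha_j\,\mathcal{A}_n\bar\phi_j(\underline{x})\,\mathcal{A}_n\psi_j(\underline{x})$. I would fix $\mathcal{A}_n$ concretely, say as the average over the ball of radius $1/n$ in the $\mathbb{R}^3$ variable (leaving the $\mathbb{Z}_d$ label untouched), and use only two standard facts about it: $\mathcal{A}_n f\to f$ a.e.\ for $f\in L^2$, by the Lebesgue differentiation theorem, and $\lvert\mathcal{A}_n f\rvert\le\mathcal{A}_n\lvert f\rvert\le Mf$ pointwise, where $Mf:=\sup_n\mathcal{A}_n\lvert f\rvert$ is the Hardy--Littlewood maximal function, which is bounded on $L^2(\mathbb{R}^3\times\mathbb{Z}_d)$. (That space is just finitely many copies of $\mathbb{R}^3$, so the classical differentiation and maximal theorems apply verbatim.)

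The decisive step is to produce an $n$--independent integrable majorant for the diagonal averaged kernel. By the Cauchy--Schwarz inequality in the summation index,
\[
\sum_j\alpha_j\,\bigl\lvert\mathcal{A}_n\bar\phi_j(\underline{x})\bigr\rvert\,\bigl\lvert\mathcal{A}_n\psi_j(\underline{x})\bigr\rvert\ \le\ \sum_j\alpha_j\,(M\phi_j)(\underline{x})\,(M\psi_j)(\underline{x})\ =:\ S(\underline{x}),
\]
and $\int_{\mathbb{R}^3\times\mathbb{Z}_d}S\,d\underline{x}\le\sum_j\alpha_j\lVert M\phi_j\rVert\,\lVert M\psi_j\rVert\le C^2\sum_j\alpha_j<\infty$, using the $L^2$--maximal inequality together with $\lVert\phi_j\rVert=\lVert\psi_j\rVert=1$; hence $S\in L^1$ and $S(\underline{x})<\infty$ for a.e.\ $\underline{x}$. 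For any such $\underline{x}$ the series defining $\mathcal{A}_n^{(2)}a(\underline{x},\underline{x})$ is dominated term by term, uniformly in $n$, by the summable sequence $\alpha_j(M\phi_j)(\underline{x})(M\psi_j)(\underline{x})$, while each factor converges pointwise, $\mathcal{A}_n\bar\phi_j(\underline{x})\to\bar\phi_j(\underline{x})$ and $\mathcal{A}_n\psi_j(\underline{x})\to\psi_j(\underline{x})$. Dominated convergence in the index $j$ then shows that $\tilde a(\underline{x},\underline{x}):=\lim_{n\to\infty}\mathcal{A}_n^{(2)}a(\underline{x},\underline{x})$ exists a.e.\ and equals $\sum_j\alpha_j\bar\phi_j(\underline{x})\psi_j(\underline{x})$, with $\lvert\tilde a(\underline{x},\underline{x})\rvert\le S(\underline{x})$.

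For the identity itself I would then integrate term by term, which is legitimate because $\sum_j\alpha_j\int\lvert\bar\phi_j\rvert\,\lvert\psi_j\rvert\le\sum_j\alpha_j<\infty$:
\[
\int_{\mathbb{R}^3\times\mathbb{Z}_d}\tilde a(\underline{x},\underline{x})\,d\underline{x}=\sum_j\alpha_j\int_{\mathbb{R}^3\times\mathbb{Z}_d}\bar\phi_j(\underline{x})\psi_j(\underline{x})\,d\underline{x}=\lim_{N\to\infty}\int_{\mathbb{R}^3\times\mathbb{Z}_d}a_N(\underline{x},\underline{x})\,d\underline{x},
\]
where $a_N:=\sum_{j\le N}\alpha_j\lvert\phi_j\rangle\langle\psi_j\rvert$. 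For each finite--rank $a_N$ the integral of its diagonal kernel equals $\mathrm{Tr}(a_N)$ (an elementary computation with the conventions recalled above), and $\mathrm{Tr}(a_N)\to\mathrm{Tr}(a)$ since $\lVert a_N-a\rVert_1\to0$; hence $\int_{\mathbb{R}^3\times\mathbb{Z}_d}\tilde a(\underline{x},\underline{x})\,d\underline{x}=\mathrm{Tr}(a)$. The one genuinely delicate point is the construction of $S$: the naive bound $\tfrac12\sum_j\alpha_j(\lvert\mathcal{A}_n\phi_j(\underline{x})\rvert^2+\lvert\mathcal{A}_n\psi_j(\underline{x})\rvert^2)$ still depends on $n$ and is not itself dominated, so one must first pass to the maximal function to remove the $n$--dependence and only then invoke the Hardy--Littlewood maximal inequality to recover integrability; the rest is routine measure theory.
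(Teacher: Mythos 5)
Your argument is correct. Note, however, that the paper itself does not prove Proposition \ref{prop:7}: it is quoted verbatim from the reference \cite{MR1123441} (Brislawn), so there is no in-text proof to compare against. What you have written is, in essence, a self-contained reconstruction of the standard argument from that reference: control the diagonal of the averaged kernel uniformly in $n$ by the Hardy--Littlewood maximal function, integrate the majorant $S(\underline{x})=\sum_j\alpha_j (M\phi_j)(\underline{x})(M\psi_j)(\underline{x})$ using the $L^2$ maximal inequality and Cauchy--Schwarz in $\underline{x}$, pass to the limit in the index $j$ by dominated convergence with respect to counting measure, and then identify the integral of the limiting diagonal with $\mathrm{Tr}(a)$ via finite-rank truncations $a_N$ and $\lVert a_N-a\rVert_1\to 0$. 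All the steps hold: the exceptional null sets coming from the Lebesgue differentiation theorem are countably many (one per $\phi_j$ and $\psi_j$), so their union is still null, and the termwise bound $\lvert\mathcal{A}_n f\rvert\le Mf$ is legitimate since $\mathcal{A}_n$ commutes with complex conjugation. Two cosmetic remarks: the displayed inequality you attribute to ``Cauchy--Schwarz in the summation index'' is in fact just the termwise domination $\lvert\mathcal{A}_n\phi_j\rvert\,\lvert\mathcal{A}_n\psi_j\rvert\le (M\phi_j)(M\psi_j)$ (Cauchy--Schwarz only enters afterwards, in the $\underline{x}$-integral of $S$); and the finite-rank identity $\int a_N(\underline{x},\underline{x})\,d\underline{x}=\mathrm{Tr}(a_N)$ should be checked against the paper's (slightly nonstandard) kernel convention $i(\lvert\phi\rangle\langle\psi\rvert)=\bar{\phi}(\underline{x})\psi(\underline{y})$, but that is a bookkeeping matter inherited from the text, not a gap in your reasoning.
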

\begin{proposition}[\cite{MR1123441}]\label{prop:8}
  Let $a=bc$ be an arbitrary factorization of
  $k\in \mathcal{J}_1\bigl(L^2 ( \mathbb{R}^3 \otimes \mathfrak{h}
  )\bigr)$
  into a product of two Hilbert-Schmidt operators
  $b,c\in \mathcal{J}_2\bigl(L^2 ( \mathbb{R}^3 \times \mathbb{Z}_d )
  \bigr)$. Then
  \begin{equation*}
    \tilde{a}( \underline{x} , \underline{x} )=(b*c)( \underline{x} , \underline{x} )\text{ a.e.},
  \end{equation*}
  where the ``convoluted'' kernel
  $(b*c)(\underline{x} , \underline{y})$ is defined as
  \begin{equation*}
    (b*c) ( \underline{x} , \underline{y} ) = \int_{ \mathbb{R}^3 \times \mathbb{Z}_d } b( \underline{x} , \underline{z} ) c( \underline{z} , \underline{y} )  d\underline{z}\; .
  \end{equation*}
\end{proposition}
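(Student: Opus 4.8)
The plan is to insert the averaging operator on both sides of the operator identity $a=bc$, to identify the resulting kernel explicitly in terms of $\mathcal{A}_n$-regularizations of the (Hilbert--Schmidt) kernels of $b$ and $c$, and then to pass to the limit $n\to\infty$ via a Hilbert-space-valued Lebesgue differentiation theorem. Write $b(\underline x,\underline y),\,c(\underline x,\underline y)\in L^2\bigl((\mathbb{R}^3\times\mathbb{Z}_d)^2\bigr)$ for the kernels of $b$ and $c$, and fix an expansion $a=\sum_j\alpha_j|\phi_j\rangle\langle\psi_j|$ with orthonormal $\{\phi_j\},\{\psi_j\}$ and $\sum_j|\alpha_j|=\|a\|_1<\infty$.

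The first step is to collect the features of the Hardy--Littlewood average $\mathcal{A}_n$ that the argument needs: its convolution kernel is real and symmetric, hence $\mathcal{A}_n$ is a self-adjoint contraction of $L^2(\mathbb{R}^3\times\mathbb{Z}_d)$; it regularizes, in that $\mathcal{A}_n f$ is continuous with $\|\mathcal{A}_n f\|_\infty\le \mathrm{c}_n\|f\|_2$ for $f\in L^2$; and $\mathcal{A}_n f\to f$ almost everywhere for $f\in L^1_{\mathrm{loc}}$, with the same statement valid for $L^2_{\underline z}$-valued functions. Since $\mathcal{A}_n$ is self-adjoint, $\mathcal{A}_n^{(2)}a(\underline x,\underline y)=\sum_j\alpha_j(\mathcal{A}_n\bar\phi_j)(\underline x)(\mathcal{A}_n\psi_j)(\underline y)$ is the integral kernel of $\sum_j\alpha_j|\mathcal{A}_n\phi_j\rangle\langle\mathcal{A}_n\psi_j|=\mathcal{A}_n a\mathcal{A}_n$; and the bound $\|\mathcal{A}_n\phi_j\|_\infty\le\mathrm{c}_n$ makes that series converge uniformly, so $\mathcal{A}_n^{(2)}a$ is in fact a \emph{continuous} representative of that kernel.

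The second step uses the factorization. Writing $\mathcal{A}_n a\mathcal{A}_n=(\mathcal{A}_n b)(c\mathcal{A}_n)$, the Hilbert--Schmidt operators $\mathcal{A}_n b$ and $c\mathcal{A}_n$ have kernels
\[
b_n(\underline x,\underline z):=\bigl(\mathcal{A}_n b(\cdot,\underline z)\bigr)(\underline x),\qquad
c_n(\underline z,\underline y):=\bigl(\mathcal{A}_n c(\underline z,\cdot)\bigr)(\underline y),
\]
that is, $b$ averaged in its first argument and $c$ in its second. Read in its $L^2_{\underline z}$-valued form, the regularization property says that $\underline x\mapsto b_n(\underline x,\cdot)$ and $\underline y\mapsto c_n(\cdot,\underline y)$ are continuous maps into $L^2_{\underline z}$; hence $(\underline x,\underline y)\mapsto\int b_n(\underline x,\underline z)c_n(\underline z,\underline y)\,d\underline z$ is continuous and is a representative of the kernel of $(\mathcal{A}_n b)(c\mathcal{A}_n)=\mathcal{A}_n a\mathcal{A}_n$. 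Two continuous representatives of one and the same kernel coincide everywhere, so for \emph{every} $\underline x$,
\[
\mathcal{A}_n^{(2)}a(\underline x,\underline x)=\int_{\mathbb{R}^3\times\mathbb{Z}_d} b_n(\underline x,\underline z)\,c_n(\underline z,\underline x)\,d\underline z .
\]
I expect this step to be the crux of the proof: it is where the restriction to the diagonal, an $\mathcal{A}_n$-invisible null set, is legitimized, the point being that both sides are genuine continuous functions rather than mere $L^2$-classes.

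The last step is to let $n\to\infty$. Regarding $b$ as an element of $L^2\bigl(\mathbb{R}^3\times\mathbb{Z}_d;L^2_{\underline z}\bigr)$ through its first variable and $c$ through its second, the vector-valued Lebesgue differentiation theorem yields, for almost every $\underline x$, $b_n(\underline x,\cdot)\to b(\underline x,\cdot)$ and $c_n(\cdot,\underline x)\to c(\cdot,\underline x)$ in $L^2_{\underline z}$; since the pairing $(f,g)\mapsto\int fg$ is continuous on $L^2_{\underline z}\times L^2_{\underline z}$, this gives
\[
\mathcal{A}_n^{(2)}a(\underline x,\underline x)\;\longrightarrow\;\int b(\underline x,\underline z)\,c(\underline z,\underline x)\,d\underline z=(b*c)(\underline x,\underline x)
\]
for almost every $\underline x$, the right-hand integral converging absolutely for a.e.\ $\underline x$ and being integrable in $\underline x$ by Cauchy--Schwarz and Fubini. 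By Proposition \ref{prop:7} the limit on the left is, by definition, $\tilde a(\underline x,\underline x)$, and the proof is complete. If the averaging process used in \cite{MR1123441} is a dyadic martingale conditional expectation rather than a ball average, one simply replaces ``continuous'' by ``constant on the dyadic cells of generation $n$'' throughout and invokes martingale convergence in $L^2(\,\cdot\,;L^2_{\underline z})$ in place of Lebesgue differentiation.
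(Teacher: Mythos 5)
The paper does not actually prove Proposition \ref{prop:8}: it is imported from the cited reference \cite{MR1123441}, so there is no in-paper argument to compare yours against. Your proof is correct, self-contained, and in substance reconstructs the standard (Brislawn-type) argument: compute the continuous kernel of $\mathcal{A}_n a\mathcal{A}_n=(\mathcal{A}_n b)(c\,\mathcal{A}_n)$ in two ways, use uniqueness of continuous representatives to equate them \emph{everywhere} (in particular on the diagonal, which is a null set for the product measure and hence invisible to $L^2$-identities alone --- you correctly identify this as the crux), and then pass to the limit with the Bochner-valued Lebesgue differentiation theorem, which applies since $L^2_{\underline z}$ is separable. The supporting details all check out: the identification of $\mathcal{A}_n^{(2)}a$ with the kernel of $\mathcal{A}_n a\mathcal{A}_n$ does rest on the averaging kernel being real and symmetric (so that $\overline{\mathcal{A}_n\phi}=\mathcal{A}_n\bar\phi$ and $\mathcal{A}_n^*=\mathcal{A}_n$), which you state; uniform convergence of $\sum_j\alpha_j(\mathcal{A}_n\bar\phi_j)(\underline x)(\mathcal{A}_n\psi_j)(\underline y)$ follows from $\sum_j\lvert\alpha_j\rvert=\lVert a\rVert_1<\infty$ together with $\lVert\mathcal{A}_n\phi_j\rVert_\infty\le\mathrm{c}_n$; and the continuity of $\underline x\mapsto b_n(\underline x,\cdot)\in L^2_{\underline z}$ is a correct reading of the ball average of an $L^1_{\mathrm{loc}}$ vector-valued map. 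Two small points worth making explicit in a write-up: first, the rule that the operator product $bc$ has kernel $\int b(\underline x,\underline z)c(\underline z,\underline y)\,d\underline z$ should be checked against the paper's convention $\lvert\phi\rangle\langle\psi\rvert\mapsto\bar\phi(\underline x)\psi(\underline y)$ (these kernels are complex conjugates of the usual Schwartz kernels; a rank-one computation confirms compatibility); second, your argument incidentally re-proves the existence statement of Proposition \ref{prop:7}, since you establish a.e.\ convergence of $\mathcal{A}_n^{(2)}a(\underline x,\underline x)$ rather than assuming it. Neither is a gap.
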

Proposition~\ref{prop:8} shows that, independently of the
factorization $a=bc$ of a trace class operator $a$, its trace is
always given by
$\int_{ \mathbb{R}^3 \times \mathbb{Z}_d } b( \underline{x} ,
\underline{z} ) c( \underline{z} , \underline{y} ) d\underline{z}$.
Since for any trace class operator there exist at least one such
decomposition, we may write the subspace of
$L^2( ( \mathbb{R}^3 \times \mathbb{Z}_d )^2 )$ corresponding to trace
class operators as
\begin{align*}
  \mathfrak{J}_1&:=\{a(\cdot ,\cdot )\in L^2 ( ( \mathbb{R}^3 \times \mathbb{Z}_d )^2 ), \exists b(\cdot ,\cdot ),c(\cdot ,\cdot ) \in L^2 ( ( \mathbb{R}^3 \times \mathbb{Z}_d )^2 ), a=b*c\} \\
  &\equiv \mathcal{J}_1\bigl(L^2 ( \mathbb{R}^3 \times \mathbb{Z}_d )\bigr)\; ;
\end{align*}
where the symbol $\equiv$ stands for an isometric isomorphism, and the
isometry is obtained defining the $\mathfrak{J}_1$ norm
\begin{equation*}
  \lVert a(\cdot ,\cdot )  \rVert_{\mathfrak{J}_1}^{}=\int_{ \mathbb{R}^3 \times \mathbb{Z}_d } ^{}\widetilde{\lvert a  \rvert_{}^{}}( \underline{x} , \underline{x} )  d \underline{x} \; .
\end{equation*}
Hence $(\mathfrak{J}_1,\lVert \cdot \rVert_{\mathfrak{J}_1})$ is a
Banach subspace of the Hilbert space
$L^2 ( ( \mathbb{R}^3 \times \mathbb{Z}_d )^2 )$.

\footnotesize

 \bibliography{FFFS_arxiv_february_12_2016}

\end{document}